\newcommand{\nop}[1]{}
\newtheorem{definition}{\bf Definition}
\newtheorem{problem statement}{\bf Problem Statement}
\newtheorem{theorem}{\bf Theorem}
\newtheorem{lemma}{Lemma}
\newcommand{\Paragraph}[1]{~\vspace*{-0.9\baselineskip}\\{\bf #1}}
\newcommand{\iRm}[1]{\expandafter\@slowromancap\romannumeral #1@}
\newcommand{\seqpar}[1]{\overset{#1}{\preccurlyeq}}
\tikzstyle{vertex}=[circle, draw, scale=0.5, transform shape] 
\tikzstyle{hvertex}=[circle, draw, scale=0.43, transform shape]
\tikzstyle{minushvertex}=[circle, draw, scale=0.40, transform shape]
\tikzstyle{keyword}=[circle, draw, scale=0.40, transform shape]
\newcommand{\cmark}{\ding{51}}%
\newcommand{\xmark}{\ding{55}}%
\begin{document}

\newcommand{\iL}{\mathbb{L}}
\newcommand{\iLa}{\mathbb{L}_{\alpha}}

\newcommand{\arxivshow}[1]{#1}
\newcommand{\submitshow}[1]{}
\newcommand{\arxivcompile}[1]{#1}

\newcommand{\picfolder}{./pics/}
\newcommand{\expfolder}{./exp/}
\newcommand{\stockfolder}{./exp/stock/}
\newcommand{\genefolder}{./exp/gene/}
\newcommand{\powerfolder}{./exp/power/}
\newcommand{\syntheticfolder}{./exp/synthetic/}
\newcommand{\stockafolder}{./exp/stock_app/}
\newcommand{\geneafolder}{./exp/gene_app/}
\newcommand{\powerafolder}{./exp/power_app/}
\newcommand{\syntheticafolder}{./exp/synthetic_app/}

\arxivcompile{
	\renewcommand{\picfolder}{./}
	\renewcommand{\expfolder}{./}
	\renewcommand{\stockfolder}{./}
	\renewcommand{\genefolder}{./}
	\renewcommand{\powerfolder}{./}
	\renewcommand{\syntheticfolder}{./}
	\renewcommand{\stockafolder}{./}
	\renewcommand{\geneafolder}{./}
	\renewcommand{\powerafolder}{./}
	\renewcommand{\syntheticafolder}{./}
}

\title{Computing Longest Increasing Subsequences over Sequential Data Streams}
%
%
%
%
%

\numberofauthors{3} 
%

\author{%
{\normalfont {Youhuan Li${^\dag}$}, Lei Zou{${^\dag}$},  Huaming Zhang{${^\ddag}$},  Dongyan Zhao{${^{\dag}}$}}%
\\
\fontsize{10}{10}\selectfont\itshape $~^{\dag}$Peking University, China;
\fontsize{10}{10}\selectfont\itshape $~^{\ddag}$University of Alabama in Huntsville,USA
\\
\fontsize{9}{9}\selectfont\ttfamily\upshape $~^{\dag}$$\{$liyouhuan,zoulei,zhaody$\}$@pku.edu.cn, $~^{\ddag}$hzhang@cs.uah.edu
\\}

\maketitle

\normalsize

\captionsetup[figure]{font=small,skip=0pt}

\newcommand{\citeAPPexp}{A }
\newcommand{\citeAPPproof}{B }
\newcommand{\citeAPPenum}{C }
\newcommand{\citeAPPdelete}{D }
\newcommand{\citeAPPgap}{E }
\newcommand{\citeAPPweight}{F }
\newcommand{\citeAPPsorted}{G }
\newcommand{\citeAPPconstraints}{H }


\begin{abstract}

In this paper, we propose a data structure, a quadruple neighbor list (QN-list, for short), to support real time queries of all \underline{l}ongest \underline{i}ncreasing \underline{s}ubsequence (LIS) and LIS with constraints over sequential data streams. The QN-List built by our algorithm requires $O(w)$ space, where $w$ is the time window size. The running time for building the initial QN-List takes $O(w \log w)$ time.
Applying the QN-List, insertion of the new item takes $O(\log w)$ time and deletion of the first item takes $O(w)$ time. To the best of our knowledge, this is the first work to support both LIS enumeration and LIS with constraints computation by using a single uniform data structure for real time sequential data streams. Our method outperforms the state-of-the-art methods in both time and space cost, not only theoretically, but also empirically.


\nop{
Using a time window $W$ of size $w$, the items of a sequential data stream within the time window induce a finite sequence. As the window slides, the finite sequence gets updated by deleting the first item and appending a new item to the end. This online feature of the forever changing sequences imposes additional challenges to design various query algorithms for the sequences. In this paper, we propose a \emph{unified index}, an orthogonal list-based index, to support real time queries of all LIS and LIS with constraints over sequential data streams. The index built by our algorithm requires $O(w)$ space, where $w$ is the time window size. The running time for building the initial index takes $O(wlogw)$ time.
Applying the index, deletion of the first item takes $O(w)$ time and insertion of the new item takes $O(logw)$ time. So a complete update to the index when time window slides only takes linear time, which guarantees that the index is scalable to high speed sequential data streams.

To the best of our knowledge, this is the first work to support both LIS enumeration and constrained LIS computation by using a single unified index for real time sequential data streams. Our method outperforms the state-of-the-art methods in both time and space cost, not only theoretically, but also empirically.
}

\end{abstract}


\vspace{-0.1in}
\section{Introduction}\label{sec:introduction}

Sequential data is a time series consisting of a sequence of data points, which are obtained by successive measurements made over a period of time. Lots of technical issues have been studied over sequential data, such as (approximate) pattern-matching query \cite{DBLP:conf/sigmod/FaloutsosRM94,DBLP:conf/icde/LianCY08}, clustering \cite{DBLP:journals/pr/Liao05}. Among these, computing the Longest Increasing Subsequence (LIS) over sequential data is a classical problem. Given a sequence $\alpha$, the LIS problem is to find a longest subsequence of a given sequence where the elements in the subsequence are in the increasing order. LIS is formally defined as follows. 

\begin{definition}\textbf{(Longest Increasing Subsequence).}\label{def:lis}  Let $\alpha=\{a_1$, $a_2$, $\cdots$, $a_n\}$ be a sequence, an increasing\footnote{Increasing subsequence in this paper is not required to be strictly monotone increasing and all items in $\alpha$ can also be arbitrary numerical value.}
subsequence $s$ of $\alpha$ is a subsequence of $\alpha$ whose elements are sorted in order from the smallest to the biggest. An increasing subsequence $s$ of $\alpha$ is called a \underline{L}ongest \underline{I}ncreasing \underline{S}ubsequence (LIS) if there is no other increasing subsequence $s^{\prime}$ with $|s|<|s^\prime|$. A sequence $\alpha$ may contain multiple LIS, all of which have the same length. We denote the set of LIS of $\alpha$ by $LIS(\alpha)$.
\end{definition}

Besides the static model (i.e., computing LIS over a given sequence $\alpha$), recently, computing LIS has been considered in the streaming model \cite{liswAlbert2004,lissetChen2007}. Formally, given an infinite time-evolving sequence $\alpha_{\infty}$ = $\{a_1,...,a_{\infty}\}$ ($a_i \in \mathbb{R}$), we continuously compute LIS over the subsequence induced by the time window $\{a_{i-(w-1)}$, $a_{i-{(w-2)}}$,..., $a_{i}\}$. The size of the time window is the number of the items it spans in the data stream. Consider the sequence $\alpha=\{a_1=3,a_2=9,a_3=6,a_4=2,a_5=8,a_6=5, a_7=7\}$ under window $W$ in Figure \ref{fig:timewindow}. There are four LIS in $\alpha$: $\{3,6,7\}, \{3,6,8\}$, \{$2$,$5$,$7$\} and \{$3$, $5$, $7$\}. Besides LIS enumeration, we introduce two important features of LIS, i.e., \emph{gap} and \emph{weight} and compute LIS with various constraints, where ``gap'' measures the value difference between the tail and the head item of LIS and ``weight'' measures the sum of all items in LIS (formally defined in Definitions \ref{def:weight}-\ref{def:height}). Figure \ref{fig:timewindow} shows LIS with various specified constraints. In the following, we demonstrate the usefulness of LIS in different applications. 

%
%
%
%
%


\vspace{-0.15in}
\begin{figure}[h!]
\centering
\includegraphics[width=0.95\textwidth]{\picfolder 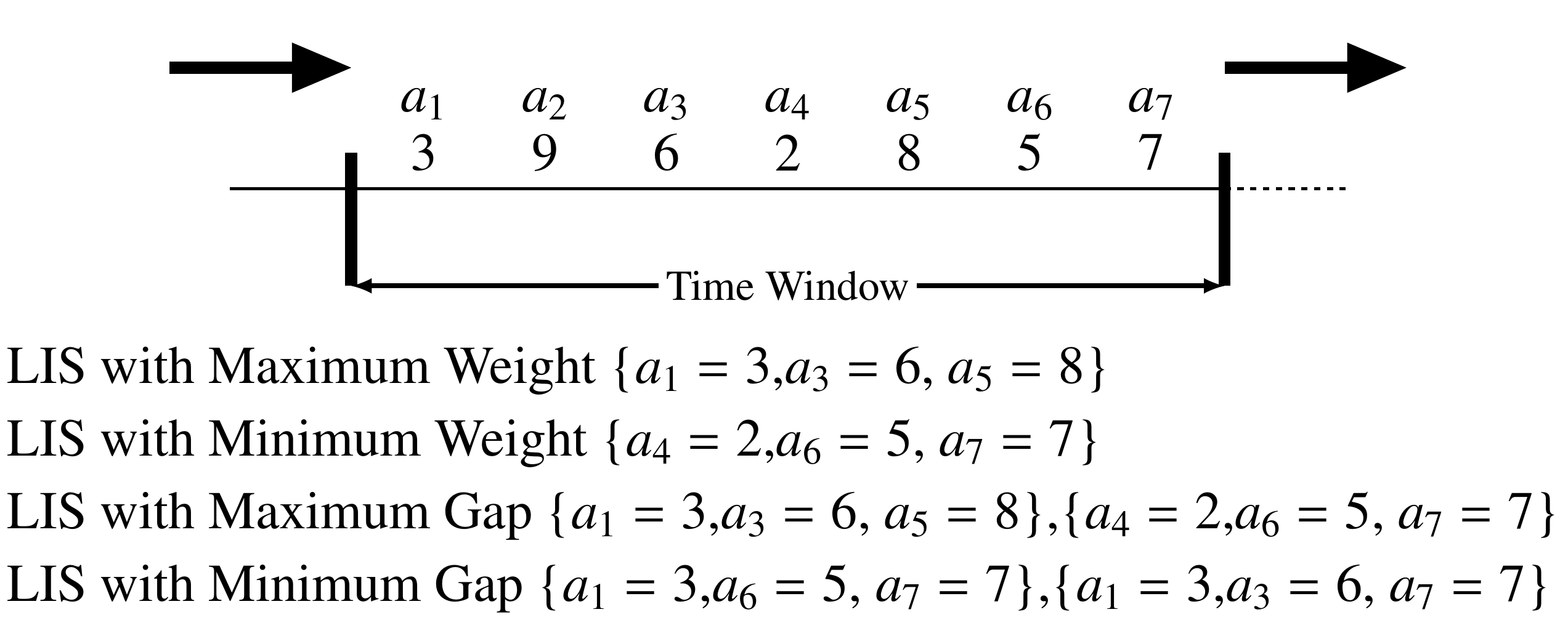}
\caption{Computing LIS with constraints in data stream model}
\vspace{-0.10in}
\label{fig:timewindow}
\end{figure}
\vspace{-0.05in}

\textbf{{Example 1: Realtime Stock Price Trend Detection.}} 
LIS is a classical measure for sequence sortedness and trend analysis \cite{SortedGopalan2007}. As we know, a company's stock price forms a time-evolving sequence and the real-time measuring the stock trend is of great significance to the stock analysis.  Given a sequence $\alpha$ of the stock prices within a period, an LIS of $\alpha$ measures an uptrend of the prices. We can see that price sequence with a long LIS always shows obvious upward tendency for the stock price even if there are some price fluctuations. Note that we do not require that the price increasing is contiguous without break, since stock price fluctuation within a couple of days does not impact the overall long term tendency within this period. 

\begin{figure}[!h]
\centering
\includegraphics[width=0.70\linewidth]{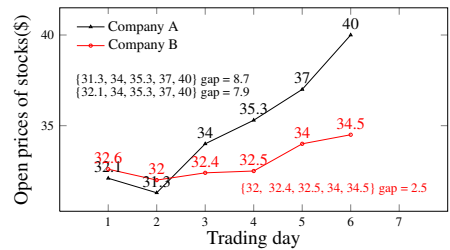}
\caption{LIS with different gaps of stock price sequence}
\vspace{-0.1in}
\label{fig:stockapp}
\end{figure}
\vspace{-0.1in}

Although the LIS length can be used to measure the uptrend stability, LIS with different gaps indicate different growth intensity. For example, Figure \ref{fig:stockapp} presents the stock prices sequences of two company: $A$ and $B$. Although both sequences of $A$ and $B$ have the same LIS length ($5$), growth intensity of $A$'s stock obvious dominates that of $B$, which is easily observed from the different gaps in LIS in $A$ and $B$. Therefore, besides LIS length, \emph{gap} is another feature of LIS that weights the growth intensity. We consider that the computation of LIS with extreme gap that is more likely chosen as measurement of growth intensity than a random LIS. Furthermore, this paper also considers other constraints for LIS, such as \emph{weight} (see Definition \ref{def:weight}) and study how to compute LIS with constraints directly rather than using  post-processing technique.

\nop{
Although VARIANT \cite{variant2009} and MHLIS \cite{minheight2009} propose LIS with constraints, they only work for the static model. Obviously, in real-time stock price analysis, we should consider the data stream scenario, such as sliding window model. Neither VARIANT nor MHLIS can be used in this context except for re-computing LIS with constraints from scratch in each window, which costs

$O(w\log w)$ time where $w$ is the time window size. However, our method only needs $O(w)$ time in each time window. Experiments also confirm that our method outperforms VARIANT and MHLIS significantly. 
} 

\nop{
\vspace{0.1in}
Second, none of existing methods except \cite{lissetChen2007} addresses LIS enumeration problem, namely, finding all LIS of an arbitrary sequence, no matter in the static  model or the stream model. Existing solutions except \cite{lissetChen2007} aim to compute the length of LIS and output a single LIS rather than enumerating all LIS. However, many applications require enumerating all LIS. The following biological sequence query example illustrates LIS enumeration. We first discuss it in the static model and then extend it to the data stream scenario.  
} 

\textbf{{Example 2:  Biological Sequence Query.}} LIS is also used in biological sequence matching \cite{liswAlbert2004,blastalignment}. For example, Zhang \cite{blastalignment} designed a two-step algorithm (BLAST+LIS) to locate a transcript or protein sequence in the human genome map. The BLAST (Basic Local Alignment Search Tool) \cite{blast:algorithm} algorithm is to identify high-scoring segment pairs (HSPs) between query transcript sequence $Q$ and a long genomic sequence $L$. Figure \ref{fig:alignment} visualizes the outputs of BLAST. The segments with the same color (number) denote the HSPs. For example, segment 2 (the red one) has two matches in the genomic sequence $L$, denoted as $2_1$ and $2_2$. To obtain a global alignment, the matches of segments 1, 2, 3  in the genomic sequence $L$ should coincide with the segment order in query sequence $Q$, which constitutes exactly the LIS (in $L$) that are listed in Figure \ref{fig:alignment}. For example, LIS $\{1,2_{1},3_{1}\}$ represents a global alignment of $Q$ over sequence $L$. Actually, there are three different LIS in $L$ as shown in Figure \ref{fig:alignment}, which correspond to three different alignments between query transcript/protein $Q$ and genomic sequence $L$. Obviously, outputting only a single LIS may miss some important findings. Therefore, we should study LIS enumeration problem. 

\nop{
local similarity

enables a scientist to identify library sequences that resemble the query sequence above a certain threshold. However, the BLAST is a local similarity search program whose output often contains many redundant high-scoring segment pairs (HSPs) that do not have global alignment information. For example, given a query transcript sequence $Q$ over a long genomic sequence $L$, Figure \ref{fig:alignment} visualizes the outputs of BLAST. The segments with the same color (number) denote the HSPs. For example, segment 2 (the red one) has two matches in the genomic sequence $L$, denoted as $2_1$ and $2_2$. To obtain a global alignment, the matches of segments 1, 2, 3  in the genomic sequence $L$ should coincide with the segment order in query sequence $Q$, which constitutes exactly the LIS (in $L$) that are listed in Figure \ref{fig:alignment}. However, given a query sequence $Q$, there are three different LIS in $L$ as shown in Figure \ref{fig:alignment}, which correspond to three different alignments between query transcript/protein $Q$ and genomic sequence $L$. Obviously, outputting only a single LIS may miss some important findings. Therefore, we should study LIS enumeration problem. }

We extend the above LIS enumeration application into the sliding window model \cite{streamwindow2008Towards}. In practice, the range of the whole alignment result of $Q$ over $L$ should not be too long. Thus, we can introduce a threshold length $|w|$ to discover all LIS that span no more than $|w|$ items, i.e, all LIS in each time window with size $|w|$. This is analogous to our problem definition in this paper. 

\begin{figure}[!h]
\centering
\includegraphics[width=0.80\linewidth]{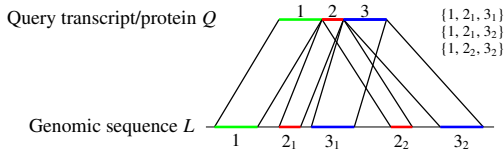}
\caption{Biological Sequence Alignment }
\vspace{-0.05in}
\label{fig:alignment}
\end{figure}

\nop{
LISSET \cite{lissetChen2007} is the only work computing LIS enumeration over sequential data streams. LISSET requires $O(w^2)$ space while our proposed solution only uses $O(w)$ space,
where $w$ is the size of the input sequence in each time window. Besides, in the context of sliding window stream model, LISSET takes $O(w)$ time for the insertion of an item,  while our method costs $O(\log w)$ time. 
} 

Although LIS has received considerable attention from the theoretical computer science community \cite{lissetChen2007, variant2009,kim1990finding, minheight2009}, none of the existing approaches support both LIS enumeration and constrained LIS enumeration simultaneously. For example, the method presented in \cite{lissetChen2007} supports LIS enumeration, but fails to compute constrained LIS. In \cite{variant2009} and \cite{minheight2009}, the method can be used to compute constrained LIS, but not to enumerate all LIS. More importantly, many works are based on \emph{static sequences} rather than data streams. Techniques developed in these works cannot handle updates which are essential in the context of data streams. To the best of our knowledge, there are only three research articles that addressed the problem of computing LIS over data stream model \cite{liswAlbert2004,lissetChen2007, deorowicz2013cover}. None of them computes constrained LIS. Literature review and the comparative studies of our method against other related work are given in Section \ref{sec:relatedwork} and Section \ref{sec:compare}, respectively. 
\nop{Furthermore, although the result in \cite{lissetChen2007} can be extended to enumerate all LIS in each time window, it requires $O(w^2)$ space to enumerate LIS. Our approach only requires $O(w)$ space, where $w$ is the time window size.}

\vspace{0.05in}
\subsection{Our Contributions}
Observed from the above examples, we propose a novel solution in this paper that studies both LIS enumeration and computing LIS with constraints with \emph{a uniform method} \emph{under the data stream model}. We propose a novel data structure to efficiently support both LIS enumeration and LIS with constraints. Furthermore, we design an efficient update algorithm for the maintenance of our data structure so that our approach can be applied to the data stream model. Theoretical analysis of our algorithm proves that our method outperforms the state-of-the-arts work (see Section \ref{sec:theory} for details). We prove that the space complexity of our data structure is $O(w)$, while the algorithm proposed in  \cite{lissetChen2007} needs a space of size $O(w^2)$. Time complexities of our data structure construction and update algorithms are also better than \cite{lissetChen2007}. For example, \cite{lissetChen2007} needs $O(w^2)$ time for the data structure construction, while our method needs $O(w \log w)$ time. Besides, we prove that both our LIS enumeration and  LIS with constraints query algorithms are \emph{optimal output-sensitive} algorithms\footnote{\small{The algorithm time complexity is linear to the corresponding output size.}}. Comprehensive comparative study of our results against previous results is given in Section \ref{sec:compare}. We use real and synthetic datasets to experimentally evaluate our approach against the state-of-the-arts work. Experimental results also confirm that our algorithms outperform existing algorithms. Experimental codes and datasets are available at Github \cite{lisgit}.


We summarize our major contributions in the following:
\vspace{-0.1in}

\begin{enumerate}
\setlength{\itemsep}{0.1em}
\setlength{\parskip}{0pt}
  
\item We are the first to consider the computation of both LIS with constraints and LIS enumeration in the data stream model.
\item We introduce a novel data structure to handle both LIS enumeration and computation of LIS with constraints uniformly.
\item Our data structure is scalable under data stream model because of the linear update algorithm and linear space cost.
\item Extensive experiments confirm the superiority of our method.  
\end{enumerate}
\vspace{-0.05in}

\nop{
Extensive experiments over both large real and synthetic datasets confirm that our method outperforms the state-of-the-arts significantly. 
}

\vspace{-0.1in}
\section{Related Work}\label{sec:relatedwork}

LIS-related problems have received considerable attention in the literature. We give a briefly review of the related work from the perspectives of the \emph{solution} and \emph{problem definition}, respectively.


\vspace{-0.05in}
\subsection{Solution Perspective} \label{sec:solutionRW}
Generally, existing LIS computation approaches can be divided into following three categories: 

\emph{1. Dynamic Programming-based}. Dynamic programming is a classical method to compute the length of LIS. Given a sequence $\alpha$, assuming that $\alpha_i$ denotes the prefix sequence consisting of the first $i$ items of $\alpha$, then the dynamic programming-based method is to compute the LIS of $\alpha_{i+1}$ after computing the LIS of $\alpha_{i}$. However, dynamic programming-based method costs $O(w^2)$ time where $n$ denotes the length of the sequence $\alpha$. Dynamic programming-based method can be easily extended to enumerate all LIS in a sequence which costs $O(w^2)$ space.

\emph{2. Young's tableau-based}. \cite{schensted1961longest} proposes a Young's tableau-based solution to compute LIS in $O(w \log w)$ time. The width of the first row of Young's tableau built over a sequence $\alpha$ is exactly the length of LIS in $\alpha$.  Albert et al.\cite{liswAlbert2004} followed the Young's tableau-based work to compute the LIS length in sliding window. They maintained the first row of Young's tableau, called principle row, when window slides. For a sequence $\alpha$ in a window, there are $n= |\alpha|$ suffix subsequences and the prime idea in \cite{liswAlbert2004} is to compress all principle rows of these suffix subsequence into an array, which can be updated in $O(w)$ time when update happens. Besides, they can output an LIS with a tree data structure which costs $O(w^2)$ space.

\emph{3. Partition-based}. There are also some work computing LIS by partitioning items in the sequence \cite{lissetChen2007, variant2009,deorowicz2013cover, minheight2009}. They classify items into $l$ partitions: $P_1$,$P_2$...,$P_l$, where $l$ is the length of LIS of the sequence. For each item $a$ in $P_k$ ($k=1,...,l$), the maximum length of the increasing subsequence ending with $a$ is exactly $k$. Thus, when partition is built, we can start from items in $P_l$ and then scan items in $P_{l-k}$ ($1\leq k < l$) to construct an LIS. The partition is called different names in different approaches, such as \emph{greedy-cover} in \cite{variant2009, deorowicz2013cover}, \emph{antichain} in \cite{lissetChen2007}. Note that \cite{variant2009} and \cite{minheight2009} conduct the partition over a static sequence to efficiently compute LIS with constraints. \cite{deorowicz2013cover} use partition-based method as subprogram to find out the largest LIS length among $n-w$ windows where $w$ is the size of the sliding window over a sequence $\alpha$ of size $n$. Their core idea is to avoid constructing partition on the windows whose LIS length is less than those previously found. In fact, they re-compute the greedy-cover in each of the windows that are not filtered from scratch. None of the partition-based solutions address the data structure maintenance issues expect for \cite{lissetChen2007}. \cite{lissetChen2007} is the only one to study the LIS enumeration in streaming model. Both of their insertion and deletion algorithms cost $O(w)$ time \cite{lissetChen2007}. Besides, to support update, they assign each item with $O(w)$ pointers and thus their method costs $O(w^2)$ space. 

Actually, our approach belongs to the partition-based solution, where each horizontal list(see Definition \ref{def:hlist}) is essentially a \emph{partition}. However, because of introducing up/down neighbors in QN-list (see Definition \ref{def:neighbors} and \ref{def:orthogonal}), our data structure costs only $O(w)$ space. Besides, the insertion and deletion time of our method is $O(\log w)$ and $O(w)$, respectively, which makes it suitable in the streaming context. 
Furthermore, our data structure supports both LIS enumeration and LIS with various constraints.



\vspace{-0.1in}
\subsection{Problem Perspective} \label{sec:problemRW}
We briefly position our problem in existing work on LIS computation in \emph{computing task} and \emph{computing model}. Note that LIS can also be used to compute LCS (longest common subsequence) between two sequences \cite{HS77}, but that is not our focus in this paper. First, there are three categories of LIS computing tasks. The first is to compute the length of LIS and output a single LIS (not enumerate all) in sequence $\alpha$ \cite{liswAlbert2004,deorowicz2013cover,lengthFredman1975,lics-Liben-Nowell2006,schensted1961longest}. The second is LIS enumeration, which finds all LIS in a sequence $\alpha$ \cite{EnumLISBespamyatnikh2000,lissetChen2007}. \cite{EnumLISBespamyatnikh2000} computes LIS enumeration only on the sequence that is required to be a permutation of \{$1$,$2$,...,$n$\} rather than a general sequence (such as \{$3$, $9$, $6$, $2$, $8$, $5$, $7$\} in the running example). The last computing task studies LIS with constraints, such as gap and weight \cite{variant2009,minheight2009}. On the other hand, there are two computing models for LIS. One is the static model assuming that the sequence $\alpha$ is given without changes. For example, \cite{variant2009,schensted1961longest,robinson1938,minheight2009} are based on the static model. These methods cannot be applied to the streaming context directly except re-computing LIS from scratch in each time window.  The other model is the data stream model, which has been considered in some recent work\cite{liswAlbert2004,lissetChen2007}.
 
Table \ref{tab:fillgaprelatedwork} illustrates the existing works from two perspectives: \emph{computing task} and \emph{computing model}. There are two observations from the table. First, there is no existing uniform solution for all LIS-related problems, such as LIS length, LIS enumeration and LIS with constraints. Note that any algorithm for computing LIS enumeration and LIS with constraints can be applied to computing LIS length directly. Thus, we only consider LIS enumeration and LIS with constraints in the later discussion. Second, no algorithm supports computing LIS with constraints in the streaming context.  Therefore, the major contribution of our work lies in that we propose a uniform solution (the same data structure and computing framework) for all LIS-related issues in the streaming context. Table \ref{tab:fillgaprelatedwork} properly positions our method with regard to existing works. 

None of the existing work can be easily extended to support all LIS-related problems in the data steam model except for LISSET \cite{lissetChen2007}, which is originally proposed to address LIS enumeration in the sliding window model. Also, LISSET can compute LIS with constraints using post-process technique (denoted as LISSET-post in Figure \ref{fig:exp:stock}). So, we compare our method with LISSET not only theoretically, but also empirically in Section \ref{sec:compare}. LISSET requires $O(w^2)$ space while our method only uses $O(w)$ space, where $w$ is the size of the input sequence. Experiments show that our method outperforms LISSET significantly, especially computing LIS with constraints (see Figures \ref{fig:maxwtime}-\ref{fig:minhtime}). 

\vspace{-0.05in}
 \begin{table}[!h]
 \centering
 \small
     \caption{Our Method VS. Existing Works on Computing LIS(s)} 
     \label{tab:fillgaprelatedwork}
 
     \begin{small}
     \resizebox{1.0\textwidth}{!}
     {
   	\begin{tabular}{|l|c|c|}
     \hline
     \bfseries{ Computing Task}  			   &
     \bfseries{ Static only}     &
     \bfseries{ Stream} 
     \\ \hline
     LIS length(outputting a single LIS)  & \cite{EnumLISBespamyatnikh2000}\cite{variant2009}\cite{schensted1961longest}\cite{robinson1938}\cite{minheight2009}
     
     & \cite{liswAlbert2004}\cite{lissetChen2007}\cite{deorowicz2013cover}, Our Method
     \\ \hline
     LIS Enumeration & \cite{EnumLISBespamyatnikh2000}\footnotemark[3] & \cite{lissetChen2007}, Our Method
     \\ \hline
     LIS with constraints & \cite{variant2009}\cite{minheight2009} & Our Method  
     \\ \hline
 
   	\end{tabular}
     }
     \end{small}
     \vspace{-0.1in}
 \end{table}
\vspace{-0.08in}
 
 \footnotetext[3]{\cite{EnumLISBespamyatnikh2000} computes LIS enumeration only on the sequence that is required to be a permutation of \{$1$,$2$,...,$n$\}.}
 
 \setcounter{footnote}{3}

\vspace{-0.1in}
\section{Problem Formulation}\label{sec:problemdef}

\nop{ 
Table \ref{tab:notations} lists some frequently used notations in this paper.

\begin{table}[!h]
\centering
\small
    \caption{Frequently-used Notations}     
    \label{tab:notations}
	\small
    \begin{small}
    \begin{tabular}{|l|l|l|l|}
    \hline
        {\bfseries Notation} & \multicolumn{1} {c|} {\bfseries Definition and Description} \\
   \hline
        $W$ / $w$       & time window / the number of intervals in a time window \\
    \hline
        $LIS(\alpha)$    & the set of all LIS of sequence $\alpha$   \\
    \hline
        $\mathbb{L}_{\alpha}$    & orthogonal list index of sequence $\alpha$    \\
    \hline
        $m$ ($|\mathbb{L}_{\alpha}|$) & the number of the horizontal lists in $\mathbb{L}_{\alpha}$   \\
    \hline
        $\mathbb{L}_{\alpha}^{t}$   & the $t$-th horizontal lists in $\mathbb{L}_{\alpha}$   \\
    \hline
        $un_{\alpha}(a_i)$   & the up neighbor of $a_i$ in $\mathbb{L}_{\alpha}$   \\
    \hline
        $un_{\alpha}^{k}(a_i)$   & the $k$-hop up neighbor of $a_i$ in $\mathbb{L}_{\alpha}$   \\
    \hline
        $lm_{\alpha}(a_i)$   & the leftmost child of $a_i$ in $\mathbb{L}_{\alpha}$   \\
    \hline
        $lm_{\alpha}^{k}(a_i)$   & the $k$-hop leftmost child of $a_i$ in $\mathbb{L}_{\alpha}$   \\
    \hline
        $dn_{\alpha}(a_i)$   & the down neighbor of $a_i$ in $\mathbb{L}_{\alpha}$   \\
    \hline
        $rn_{\alpha}(a_i)$    & the right neighbor of $a_i$ in $\mathbb{L}_{\alpha}$ \\
    \hline
        $ln_{\alpha}(a_i)$    & the left neighbor of $a_i$ in $\mathbb{L}_{\alpha}$ \\
    \hline
        $RL_{\alpha}(a_i)$    & the rising length of $a_i$ in $\alpha$ \\
    \hline
        $IS_{\alpha}(a_i)$    &all strictly increasing subsequences ending at $a_i$ in $\alpha$ \\
    \hline $LIS_\alpha(a_i)$  &all longest strictly increasing subsequence ending at $a_i$ in $\alpha$ \\    
    \hline
    \end{tabular}
    \end{small}
    \normalsize
    \vspace{-0.15in}
\end{table}
}

\nop{
\vspace{-0.05in}
\begin{definition}\textbf{(Longest Increasing Subsequence).}\label{def:lis}  Let $\alpha=\{a_1$, $a_2$, $\cdots$, $a_n\}$ be a sequence, an increasing\footnote{Increasing subsequence in this paper is not required to be strictly monotone increasing and all items in $\alpha$ can also be arbitrary numerical value.}
subsequence $s$ of $\alpha$ is a subsequence of $\alpha$ whose elements are sorted in order from the smallest to biggest. We denote the set of all increasing subsequences of $\alpha$ as $IS(\alpha)$. The head item and the tail item of $s$ are the first and the last item in $s$ respectively. They are denoted by $s^h$ and $s^t$. We use $|s|$ to denote the length of $s$.

An increasing subsequence $s$ of $\alpha$ is called a \underline{L}ongest \underline{I}ncreasing \underline{S}ubsequence (LIS) if there is no other increasing subsequence $s^{\prime}$ with $|s|<|s^\prime|$. A sequence $\alpha$ may contain multiple LIS and all of the LIS has the same length. We denote the set of LIS of $\alpha$ by $LIS(\alpha)$.
\end{definition}
\vspace{-0.08in}
} 

Given a sequence $\alpha=\{a_1$, $a_2$, $\cdots$, $a_n\}$, the set of  increasing subsequences of $\alpha$ is denoted as $IS(\alpha)$. For a sequence $s$, the head and tail item of $s$ is denoted as $s^h$ and $s^t$, respectively. We use $|s|$ to denote the length of $s$.

Consider an infinite time-evolving sequence $\alpha_{\infty}$ = $\{a_1,...,a_{\infty}\}$ ($a_i \in \mathbb{R}$). In the sequence $\alpha_{\infty}$, each $a_i$ has a unique position $i$ and $a_i$ occurs at a corresponding time point $t_i$, where $t_i < t_j$ when $0<i<j$. We exploit the \emph{tuple-basis} sliding window model \cite{streamwindow2008Towards} in this work. There is an internal \emph{position} to tuples based on their arrival order to the system, ensuring that an input tuple is processed as far as possible before another input tuple with a higher \emph{position}. A sliding window $W$ contains a consecutive block of items in $\{a_1,\cdots,a_{\infty}\}$, and $W$ slides a single unit of position per move towards $a_{\infty}$ continually. We denote the size of the window $W$ by $w$, which is the number of items within the window. During the time $[t_i, t_{i+1})$, items of $\alpha$ within the sliding time window $W$ induce the sequence $\{a_{i-(w-1)}$,$a_{i-{(w-2)}}$,...,$a_{i}\}$, which will be denoted by $\alpha(W,i)$. Note that, in the sliding window model, as the time window continually shifts towards $a_{\infty}$, at a pace of one unit per move, the sequence formed and the corresponding set of all its LIS will also change accordingly. In the remainder of the paper, all LIS-related problems considered are in the data stream model with sliding windows.

\vspace{-0.05in}
\begin{definition}\textbf{(LIS-enumeration)}. Given a time-evolving sequence $\alpha_{\infty}=\{ a_1,...,a_{\infty}\}$ and a sliding time window $W$ of size $w$, \emph{LIS-enumeration} is to report $LIS(\alpha(W,i))$ (i.e., all LIS within the sliding time $W$) continually as the window $W$ slides. All LIS in the same time window have the same length. 

\end{definition}
\vspace{-0.1in}

As mentioned in Introduction, some applications are interested in computing LIS with constraints instead of simply enumerating all of them. Hence, we study the following constraints over the LIS's \emph{weight} (Definition \ref{def:weight}) and \emph{gap} (Definition \ref{def:height}), after which we define several problems computing LIS with various constraints (Definition \ref{def:computeconstraints}) \footnote{So far, eight kinds of constraints for LIS were proposed in the literature \cite{variant2009,minheight2009,fastyang2008}. Due to the space limit, we only study four of them (i.e., max/min weight/gap) in this paper. However, our method can also easily support the other four constraints, which are provided in Appendix \citeAPPconstraints 
\submitshow{of the full version of this paper \cite{li2016lis}}
.
} 
. 

\vspace{-0.05in}
\begin{definition}\textbf{(Weight)}.\label{def:weight} Let $\alpha$ be a sequence, $s$ be an LIS in $LIS(\alpha)$. The \emph{weight} of $s$ is defined as $\sum\nolimits_{a_i  \in s} {a_i }$, i.e., the sum of all the items in $s$, we denote it by $weight(s)$.
\end{definition}

\begin{definition}\textbf{(Gap)}.\label{def:height} Let $\alpha$ be a sequence, $s$ be an LIS in $LIS(\alpha)$. The \emph{gap} of $s$ is defined as $gap(s)=s^{t}-s^{h}$, i.e., the difference between the tail $s^t$ and the head $s^h$ of $s$.
\end{definition}

\begin{figure*}[t] \RawFloats

\setlength{\belowcaptionskip}{-2pt}
\begin{minipage}[t]{0.4\linewidth}
\centering
\subcaptionbox{Horizontal lists\label{fig:horizontal}}
{
	\resizebox{0.45\linewidth}{!}
	{
		\includegraphics[width = 1in]{\picfolder 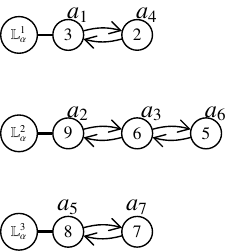}
	}
}
\subcaptionbox{QN-List $\mathbb{L}_{\alpha}$ \label{fig:orthogonal}}
{
	\resizebox{0.45\linewidth}{!}
	{
		\includegraphics{\picfolder 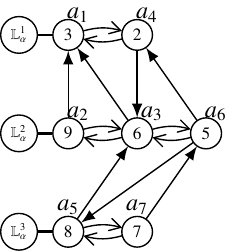}
	}
}
\caption
{
 Horizontal lists and QN-List of  running example $\alpha$
}
\end{minipage}%
\hspace{0.1in}
\begin{minipage}[t]{0.35\linewidth}
\centering
\resizebox{\linewidth}{!}
{
	\includegraphics{\picfolder 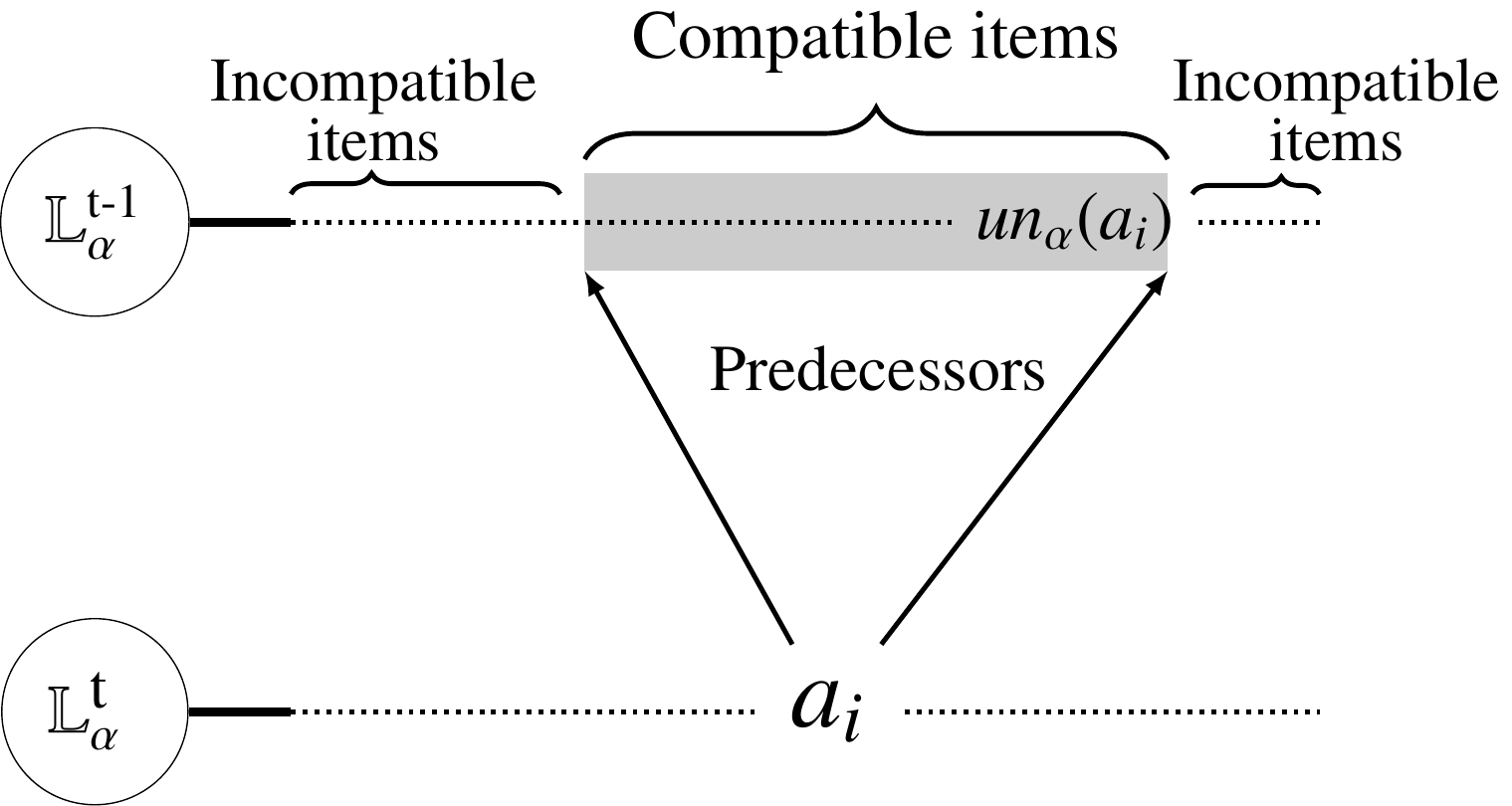}
}
\caption{Sketch of predecessors of $a_i$}
\label{fig:consecutive}
\end{minipage}%
\hspace{0.1in}
\begin{minipage}[t]{0.20\linewidth}
\centering
\resizebox{\linewidth}{!}
{
	\includegraphics{\picfolder 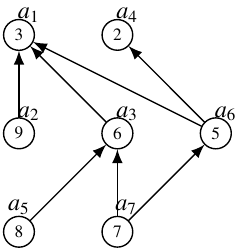}
}
\caption{DAG of running example $\alpha $}
\label{fig:lisdag}
\end{minipage}%

\end{figure*}

\begin{definition}\textbf{(Computing LIS with Constraint)}. \label{def:computeconstraints}
Given a time-evolving sequence $\alpha_{\infty}=\{ a_1,...,a_{\infty}\}$ and a sliding window $W$, each of the following problems is to report all the LIS subject to its own specified constraint within a time window continually as the window slides. For $s \in LIS(\alpha(W,t_i))$:

$s$ is an \textbf{LIS with Maximum Weight} if 
\[ \forall s^{\prime} \in LIS(\alpha(W,t_i)), weight(s) \geq weight(s^{\prime}) \]

\vspace{-0.05in}
$s$ is an \textbf{LIS with Minimum Weight} if
\[ \forall s^{\prime} \in LIS(\alpha(W,t_i)), weight(s) \leq weight(s^{\prime}) \]

\vspace{-0.05in}
$s$ is an \textbf{LIS with Maximum Gap} if
\[ \forall s^{\prime} \in LIS(\alpha(W,t_i)), gap(s) \geq gap(s^{\prime}) \]

\vspace{-0.05in}
$s$ is an \textbf{LIS with Minimum Gap} if
\[ \forall s^{\prime} \in LIS(\alpha(W,t_i)), gap(s) \leq gap(s^{\prime}) \]
\end{definition}
\vspace{-0.1in}

A running example that is used throughout the paper is given in Figure \ref{fig:timewindow}, which shows a time-evolving sequence $\alpha_{\infty}$ and its first time window $W$. 

\nop{The induced sequence within the time window is $\alpha =\{ a_1=3, a_2=9, a_3=6, a_4=2, a_5=8, a_6=5, a_7=7 \}$. There are four LIS in $\alpha$: $\{3, 6, 7\}$, $\{3, 6, 8\}$, $\{2, 5, 7\}$ and $\{3, 5, 7\}$. The LIS with various specified constraints are also presented in Figure \ref{fig:timewindow}.
}

\vspace{-0.1in}
\section{Quadruple Neighbor List \large $\mathbb{L}_{\alpha}$}\label{sec:construction}
In this section, we propose a data structure, a \emph{q}uadruple \emph{n}eighbor list (QN-list for short), denoted as $\mathbb{L}_{\alpha}$, for a sequence $\alpha=\{a_1$,$a_2$,...,$a_w\}$, which is induced from $\alpha_{\infty}$ by a time window $W$ of size $w$. Some important properties and the construction of $\mathbb{L}_{\alpha}$ are discussed in Section \ref{sec:indexprop} and Section \ref{sec:indexconstruct}, respectively. In Section \ref{sec:lisenumnew}, we present an efficient algorithm over $\mathbb{L}_{\alpha}$ to enumerate all LIS in $\alpha$. In the following two sections, we will discuss how to update the QN-List efficiently in data stream scenario (Section \ref{sec:maintenance}) and compute LIS with constraints (Section \ref{sec:computation}).

\vspace{-0.04in}
\subsection{{\large $\mathbb{L}_{\alpha}$}---Background and Definition}
\label{sec:indexdef}




For the easy of the presentation, we introduce some concepts of LIS before we formally define the quadruple neighbor list (QN-List, for short). Note that two concepts (\emph{rising length} and \emph{horizontal list}) are analogous to the counterpart in the existing work. We explicitly state the connection between them as follows. 

\vspace{-0.06in}
\begin{definition}(\textbf{Compatible pair})
\label{def:compatiblepair}
Let $\alpha=$ $\{a_1$ $,a_2,...,a_w\}$ be a sequence.
  $a_i$ is \emph{compatible} with $a_j$ if $i < j$ and $a_i \leq a_j$ in $\alpha$. We denote it by $a_i$ $\seqpar{\alpha} a_j$.
\end{definition}

\vspace{-0.1in}
\begin{definition}\textbf{(Rising Length)} \cite{lissetChen2007} \footnote{\emph{Rising length} in this paper is the same as \emph{height} defined in \cite{lissetChen2007}. We don't use \emph{height} here to avoid confusion because \emph{height} is also defined as the difference between the head item and tail item of an LIS in \cite{minheight2009}.}\label{def:risinglength}
Given a sequence $\alpha=$ $\{a_1$ $,a_2,...,a_w\}$ and $a_i$ $\in$ $\alpha$, we use $IS_{\alpha}(a_i)$ to denote the set of all \emph{increasing subsequences} of $\alpha$ that ends with $a_i$.

The \emph{rising length} $RL_{\alpha}(a_i)$ of $a_i$ is defined as the maximum length of subsequences in $IS_{\alpha}(a_i)$, namely,
\[
RL_{\alpha}(a_i) =
\max{
        \left\{\;
        \left| s \right| \mid s \in IS_{\alpha}(a_i)
        \right\}
    }
\]
\end{definition}

For example, consider the sequence $\alpha =$ $\{a_1=3,a_2=9,a_3=6,a_4=2,a_5=8,a_6=5,a_7=7\}$ in Figure \ref{fig:timewindow}. Consider $a_5=8$. There are four increasing subsequences\{$a_1=3$, $a_5=8$\},  \{$a_3=6$, $a_5=8$\}, \{$a_4=2$, $a_5=8$\},  \{$a_1=3$, $a_3=6$, $a_5=8$\} that end with $a_5$\footnote{Strictly speaking, \{$a_5$\} is also an increasing subsequence with length 1.}. The maximum length of these increasing subsequences is $3$. Hence, $RL_{\alpha}(a_5)=3$.

\begin{definition}(\textbf{Predecessor}). \label{def:pred}
Given a sequence $\alpha$ and $a_i$ $\in \alpha$, for some item $a_j$, $a_j$ is a \emph{predecessor} of $a_i$ if 
\[a_j \seqpar{\alpha} a_i \; AND \; RL_{\alpha}(a_j) = RL_{\alpha}(a_i) - 1\]
and the set of predecessors of $a_i$ is denoted as $Pred_{\alpha}(a_i)$.
\end{definition}
\vspace{-0.05in}

In the running example in Figure \ref{fig:timewindow}, $a_3$ is a predecessor of $a_5$ since $a_3 \seqpar{\alpha} a_5$ and $RL_{\alpha}(a_3)(=2)=RL_{\alpha}(a_5)(=3)-1$. Analogously, $a_1$ is also a predecessor of $a_3$.

With the above definitions, we introduce four neighbours for each item $a_i$ as follows:


\begin{definition}(\textbf{Neighbors of an item}). \label{def:neighbors} Given a sequence $\alpha$ and $a_i$ $\in \alpha$, $a_i$ has up to four neighbors.
\begin{enumerate}
\setlength\itemsep{0.0em}
\item 
	\textbf{left neighbor} $ln_{\alpha}(a_i)$: $ln_{\alpha}(a_i) = a_j$ if $a_j$ is the \emph{nearest} item \emph{before} $a_i$ such that $RL_{\alpha}(a_i)$ $=RL_{\alpha}(a_j)$.
\item 
	\textbf{right neighbor} $rn_{\alpha}(a_i)$: $rn_{\alpha}(a_i) = a_j$ if $a_j$ is the \emph{nearest} item \emph{after} $a_i$ such that $RL_{\alpha}(a_i)$ $=$ $RL_{\alpha}(a_j)$.
\item 
	\textbf{up neighbor} $un_{\alpha}(a_i)$: $un_{\alpha}(a_i)=a_j$ if $a_j$ is the \emph{nearest} item \emph{before} $a_i$ such that $RL_{\alpha}(a_j)=RL_{\alpha}(a_i)-1$.
\item 
	\textbf{down neighbor} $dn_{\alpha}(a_i)$: $dn_{\alpha}(a_i)=a_j$ if $a_j$ is the \emph{nearest} item \emph{before} $a_i$ such that $RL_{\alpha}(a_j)$ $=RL_{\alpha}(a_i)+1$. 
\end{enumerate}
\end{definition}
\vspace{-0.05in}

Apparently, if $a_i$ $= ln_{\alpha}(a_j)$ then $a_j$ $=rn_{\alpha}(a_i)$. Besides, we know that left neighbor(Also right neighbor) of item $a_i$ has the same rising length as $a_i$ and naturally, items linked according to their left and right neighbor relationship forms a \emph{horizontal list}, which is formally defined in Definition \ref{def:hlist}. The horizontal lists of $\alpha$ is presented in Figure \ref{fig:horizontal}.

\vspace{-0.1in}
\begin{definition}(\textbf{Horizontal list}). \label{def:hlist}
Given a sequence $\alpha$, consider the subsequence consisting of all items whose rising lengths are $k$: $s_k = $ \{$a_{i_1}$, $a_{i_2}$,...,$a_{i_k}$\}, $i_1$ $< i_2$,...,$< i_k$. We know that for $1 \leq k^{\prime}$ $< k$, $a_{i_{k^{\prime}}}=$  $ln_{\alpha}(a_{i_{k{^\prime}+1}})$ and $a_{i_{k^{\prime}+1}}=$ $rn_{\alpha}(a_{i_{k'}})$. We define the list formed by linking items in $s_k$ together with left and right neighbor relationships as a horizontal list, denoted as $\mathbb{L}_{\alpha}^{k}$.
\end{definition}
\vspace{-0.08in}

\nop{
Let us recall the sequence $\alpha$ in Figure \ref{fig:timewindow}. The horizontal lists of $\alpha$ is presented in Figure \ref{fig:horizontal}, where the curve arrows indicate the left and right neighbor relationship.}

Recall the \emph{partition-based solutions} mentioned in Section \ref{sec:relatedwork}. Each horizontal list is essentially a \emph{partition}, which is the same as a \emph{greedy-cover} in \cite{deorowicz2013cover} and  \emph{antichain} in \cite{lissetChen2007}. Based on the horizontal list, we define our data structure QN-list (Definition \ref{def:orthogonal}) as follows.

\nop{
Apparently, horizontal lists are essentially partitions over the sequence that group items according to the rising length of them. In fact, there are some previous work also partition the sequence in this way, such as the greedy-cover in \cite{deorowicz2013cover}.}


\input{\picfolder lis_ins}

\begin{definition}(\textbf{Quadruple Neighbor List (QN-List)}).\label{def:orthogonal}
Given a sequence $\alpha=\{a_1,...,a_w\}$, the quadruple neighbor list over $\alpha$ (denoted as $\mathbb{L}_{\alpha}$) is a data structure containing all horizontal lists (See Definition \ref{def:hlist}) of $\alpha$ and each item $a_i$ in $\mathbb{L}_{\alpha}$ is also linked directly to its up neighbor and down neighbor. In essence, $\mathbb{L}_{\alpha}$ is constructed by linking all items in $\alpha$ with their four kinds of neighbor relationship. Specifically, $|\mathbb{L}_{\alpha}|$ denotes the number of horizontal lists in $\mathbb{L}_{\alpha}$.

\end{definition}
\vspace{-0.06in}

Figure \ref{fig:orthogonal} presents the QN-List $\mathbb{L}_{\alpha}$ of running example sequence $\alpha$ (in Figure \ref{fig:timewindow}) and the horizontal curve arrows indicate the left and right neighbor relationship while the vertical straight arrows indicate the up and down neighbor relationship.

\vspace{-0.06in}
\begin{theorem} \label{theorem:indexspace}
 Given a sequence $\alpha=\{a_1,...,a_w\}$, the data structure $\mathbb{L}_{\alpha}$ defined in Definition \ref{def:orthogonal} uses $O(w)$ space \footnote{Due to space limits, all proofs for theorems and lemmas are given in Appendix \citeAPPproof
\submitshow{
of the full version of this paper \cite{li2016lis}
}.
}. 
\end{theorem}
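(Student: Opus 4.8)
The plan is to argue by a direct count of the two kinds of data that $\mathbb{L}_{\alpha}$ stores: the item records themselves, and the neighbor pointers attached to each of them. Since $\alpha$ contains exactly $w$ items, the only way the space could exceed $O(w)$ would be either to store some item more than once or to attach a non-constant number of pointers to some item; I would rule out both. By Definition \ref{def:orthogonal}, $\mathbb{L}_{\alpha}$ is precisely the collection of all horizontal lists of $\alpha$ together with, for each item, links to its up and down neighbors, so these are the only objects I need to account for.

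First I would establish that the horizontal lists partition $\alpha$. By Definition \ref{def:risinglength} every item $a_i$ has a single well-defined rising length $RL_{\alpha}(a_i)$, and by Definition \ref{def:hlist} the horizontal list $\mathbb{L}_{\alpha}^{k}$ consists of exactly those items whose rising length equals $k$. Hence each $a_i$ belongs to exactly one horizontal list, namely $\mathbb{L}_{\alpha}^{RL_{\alpha}(a_i)}$, so the lists $\mathbb{L}_{\alpha}^{1},\ldots,\mathbb{L}_{\alpha}^{|\mathbb{L}_{\alpha}|}$ are pairwise disjoint and together cover $\alpha$. Consequently the total number of item records contained in all horizontal lists of $\mathbb{L}_{\alpha}$ is exactly $w$.

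Next I would bound the per-item overhead. By Definition \ref{def:neighbors} each item $a_i$ carries at most four neighbor pointers, namely $ln_{\alpha}(a_i)$, $rn_{\alpha}(a_i)$, $un_{\alpha}(a_i)$, and $dn_{\alpha}(a_i)$, in addition to its own value and position, i.e. a constant amount of information. Summing over the $w$ records, the neighbor pointers contribute at most $4w$ cells and the item payloads contribute $w$, for a total of at most $5w = O(w)$.

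The argument carries no serious obstacle; the one point that deserves care is the disjointness of the horizontal lists, since that is exactly what prevents any item from being replicated. I would emphasize the contrast with the $O(w^{2})$ structure of LISSET \cite{lissetChen2007}: there each item must carry $O(w)$ pointers to support updates, whereas the up/down neighbor encoding of $\mathbb{L}_{\alpha}$ keeps the pointer count per item bounded by the constant $4$, which is precisely what reduces the total to linear.
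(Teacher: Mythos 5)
Your proof is correct and follows essentially the same argument as the paper's: each of the $w$ items is stored once (the horizontal lists partition $\alpha$ by rising length) and carries at most four neighbor pointers, giving $O(w)$ total space. The paper's own proof is a one-line version of exactly this counting; your added emphasis on the disjointness of the horizontal lists just makes explicit what the paper leaves implicit.
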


\subsection{{\large $\mathbb{L}_{\alpha}$}---Properties} \label{sec:indexprop}
Next, we discuss some properties of the QN-List $\mathbb{L}_{\alpha}$. These properties will be used in the maintenance algorithm in Section \ref{sec:maintenance} and various $\mathbb{L}_{\alpha}$-based algorithms in Section \ref{sec:computation}.

\vspace{-0.05in}

\begin{lemma} \label{lem:consecutive}
 Let $\alpha=$ $\{a_1$ $,a_2,...,a_w\}$ be a sequence. Consider two items $a_i$ and $a_j$ in a horizontal list $\mathbb{L}_{\alpha}^t$ (see Definition \ref{def:hlist}).  
\vspace{-0.05in}
\begin{enumerate}
\setlength\itemsep{0.0em}
\item 
	If $t = 1$, $a_i$ has no predecessor. If $t > 1$ then $a_i$ has at least one predecessor and all predecessors of $a_i$ are located in $\mathbb{L}_{\alpha}^{t-1}$.
\item \label{item:decreasing}
	If $rn_{\alpha}(a_j) = a_i$, then $i > j$ and $a_i < a_j$. If $ln_{\alpha}(a_j) = a_i$, then $ i < j$ and $a_i > a_j$. Items in a horizontal list $\mathbb{L}_{\alpha}^t$ ($t=1,\cdots,m$) are monotonically decreasing while their subscripts (i.e., their original position in $\alpha$) are monotonically increasing from the left to the right. And no item is compatible with any other item in the same list.
\item \label{item:consecutive}
	$\forall a_i \in \alpha$, all predecessors of $a_i$ form a nonempty consecutive block in $\mathbb{L}_{\alpha}^{t-1}$ ($t > 1$).
\item
	$un_\alpha(a_i)$ is the rightmost predecessor of $a_i$ in $\mathbb{L}_{\alpha}^{t-1}$ ($t > 1$).
\end{enumerate}
\end{lemma}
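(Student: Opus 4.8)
The plan is to prove the four parts of Lemma~\ref{lem:consecutive} in order, since each part builds on the previous ones. All four statements are really consequences of the single defining property of rising length, $RL_\alpha(a_i)=\max\{|s|: s\in IS_\alpha(a_i)\}$, together with the compatibility relation $\seqpar{\alpha}$.

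For part~1, I would argue as follows. If $t=1$, then $RL_\alpha(a_i)=1$, meaning the longest increasing subsequence ending at $a_i$ is $\{a_i\}$ itself; if $a_i$ had a predecessor $a_j$ with $a_j\seqpar{\alpha}a_i$ and $RL_\alpha(a_j)=0$, that would be impossible since rising lengths are at least $1$, so there is no predecessor. If $t>1$, then $RL_\alpha(a_i)=t\geq 2$, so some increasing subsequence $s$ of length $t$ ends at $a_i$; its second-to-last item $a_j$ satisfies $a_j\seqpar{\alpha}a_i$ and, by optimality of $s$, $RL_\alpha(a_j)=t-1$ (it cannot be larger, else we could extend to beat $RL_\alpha(a_i)$, and it cannot be smaller, else the prefix of $s$ would contradict maximality). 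Hence $a_j$ is a predecessor. By Definition~\ref{def:pred} every predecessor has rising length $t-1$, so it lies in $\mathbb{L}_\alpha^{t-1}$.

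For part~2, the monotonicity is the crux and I expect it to carry the most weight. By Definition~\ref{def:neighbors}, $rn_\alpha(a_j)=a_i$ forces $i>j$ and $RL_\alpha(a_i)=RL_\alpha(a_j)$. Suppose for contradiction $a_i\geq a_j$; then $a_j\seqpar{\alpha}a_i$, so appending $a_i$ to any longest increasing subsequence ending at $a_j$ yields an increasing subsequence of length $RL_\alpha(a_j)+1$ ending at $a_i$, contradicting $RL_\alpha(a_i)=RL_\alpha(a_j)$. Thus $a_i<a_j$, which also shows no two items in the same horizontal list are compatible (compatibility plus equal rising length would break this inequality for the nearer one, and the general case follows by chaining through right neighbors). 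The $ln_\alpha$ statement is the symmetric claim, and the decreasing-values/increasing-subscripts description of $\mathbb{L}_\alpha^t$ is just the accumulation of these pairwise facts along the list.

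For part~3 (consecutiveness), the plan is to combine part~2 with the definition of predecessor. Let $t=RL_\alpha(a_i)>1$ and let the predecessors of $a_i$ be those items in $\mathbb{L}_\alpha^{t-1}$ that are compatible with $a_i$. Because items in $\mathbb{L}_\alpha^{t-1}$ are strictly decreasing in value as we move right (part~2), and compatibility $a_j\seqpar{\alpha}a_i$ requires both $j<i$ (positionally before $a_i$) and $a_j\leq a_i$, I would show that the set of predecessors is exactly a contiguous right-end run of $\mathbb{L}_\alpha^{t-1}$ restricted to positions before $a_i$: once the values drop to $\leq a_i$ they stay $\leq a_i$ rightward, and the position constraint cuts off a prefix, so what survives is a single block. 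Nonemptiness is part~1. For part~4, $un_\alpha(a_i)$ is by Definition~\ref{def:neighbors} the nearest item before $a_i$ with rising length $t-1$; since the predecessors form a consecutive block ending just before position $i$ and all lie in $\mathbb{L}_\alpha^{t-1}$, the nearest such item is precisely the rightmost predecessor, giving the claim immediately. The main obstacle is getting part~3 fully rigorous: I must carefully use that horizontal-list order agrees with positional order (increasing subscripts, part~2) so that ``nearest before $a_i$'' and ``rightmost in the block'' coincide, and handle the interaction between the value threshold $a_i$ and the position threshold $i$ cleanly.
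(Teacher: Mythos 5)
Your proof is correct, and parts 1--3 follow the paper's own argument in substance: part 2 is proved exactly as the paper proves it (compatibility of two items with equal rising length would force the later one's rising length to be strictly larger), and your ``positions-before-$i$ prefix intersected with values-$\leq a_i$ suffix'' description of the predecessor set is the same monotonicity-based convexity fact that the paper establishes via a betweenness argument (any item of $\mathbb{L}_{\alpha}^{t-1}$ lying between two predecessors is itself a predecessor). Where you genuinely depart from the paper is part 4. The paper proves it by contradiction: assuming $un_{\alpha}(a_i)$ is not a predecessor, it must satisfy $un_{\alpha}(a_i) > a_i$, and then any predecessor $a_j$ (which exists by part 1) lies to its left in $\mathbb{L}_{\alpha}^{t-1}$, so $a_j \seqpar{\alpha} un_{\alpha}(a_i)$, forcing $RL_{\alpha}(un_{\alpha}(a_i)) \geq RL_{\alpha}(a_j)+1 = RL_{\alpha}(a_i)$, contradicting the definition of up neighbor. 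You instead prove a slightly stronger form of part 3 --- the predecessors are not merely a consecutive block but the \emph{right end} of the set of items of $\mathbb{L}_{\alpha}^{t-1}$ positioned before $a_i$ --- and then part 4 is immediate, since the rightmost item of that positional prefix is by Definition \ref{def:neighbors} exactly $un_{\alpha}(a_i)$. Your route buys a unified treatment in which part 4 costs nothing extra; the paper keeps part 3 weaker (mere consecutiveness) and pays for part 4 with a separate short contradiction argument. Both are elementary and of comparable length; just be sure, as you yourself flag, to state explicitly that value-monotonicity makes the predecessor set a \emph{suffix} (not just some sub-block) of the positional prefix, since that is precisely what your part-4 shortcut consumes.
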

\vspace{-0.05in}

Figure \ref{fig:consecutive} shows that all predecessors of $a_i$ $\in \mathbb{L}_{\alpha}^{t}$ form a consecutive block from $un_{\alpha}(a_i)$ to the left in $\mathbb{L}_{\alpha}^{t-1}$, i.e., Lemma \ref{lem:consecutive}(3).


\vspace{-0.08in}
\begin{lemma}\label{lem:property}
Given sequence $\alpha$ and its $\mathbb{L}_{\alpha}$, $\forall$ $a_i \in \mathbb{L}_{\alpha}^{t}$ ($1\leq t \leq m$).
\begin{enumerate}
\setlength\itemsep{0.0em}
	\item \label{item:rlen}
		$RL_{\alpha}(a_i)$ = $t$  if and only if $a_i$ $\in$ $\mathbb{L}_{\alpha}^{t}$. In addition, the length of LIS in $\alpha$ is exactly the number of horizontal lists in $\mathbb{L}_{\alpha}$.
	\item \label{item:unrightmost}
		$un_\alpha(a_i)$(if exists) is the rightmost item in $\mathbb{L}_{\alpha}^{t-1}$ which is before $a_i$ in sequence $\alpha$.
	\item \label{item:dnrightmost}
		$dn_\alpha(a_i)$(if exists) is the rightmost item in $\mathbb{L}_{\alpha}^{t+1}$ which is before $a_i$ in sequence $\alpha$. Besides, $dn_{\alpha}(a_i)$ $> a_i$.
\end{enumerate}	
\end{lemma}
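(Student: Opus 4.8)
```latex
The plan is to prove the three claims of Lemma \ref{lem:property} in order, drawing on the structural facts already established in Lemma \ref{lem:consecutive}. The three statements are tightly linked: claim (1) identifies the horizontal list index with the rising length, claims (2) and (3) pin down the positions of the up and down neighbors. I expect the main conceptual work to be in claim (1), since (2) and (3) are then essentially restatements of the neighbor definitions combined with the monotonicity facts from Lemma \ref{lem:consecutive}(2).

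For claim (1), I would argue that $a_i \in \mathbb{L}_{\alpha}^t \iff RL_{\alpha}(a_i) = t$ directly from Definition \ref{def:hlist}: the horizontal list $\mathbb{L}_{\alpha}^k$ is by construction the set of all items whose rising length equals $k$, linked by left/right neighbor relations. So the equivalence is almost definitional; the only thing to verify carefully is that these lists exhaust all of $\alpha$ and are indexed by consecutive integers $1,\dots,m$ with no gaps. I would establish this by induction: if some item has rising length $t>1$, then by Lemma \ref{lem:consecutive}(1) it has a predecessor of rising length $t-1$, so $\mathbb{L}_{\alpha}^{t-1}$ is nonempty whenever $\mathbb{L}_{\alpha}^{t}$ is nonempty, forcing the indices to form a contiguous range $1,\dots,m$. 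For the ``length of LIS'' part, I would note that the maximum rising length over all items equals the length of the longest increasing subsequence (since $RL_{\alpha}(a_i)$ is defined as the longest increasing subsequence ending at $a_i$, and every increasing subsequence ends at some item), and this maximum is exactly $m$, the number of horizontal lists.

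For claim (2), I would invoke Definition \ref{def:neighbors}: $un_\alpha(a_i)$ is by definition the nearest item \emph{before} $a_i$ (in the sequence order, i.e., with smaller subscript) having rising length $RL_{\alpha}(a_i)-1 = t-1$, hence lying in $\mathbb{L}_{\alpha}^{t-1}$ by claim (1). ``Nearest before'' means it has the largest subscript among items in $\mathbb{L}_{\alpha}^{t-1}$ occurring before $a_i$. By Lemma \ref{lem:consecutive}(2), subscripts increase from left to right within a horizontal list, so ``largest subscript occurring before $a_i$'' is precisely the rightmost item of $\mathbb{L}_{\alpha}^{t-1}$ preceding $a_i$, which is the claim. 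Claim (3) is the symmetric argument for $dn_\alpha(a_i)$ with rising length $t+1$, again rightmost-equals-largest-subscript by Lemma \ref{lem:consecutive}(2). The remaining assertion $dn_{\alpha}(a_i) > a_i$ requires a short separate argument: if we had $dn_\alpha(a_i) \leq a_i$ with $dn_\alpha(a_i)$ occurring before $a_i$, then $dn_\alpha(a_i) \seqpar{\alpha} a_i$, so appending $a_i$ to a longest increasing subsequence ending at $dn_\alpha(a_i)$ would yield an increasing subsequence ending at $a_i$ of length $RL_\alpha(dn_\alpha(a_i))+1 = (t+1)+1 = t+2$, contradicting $RL_\alpha(a_i)=t$. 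Hence $dn_\alpha(a_i) > a_i$.

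The main obstacle is making claim (1) fully rigorous rather than merely definitional: one must confirm that the horizontal lists partition $\alpha$ into nonempty lists indexed by a contiguous range, and that the top index coincides with the LIS length. Everything else reduces cleanly to the definitions of the neighbors together with the monotonicity of subscripts established in Lemma \ref{lem:consecutive}(2), so I expect those parts to be routine.
```
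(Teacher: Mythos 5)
Your proposal is correct, and it is in fact more complete than the paper's own proof: the paper dismisses claims (1), (2), and the ``rightmost'' part of (3) as immediate consequences of the definitions of horizontal list and up/down neighbor, and only writes out an argument for the single inequality $dn_{\alpha}(a_i) > a_i$. For that one substantive step your argument genuinely differs from the paper's. The paper picks a predecessor $a_j \in \mathbb{L}_{\alpha}^{t}$ of $dn_{\alpha}(a_i)$ (which exists by Lemma \ref{lem:consecutive}(1)); since $a_j$ occurs before $dn_{\alpha}(a_i)$, hence before $a_i$, it lies to the left of $a_i$ in $\mathbb{L}_{\alpha}^{t}$, so the decreasing order of Lemma \ref{lem:consecutive}(2) gives $a_j > a_i$, and compatibility of $a_j$ with $dn_{\alpha}(a_i)$ gives $dn_{\alpha}(a_i) \geq a_j > a_i$. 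You instead argue by contradiction directly from the definition of rising length: if $dn_{\alpha}(a_i) \leq a_i$, the two items would form a compatible pair, and appending $a_i$ to a longest increasing subsequence ending at $dn_{\alpha}(a_i)$ (length $t+1$) would force $RL_{\alpha}(a_i) \geq t+2$, contradicting $RL_{\alpha}(a_i)=t$. Both are sound; yours is more self-contained, needing only the definitions of rising length and compatibility rather than the list-structure machinery of Lemma \ref{lem:consecutive}, while the paper's reuses that machinery and avoids exhibiting any explicit subsequence. Your extra care in claim (1) --- checking via Lemma \ref{lem:consecutive}(1) that the horizontal lists are nonempty and contiguously indexed, so that the top index $m$ equals the LIS length --- fills in exactly what the paper leaves implicit.
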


\vspace{-0.15in}
\begin{lemma} \label{lem:tailsorted}
Given sequence $\alpha$ and its $\mathbb{L}_{\alpha}$, for $1\leq i , j \leq |\mathbb{L}_{\alpha}|$
\[Tail(\mathbb{L}^i_{\alpha}) \leq Tail(\mathbb{L}^j_{\alpha}) \leftrightarrow i \leq j\]
where $Tail(\mathbb{L}^i_{\alpha})$ denotes the last item in list $\mathbb{L}^i_{\alpha}$.
\end{lemma}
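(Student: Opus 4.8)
The plan is to reduce the biconditional to a single statement about \emph{consecutive} lists and then chain it by transitivity. The one structural fact that drives everything is Lemma~\ref{lem:consecutive}(\ref{item:decreasing}): values inside a horizontal list \emph{strictly} decrease from left to right (no two items in a list are compatible). Consequently $Tail(\mathbb{L}^t_{\alpha})$, the rightmost item of $\mathbb{L}^t_{\alpha}$, is precisely the minimum-valued item of list $t$, and that minimum is attained at a \emph{unique} item. Comparing tails therefore reduces to comparing the minima of the piles, which is a much more tractable object.

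The core step is the adjacent inequality $Tail(\mathbb{L}^t_{\alpha}) \le Tail(\mathbb{L}^{t+1}_{\alpha})$ for every $1 \le t < |\mathbb{L}_{\alpha}|$. Write $b = Tail(\mathbb{L}^{t+1}_{\alpha})$. Since $RL_{\alpha}(b) = t+1 > 1$, Lemma~\ref{lem:consecutive}(1) gives that $b$ has a predecessor, and by Lemma~\ref{lem:consecutive}(4) its up neighbor $p := un_{\alpha}(b)$ is a predecessor lying in $\mathbb{L}^t_{\alpha}$; being a predecessor means $p \seqpar{\alpha} b$, hence $p \le b$ in value. Because $Tail(\mathbb{L}^t_{\alpha})$ is the minimum of list $t$ and $p \in \mathbb{L}^t_{\alpha}$, we also have $Tail(\mathbb{L}^t_{\alpha}) \le p$. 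Chaining these gives $Tail(\mathbb{L}^t_{\alpha}) \le p \le b = Tail(\mathbb{L}^{t+1}_{\alpha})$, and transitivity immediately yields the forward implication $i \le j \Rightarrow Tail(\mathbb{L}^i_{\alpha}) \le Tail(\mathbb{L}^j_{\alpha})$.

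For the converse I would argue by contraposition: from $i > j$ one wants $Tail(\mathbb{L}^i_{\alpha}) > Tail(\mathbb{L}^j_{\alpha})$, i.e. the adjacent $\le$ upgraded to a strict $<$, and this is the step I expect to be the real obstacle. The non-strict chain above collapses to equality exactly when both inequalities are tight, which (using that the list minimum is attained at a unique item) forces $un_{\alpha}(b) = Tail(\mathbb{L}^t_{\alpha})$ and equal tail values. I would attack this by combining the positional characterization that $un_{\alpha}(b)$ is the rightmost item of $\mathbb{L}^t_{\alpha}$ occurring before $b$ (Lemma~\ref{lem:property}(\ref{item:unrightmost})) with the fact that $Tail(\mathbb{L}^t_{\alpha})$ is already the globally rightmost item of its list, and examine whether such an equal-tail configuration is realizable. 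The crux is precisely this tie bookkeeping: since the paper allows non-strictly increasing subsequences, items of equal value can be split across consecutive lists, so the strict separation of consecutive tails has to be extracted from the neighbor/position structure rather than read off the raw values. By contrast, the forward (sorted) direction is a short consequence of the predecessor/up-neighbor relations already established in Lemmas~\ref{lem:consecutive} and~\ref{lem:property}, so I would expect essentially all the effort to sit in this strictness argument.
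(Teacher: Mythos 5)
Your forward direction is correct and is exactly the paper's mechanism: for consecutive lists, take a predecessor of $b=Tail(\mathbb{L}_{\alpha}^{t+1})$ lying in $\mathbb{L}_{\alpha}^{t}$ (you take $p=un_{\alpha}(b)$ via Lemma~\ref{lem:consecutive}), observe $p\leq b$ by compatibility, and use the fact that the tail of a horizontal list is its minimum (Lemma~\ref{lem:consecutive}(\ref{item:decreasing})) to get $Tail(\mathbb{L}_{\alpha}^{t})\leq p\leq b$, then chain over $t$. The gap is that you never prove the converse: by contraposition it is precisely the strict inequality $Tail(\mathbb{L}_{\alpha}^{t})<Tail(\mathbb{L}_{\alpha}^{t+1})$, and your proposal only describes how you ``would attack'' it. As a proof of the stated biconditional the proposal is therefore incomplete --- it establishes $i\leq j\Rightarrow Tail(\mathbb{L}_{\alpha}^{i})\leq Tail(\mathbb{L}_{\alpha}^{j})$ and nothing more.

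That said, your suspicion about this step is well founded, and it is worth being explicit about why. Under the paper's definitions compatibility is non-strict (Definition~\ref{def:compatiblepair}: $a_i\leq a_j$), so a predecessor satisfies only $p\leq b$, and the equal-tail configuration you were trying to rule out is realizable: for $\alpha=\{a_1=5,a_2=5\}$ we have $a_1\seqpar{\alpha}a_2$, hence $\mathbb{L}_{\alpha}^{1}=\{a_1\}$, $\mathbb{L}_{\alpha}^{2}=\{a_2\}$ and $Tail(\mathbb{L}_{\alpha}^{1})=Tail(\mathbb{L}_{\alpha}^{2})=5$; then $Tail(\mathbb{L}_{\alpha}^{2})\leq Tail(\mathbb{L}_{\alpha}^{1})$ while $2>1$, so the biconditional as stated fails. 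The paper's own proof slides over this: it asserts that any predecessor $a_p$ of $Tail(\mathbb{L}_{\alpha}^{i+1})$ satisfies $a_p<Tail(\mathbb{L}_{\alpha}^{i+1})$ strictly, which does not follow from the non-strict definition; it is valid only when the window contains no repeated values (or when ``increasing'' is read strictly), in which case your $p\leq b$ also upgrades to $p<b$ (the predecessor is a distinct item, hence a distinct value) and your argument closes in one line, identically to the paper's. So the honest resolution is: either add the distinctness/strictness hypothesis and finish as above, or weaken the lemma to the one-directional statement you actually proved.
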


\subsection{{\large $\mathbb{L}_{\alpha}$}---Construction} \label{sec:indexconstruct}
The construction of $\mathbb{L}_{\alpha}$ over sequence $\alpha$ lies in the determination of the four neighbors of each item in $\alpha$. We discuss the construction of $\mathbb{L}_{\alpha}$ as follows. Figure \ref{fig:ris} visualizes the steps of constructing $\mathbb{L}_{\alpha}$ for a given sequence $\alpha$.

 \textbf{Building QN-List $\mathbb{L}_{\alpha}$.} 
 \vspace{-0.1in}
\begin{enumerate}
\setlength\itemsep{0.0em}
  \item Initially, four neighbours of each item $a_i$ are set NULL;

  \item At step 1, $\mathbb{L}_{\alpha}^{1}$ is created in $\mathbb{L}_{\alpha}$ and $a_1$ is added into $\mathbb{L}_{\alpha}^{1}$ \footnote{We also record the position $i$ of each item $a_i$ in $\mathbb{L}_{\alpha}$ besides the item value.};
\vspace{-0.05in}
  \item At step 2, if $a_2 < a_1$, it means $RL_{\alpha}(a_2)=RL_{\alpha}(a_1)=1$. Thus, we append $a_2$ to $\mathbb{L}_{\alpha}^{1}$. Since $a_2$ comes after $a_1$ in sequence $\alpha$,  we set $rn_{\alpha}(a_1) = a_2$ and $ln_{\alpha}(a_2)=a_1$ respectively. 
  
  If $a_2 \geq a_1$, we can find an increasing subsequence $\{a_1,a_2\}$, i.e, $RL_{\alpha}(a_2)=2$. Thus, we create the second horizontal list $\mathbb{L}_{\alpha}^{2}$ and add $a_2$ to $\mathbb{L}_{\alpha}^{2}$. Furthermore, it is straightforward to know $a_1$ is the nearest predecessor of $a_2$; So, we set $un_{\alpha}(a_2)=a_1$;

  \item (By the induction method) At step $i$, assume that the first $i-1$ items have been correctly added into the QN-List (in essence, the QN-List over the subsequence of the first $(i-1)$ items of $\alpha$ is built), let's consider how to add the $i$-th item $a_i$ into the data structure. Let $m$ denote the number of horizontal lists in the current $\mathbb{L}_{\alpha}$. Before adding $a_i$ into $\mathbb{L}_{\alpha}$, let's first figure out the rising length of $a_i$. Consider a horizontal list $\mathbb{L}_{\alpha}^{t}$, we have the following two conclusions\footnote{Readers can skip the following paragraphs (a) and (b) if they only care about the construction steps.}: 
  \vspace{-0.05in}
  \begin{enumerate}

  \setlength\itemsep{0.0em}
  \item 
	  If $Tail(\mathbb{L}_{\alpha}^{t})$ $> a_i$, then $RL_{\alpha}(a_i)$ $\leq t$. Assume that $RL_{\alpha}(a_i)$ $> t$. It means that there exits at least one item $a_j$ ($\in$ $\mathbb{L}_{\alpha}^{t}$) such that $a_j $ $\seqpar{\alpha}$ $a_i$, i.e., $a_j$ is a predecessor (or recursive predecessor) of $a_i$. As we know $Tail(\mathbb{L}_{\alpha}^{t})$ is the minimum item in $\mathbb{L}_{\alpha}^{t}$ (see Lemma \ref{lem:property}).  $Tail(\mathbb{L}_{\alpha}^{t})$ $> a_i$ means that all items in $\mathbb{L}_{\alpha}^{t}$ are larger than $a_i$. That is contradicted to $a_j $ $\seqpar{\alpha}$ $a_i \wedge a_j \in \mathbb{L}_{\alpha}^{t}$. Thus,  $RL_{\alpha}(a_i)$ $\leq t$.
	  \nop{
	  We know that $Tail(\mathbb{L}_{\alpha}^{t})$ is the minimum item in $\mathbb{L}_{\alpha}^{t}$, namely, the minimum item before $a_i$ whose rising length is $t$ (see Lemma \ref{lem:property}). Thus, if $RL_{\alpha}(a_i)$ $> t$, it means that there exits at least one item $a_j$ ($\in$ $\mathbb{L}_{\alpha}^{t}$) such that $a_j $ $\seqpar{\alpha}$ $a_i$, i.e., $a_j$ is a predecessor (or recursive predecessor) of $a_i$. However, $Tail(\mathbb{L}_{\alpha}^{t})$ $> a_i$ means that all items in $\mathbb{L}_{\alpha}^{t}$ are larger than $a_i$. That is contradicted to  $a_j $ $\seqpar{\alpha}$ $a_i \wedge a_j \in \mathbb{L}_{\alpha}^{t}$. Thus,  $RL_{\alpha}(a_i)$ $\leq t$. }
	  
  \item
	  If $Tail(\mathbb{L}_{\alpha}^{t})$ $\leq a_i$, then $RL_{\alpha}(a_i) $ $> t$. Since $Tail(\mathbb{L}_{\alpha}^t)$ is before $a_i$ in $\alpha$ and $Tail(\mathbb{L}_{\alpha}^{t})$ $\leq a_i$, $Tail(\mathbb{L}_{\alpha}^t)$ is compatible $a_i$. Let us consider an increasing subseqeunce $s$ ending with $Tail(\mathbb{L}_{\alpha}^t)$, whose length is $t$ since $Tail(\mathbb{L}_{\alpha}^t)$'s rising length is $t$. Obviously, $s^{\prime}=s \oplus a_i$ is a length-(t+1) increasing subsequence ending with $a_i$. In other words, the rising length of $a_i$ is at least $t+1$, i.e, $RL_{\alpha}(a_i) $ $> t$.  
	 \end{enumerate}

	\vspace{-0.05in}
   Besides, we know that $Tail(\mathbb{L}_{\alpha}^t)$ $\geq Tail(\mathbb{L}_{\alpha}^{t'})$ if $t \geq $ $t^{\prime}$(see Lemma \ref{lem:tailsorted}). Thus, we need to find the first list $\mathbb{L}_{\alpha}^{t}$ whose tail $Tail(\mathbb{L}_{\alpha}^{t})$ is larger than $a_i$. Then, we append $a_i$ to the list. Since all tail items are increasing, we can perform the binary search (Lines 4-14 in Algorithm \ref{alg:insertelement}) that needs $O(\log m)$ time. If there is no such list, i.e., $Tail(\mathbb{L}_{\alpha}^{m}) \leq a_i$, we create a new empty list $Tail(\mathbb{L}_{\alpha}^{m+1})$ and insert $a_i$ into $Tail(\mathbb{L}_{\alpha}^{m+1})$.
 
 According to Lemma \ref{lem:consecutive}, it is easy to know $a_i$ can only be appended to the end of $\mathbb{L}_{\alpha}^{t}$, i.e., $rn_{\alpha}(Tail(\mathbb{L}_{\alpha}^t))$ $= a_i$ and $ln_{\alpha}(a_i)$ $= Tail(\mathbb{L}_{\alpha}^t)$. Besides, according to Lemma \ref{lem:property}(\ref{item:unrightmost}), we know that $un_{\alpha}(a_i)$ is the rightmost item in $\mathbb{L}_{\alpha}^{t-1}$ which is before $a_i$ in $\alpha$, then we set $un_{\alpha}(a_i)$ $= Tail(\mathbb{L}_{\alpha}^{t-1})$ (if exists). Analogously, we set $dn_{\alpha}(a_i)$ $= Tail(\mathbb{L}_{\alpha}^{t+1})$ (if exists).

So far, we correctly determine the four neighbors of $a_i$. We can repeat the above steps until all items are inserted to $\mathbb{L}_{\alpha}$.

\end{enumerate}
\vspace{-0.05in}

We divide the above building process into two pieces of pseudo codes. Algorithm \ref{alg:insertelement} presents pseudo codes for inserting one element into the current QN-List $\mathbb{L}_{\alpha}$, while Algorithm \ref{alg:buildingindex} loops on Algorithm \ref{alg:insertelement} to insert all items in $\alpha$ one by one to build the QN-List $\mathbb{L}_{\alpha}$. Initially, $\mathbb{L}_{\alpha} = \emptyset$. The QN-List $\mathbb{L}_{\alpha}$ obtained in Algorithm \ref{alg:buildingindex} will be called the \emph{corresponding data structure} of $\alpha$.

\vspace{-0.05in}
\begin{algorithm}[!h]
\small
\caption{Insert an element into $\mathbb{L}_{\alpha}$}
 \label{alg:insertelement}
\KwIn{$a_i$, an element to be inserted}
\KwOut{the updated QN-List $\mathbb{L}_{\alpha}$}

Let $m = |\mathbb{L}_{\alpha}|$  \\
Since the sequence \{$Tail(\mathbb{L}_{\alpha}^1)$, $Tail(\mathbb{L}_{\alpha}^2)$,...,$Tail(\mathbb{L}_{\alpha}^{m})$\} is increasing (Lemma \ref{lem:tailsorted}), we can conduct a binary search to determine minimum $k$ where $Tail(\mathbb{L}_{\alpha}^{k})$ $> a_i$.  \\
\nop{
$k = 1, min = 1, max = |\mathbb{L}_{\alpha}|$  \\
/* Binary search to find the exact $\mathbb{L}_{\alpha}^{k}$ to insert $a_i$ */ \\
\While{$min < max$}{
	\If{$Tail(\mathbb{L}_{\alpha}^{min}) > a_i$}{
		$k = min$ and BREAK
	}
	\If{$Tail(\mathbb{L}_{\alpha}^{max}) \leq a_i$}{
		$k = max + 1$ and BREAK
	}
	$mid = (max+1 + min)/2$ \\
	\If{$Tail(\mathbb{L}_{\alpha}^{mid}) \leq a_i$}{
		$min = mid + 1$ and CONTINUE)
	}
	\If{$Tail(\mathbb{L}_{\alpha}^{mid-1}) > a_i$}{
		$max = mid - 1$ and CONTINUE
	}
	$k = mid$ and BREAK \label{code:binaryEnd}
}\label{code:binaryBegin}
} 
\If{($\mathbb{L}_{\alpha}^{k}$ exists)}
{
    $a^* = Tail(\mathbb{L}_{\alpha}^{k})$;\\
    /*append $a_i$ to the list $\mathbb{L}_{\alpha}^{k}$*/\\
    $rn_{\alpha}(a^*)=a_i$;   $ln_{\alpha}(a_i)=a^*$; \\
    If $k>1$, then let $un_{\alpha}(a^*)$ = $Tail(\mathbb{L}_{\alpha}^{k-1})$; \\
    If $k$<$|\mathbb{L}_{\alpha}|$, then let $dn_{\alpha}(a^*)$ = $Tail(\mathbb{L}_{\alpha}^{k+1})$; \\
}
\Else
{
Create list $\mathbb{L}_{\alpha}^{m+1}$ in $\mathbb{L}_{\alpha}$ and add $a_i$ into $\mathbb{L}_{\alpha}^{m+1}$ \\
}
RETURN $\mathbb{L}_{\alpha}$
\end{algorithm}

\vspace{-0.27in}
\begin{algorithm}[!h]
\small
\caption{Building $\mathbb{L}_{\alpha}$ for a sequence $\alpha=\{a_1,...,a_w\}$}
 \label{alg:buildingindex}
\KwIn{a sequence $\alpha=\{a_1,...,a_w\}$}
\KwOut{the corresponding data structure $\mathbb{L}_{\alpha}$ of $\alpha$}

\For {each item $a_i$ in $\alpha$}
{
Call Algorithm \ref{alg:insertelement} to insert $a_i$ into $\mathbb{L}_{\alpha}$.
}
RETURN $\mathbb{L}_{\alpha}$;
\end{algorithm}

\vspace{-0.2in}
\begin{theorem} \label{theorem:timeconstruct}
Let $\alpha = \{a_1,a_2,...,a_w\}$ be a sequence with $w$ items. Then we have the following:
\vspace{-0.06in}
\begin{enumerate}
\setlength\itemsep{0.0em}
\item   The time complexity of Algorithm \ref{alg:insertelement} is $O(\log w)$.
\item The time complexity of Algorithm \ref{alg:buildingindex}  is $O(w \log w)$.
\end{enumerate}
\end{theorem}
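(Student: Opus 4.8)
The plan is to establish part (2) as an immediate corollary of part (1), so the real work is analyzing Algorithm \ref{alg:insertelement}. I would argue that a single invocation is dominated by one binary search costing $O(\log w)$, with every other line costing $O(1)$. The line-by-line inspection is straightforward once the correct auxiliary representation is in place, which is the only point requiring care.

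First I would fix that representation. The binary search repeatedly inspects $Tail(\mathbb{L}_{\alpha}^{k})$ for various $k$, so to make each probe $O(1)$ I would maintain an auxiliary array $T$ of length $m = |\mathbb{L}_{\alpha}|$, where $T[k]$ is a pointer to the tail item of $\mathbb{L}_{\alpha}^{k}$. By Lemma \ref{lem:tailsorted} the values $Tail(\mathbb{L}_{\alpha}^{1}), \dots, Tail(\mathbb{L}_{\alpha}^{m})$ are monotonically increasing, so locating the minimum index $k$ with $Tail(\mathbb{L}_{\alpha}^{k}) > a_i$ is a textbook binary search over a sorted array of length $m$, costing $O(\log m)$. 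To convert this into $O(\log w)$ I would bound $m$: by Lemma \ref{lem:property}(\ref{item:rlen}), $m$ equals the LIS length of the prefix processed so far, which is at most the number of items inserted, hence $m \le w$ and $O(\log m) = O(\log w)$.

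Next I would verify that the post-search work is $O(1)$. Once $k$ is known, the algorithm either appends $a_i$ to $\mathbb{L}_{\alpha}^{k}$ and sets $rn_{\alpha}$, $ln_{\alpha}$, and (when applicable) $un_{\alpha}$, $dn_{\alpha}$ using $Tail(\mathbb{L}_{\alpha}^{k-1})$ and $Tail(\mathbb{L}_{\alpha}^{k+1})$, each fetched from $T$ in $O(1)$; or it creates a new list $\mathbb{L}_{\alpha}^{m+1}$ and places $a_i$ in it. In either branch the tail of the affected list changes, so I would update $T$ in lockstep, either overwriting $T[k]$ with $a_i$ or appending $T[m+1] = a_i$. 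Each branch is a constant number of pointer assignments, so the cost after the search is $O(1)$, giving an overall $O(\log w)$ bound for Algorithm \ref{alg:insertelement} and proving part (1). Part (2) then follows since Algorithm \ref{alg:buildingindex} calls Algorithm \ref{alg:insertelement} exactly once per item, so $w$ insertions at $O(\log w)$ each total $O(w \log w)$.

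The main obstacle is not the asymptotics of the binary search but justifying that the tail probes and tail updates are genuinely $O(1)$ throughout the construction; this rests on introducing the auxiliary tail array $T$ and showing it stays consistent under both the append-to-existing-list and create-new-list branches. Once that bookkeeping is pinned down, the rest of the analysis is routine.
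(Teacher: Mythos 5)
Your proof follows essentially the same route as the paper's: the binary search over the $m \leq w$ tail items costs $O(\log m) = O(\log w)$, all remaining pointer assignments take $O(1)$ time, and the bound for Algorithm \ref{alg:buildingindex} follows by summing over the $w$ calls to Algorithm \ref{alg:insertelement}. Your explicit maintenance of the tail-pointer array $T$ is a bookkeeping detail that the paper leaves implicit (it simply asserts that all operations other than the binary search cost $O(1)$), and it is a sound way to justify that each probe of $Tail(\mathbb{L}_{\alpha}^{k})$ is constant time.
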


\subsection{LIS Enumeration} \label{sec:lisenumnew}
Let's discuss how to enumerate all LIS of sequence $\alpha$ based on the QN-List $\mathbb{L}_{\alpha}$. Consider an LIS of $\alpha$ : $s = $ \{$a_{i_1}$, $a_{i_2}$,...,$a_{i_m}$\}. According to Lemma \ref{lem:property}(\ref{item:rlen}), $a_{i_m}$ $\in \mathbb{L}_{\alpha}^{m}$. In fact, the last item of each LIS must be located at the last horizontal list of $\mathbb{L}_{\alpha}$ and we can enumerate all LIS of $\alpha$ by enumerating all $|\mathbb{L}_{\alpha}|$ long increasing subsequence ending with items in $\mathbb{L}_{\alpha}^{|\mathbb{L}_{\alpha}|}$. For convenience, we use $MIS_{\alpha}(a_i)$ to denote the set of all  $RL_{\alpha}(a_i)$ long increasing subsequences ending with $a_i$. Formally, $MIS_{\alpha}(a_i)$ is defined as follows:
\vspace{-0.05in}
\[
MIS_{\alpha}(a_i) = \{s\;|\;s \in IS_{\alpha}(a_i) \land |s| = RL_{\alpha}(a_i) \}
\]
\vspace{-0.15in}

Consider each item $a_i$ in the last list $\mathbb{L}_{\alpha}^{|\mathbb{L}_{\alpha}|}$. We can compute all LIS of $\alpha$ ending with $a_i$ by iteratively searching for predecessors of $a_i$ in the above list from the bottom to up until reaching the first list $\mathbb{L}_{\alpha}^1$. This is the basic idea of our LIS enumeration algorithm.

For brevity, we virtually create a \emph{directed acyclic graph} (DAG) to more intuitively discuss the LIS enumeration on $\mathbb{L}_{\alpha}$. The DAG is defined based on the predecessor relationships between items in $\alpha$. Each vertex in the DAG corresponds to an item in $\alpha$. A directed edge is inserted from $a_i$ to $a_j$ if $a_j$ is a predecessor of $a_i$ ($a_i$ and $a_j$ is also called parent and child respectively).

\vspace{-0.05in}
\begin{definition}\label{def:daggraph}(DAG $G(\alpha)$).
Given a sequence $\alpha$, the directed graph $G$ is denoted as $G(\alpha)$ $=(V, E)$, where the vertex set $V$ and the edge set $E$ are defined as follows: \vspace{-0.05in}
\[
\begin{array}{l}
 V = \{ a_i |a_i  \in \alpha \} ; \; \;
 E = \{ (a_i ,a_j )|a_j \; is\; a\;predecessor\;of\;a_i \}  
 \end{array}
\]
\end{definition}
\vspace{-0.05in}

The $G(\alpha)$ over the sequence $\alpha=\{$3, 9, 6, 2, 8, 5, 7$\}$ is presented in Figure \ref{fig:lisdag}. We can see that each path with length $|\mathbb{L}_{\alpha}|$ in $G(\alpha)$ corresponds to an LIS. For example, we can find a path $a_5=8 \to a_3=6 \to a_1=3$, which is the reverse order of LIS \{$3$,$6$,$8$\}. Thus, we can easily design a DFS-like traverse starting from items in $\mathbb{L}_{\alpha}^{|\mathbb{L}_{\alpha}|}$ to output all path with length $|\mathbb{L}_{\alpha}|$ in $G(\alpha)$. 

Note that we do not actually need to build the DAG in our algorithm since we can equivalently  conduct the DFS-like traverse on $\mathbb{L}_{\alpha}$.  Firstly, we can easily access all items in $\mathbb{L}_{\alpha}$ which are the starting vertexes of the traverse. Secondly, the key operation in the DFS-like traverse is to get all predecessors of a vertex. In fact, according to Lemma \ref{lem:consecutive} which is demonstrated in Figure \ref{fig:consecutive}, we can find all predecessors of $a_i$ by searching $\mathbb{L}_{\alpha}^{t-1}$ from $un_{\alpha}(a_i)$ to the left until meeting an item $a^*$ that is not compatible with $a_i$. All touched items ($a^*$ excluded) during the search are predecessors of $a_i$.

\nop{
All operations over the DAG can be easily mapped to the data structure $\mathbb{L}_{\alpha}$ directly.

Recall that predecessors of an item $a_i$ $\in \mathbb{L}_{\alpha}^{t}$ form a \emph{nonempty consecutive block} ending with $un_{\alpha}(a_i)$ in $\mathbb{L}_{\alpha}^{t-1}$ (Lemma \ref{lem:consecutive}), which is demonstrated in Figure \ref{fig:consecutive}. Thus, we can find all predecessors of $a_i$ by searching $\mathbb{L}_{\alpha}^{t-1}$ from $un_{\alpha}(a_i)$ to the left until meeting an item $a^*$ that is not compatible with $a_i$. All touched items ($a^*$ excluded) during the search are predecessors of $a_i$. Algorithm \ref{alg:findpreds} presents the pseudo codes for finding all predecessors of $a_i$.

\begin{algorithm}[h!]
\small
\caption{Return Predecessors of $a_i$ based on $\mathbb{L}_{\alpha}$}
 \label{alg:findpreds}
\KwIn{$\alpha$, $\mathbb{L}_{\alpha}$ and $a_i$}
\KwOut{Set of predecessors of $a_i$: $Pred_{\alpha}(a_i)$}
Initial $Pred_{\alpha}(a_i) = \emptyset$ and $a = un_{\alpha}(a_i) $\\
	\If{$a \seqpar{\alpha} a_i$}{
		$Pred_{\alpha}(a_i).push(a)$ \\
		$a = ln_{\alpha}(a)$
	}
RETURN $Pred_{\alpha}(a_i)$
\end{algorithm}
}

We construct LIS $s$ from each item $a_{i_m}$ in $\mathbb{L}_{\alpha}^{m}$ (i.e., the last list) as follows. $a_{i_m}$ is first pushed into the bottom of an initially empty stack. At each iteration, the up neighbor of the top item is pushed into the stack. The algorithm continues until it pushes an item in $\mathbb{L}_{\alpha}^1$ into the stack and output items in the stack since this is when the stack holds an LIS. Then the algorithm starts to pop top item from the stack and push another predecessor of the current top item into stack. It is easy to see that this algorithm is very similar to depth-first search (DFS) (where the function call stack is implicitly used as the stack) and more specifically, this algorithm outputs all LIS as follows: (1) every item in $\mathbb{L}_{\alpha}^{m}$ is pushed into stack; (2) at each iteration, every predecessor (which can be scanned on a horizontal list from the up neighbor to left until discovering an incompatible item) of the current topmost item in the stack is pushed in the stack; (3) the stack content is printed when it is full (i.e., an LIS is in it).


\vspace{-0.10in}
\begin{theorem}\label{theorem:allLIS} 
The time complexity of our LIS enumeration algorithm is $O(${OUTPUT}$)$, where \emph{OUTPUT} is the total size of all LIS.
\end{theorem}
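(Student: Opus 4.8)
The plan is to set up a charging argument that ties every elementary operation of the DFS-like traversal to the final output, whose total size is $\text{OUTPUT} = \sum_{s \in LIS(\alpha)} |s| = N\cdot m$, where $N = |LIS(\alpha)|$ and $m = |\mathbb{L}_{\alpha}|$ is the common length of every LIS (Lemma \ref{lem:property}(\ref{item:rlen})). First I would make precise the exploration tree that the algorithm implicitly generates: each of its nodes is the current stack content, i.e.\ a partial LIS $\{a_{i_k},\ldots,a_{i_m}\}$ which is a tail segment of some LIS; the roots are the items of $\mathbb{L}_{\alpha}^{m}$, a branch at a node corresponds to choosing a different predecessor of its topmost item, and a leaf is reached exactly when the stack first holds an item of $\mathbb{L}_{\alpha}^{1}$, at which point a complete LIS is printed.

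The crucial structural fact I would establish next is that this exploration is free of dead ends. By Lemma \ref{lem:consecutive}(1), every item in $\mathbb{L}_{\alpha}^{t}$ with $t>1$ has at least one predecessor in $\mathbb{L}_{\alpha}^{t-1}$, so any partial LIS can always be extended downward until $\mathbb{L}_{\alpha}^{1}$ is reached. Hence every node of the exploration tree lies on at least one root-to-leaf path, and every such path contains exactly $m$ nodes (levels $m$ down to $1$). I would then bound the number of nodes by a double-counting (ancestor-leaf incidence) argument: summing the $m$ nodes along each of the $N$ root-to-leaf paths gives $Nm$, while the same sum counts, for each node $v$, its number of leaf descendants, which is at least one by dead-end-freeness. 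Therefore the exploration tree has at most $Nm = \text{OUTPUT}$ nodes (and strictly fewer edges).

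Finally I would charge the running time node by node. Let $c(v)$ be the number of children of a node $v$. At $v$ the algorithm does $O(1)$ stack work and, to produce its children, scans a horizontal list leftward from $un_{\alpha}(a_i)$, where $a_i$ is the topmost item; by Lemma \ref{lem:consecutive}(3)--(4) the predecessors of $a_i$ form a consecutive block ending at $un_{\alpha}(a_i)$, so the scan touches exactly those $c(v)$ children plus a single incompatible item, costing $O(c(v)+1)$. Summing over all nodes yields $O\!\left(\sum_v (c(v)+1)\right) = O(\#\text{edges} + \#\text{nodes}) = O(\#\text{nodes}) = O(\text{OUTPUT})$, and printing each full stack costs $O(m)$ per LIS, contributing a further $O(Nm) = O(\text{OUTPUT})$. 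The main obstacle is precisely the dead-end-freeness claim: it is what guarantees that no exploration is wasted and thus makes the node count chargeable against the output, and it rests entirely on the consecutive-predecessor structure of Lemma \ref{lem:consecutive}.
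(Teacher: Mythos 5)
Your proof is correct and takes essentially the same route as the paper's: the paper's proof is a terse charging argument resting on exactly your two facts---every item pushed onto and popped from the stack appears in some printed LIS (your dead-end-freeness, which the paper asserts as a ``simple fact''), and items scanned but not pushed (the single incompatible item per predecessor block) contribute at most once more than the output size. Your exploration-tree formalization and double-counting bound $\#\text{nodes} \leq Nm$ simply make rigorous what the paper states in one line, so the two arguments are the same in substance.
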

\vspace{-0.05in}

Pseudo code for LIS enumeration is presented in Appendix \citeAPPenum
\submitshow{
of the full version of this paper \cite{li2016lis}
}.

\vspace{-0.05in}
\section{Maintenance}\label{sec:maintenance}
\vspace{-0.05in}

When time window slides, $a_1$ is deleted and a new item $a_{w+1}$ is appended to the end of $\alpha$. It is easy to see that the quadruple neighbor list maintenance  consists of two operations: deletion of the first item $a_1$ and insertion of $a_{w+1}$ to the end. Algorithm \ref{alg:insertelement} in Section \ref{sec:indexconstruct} takes care of the insertion already. Thus we only consider ``deletion'' in this section.  The sequence $\{a_2,\cdots,a_w\}$ formed by deleting $a_1$ from $\alpha$ is denoted as $\alpha^-$. We divide the discussion of the quadruple neighbor list maintenance into two parts: the horizontal update for updating left and right neighbors and the vertical update for up and down neighbors. 


\vspace{-0.05in}
\subsection{Horizontal Update}
This section studies the horizontal update. We first introduce ``k-hop up neighbor'' that will be used in latter discussions. 

\vspace{-0.05in}
\begin{definition}\textbf{(k-Hop Up Neighbor)}.
\label{def:khopup}
Let $\alpha=$ $\{a_1$ $,a_2,...,a_w\}$ be a sequence and $\mathbb{L}_{\alpha}$ be its corresponding quadruple neighbor list. For $\forall a_i \in \alpha$, the \emph{k-hop up neighbor} $un^k_{\alpha}(a_i)$ is defined as follows:
\vspace{-0.05in}
  $$
  un^k_{\alpha}(a_i)\ =\
  \begin{cases}
  a_i & k = 0 \\
  un_{\alpha}(un^{k-1}_{\alpha}(a_i)) & k \ge 1
  \end{cases}
  $$
\end{definition}

To better understand our method, we first illustrate the main idea and the algorithm's sketch using a running example. More analysis and algorithm details are given afterward. 

\Paragraph{Running example and intuition.} Figure \ref{fig:maintenance}(a) shows the corresponding QN-list $\mathbb{L}_{\alpha}$ for the sequence $\alpha$ in the running example. After deleting $a_1$, some items in $\mathbb{L}_{\alpha}^t$ ($1\leq t \leq m$) should be promoted to the above list $\mathbb{L}_{\alpha}^{t-1}$ and the others are still in $\mathbb{L}_{\alpha}^{t}$. The following Theorem \ref{lem:deletion} tells us how to distinguish them. In a nutshell, given an item $a \in \mathbb{L}_{\alpha}^t$ ($1<t \leq m$), if its $(t-1)$-hop up neighbor is $a_1$ (the item to be deleted), $a$ should be promoted to the above list; otherwise, $a$ is still in the same list. 

For example, Figure \ref{fig:maintenance}(a) and \ref{fig:maintenance}(b) show the QN-lists before and after deleting $a_1$. $\{a_2,a_3\}$ are in $\mathbb{L}_{\alpha}^{2}$ and their 1-hop up neighbors are $a_1$ (the item to be deleted), thus, they are promoted to the first list of $\mathbb{L}_{\alpha^-}$. Also, $\{a_4\}$ is in $\mathbb{L}_{\alpha}^{3}$, whose 2-hop up neighbor is also $a_1$. It is also promoted to $\mathbb{L}_{\alpha^-}^2$. More interesting, for each horizontal list $\mathbb{L}_{\alpha}^t$ ($1\leq t \leq m$), the items that need to be promoted are on the left part of  $\mathbb{L}_{\alpha}^t$, denoted as $Left(\mathbb{L}_{\alpha}^{t})$, which are the shaded ones in Figure \ref{fig:maintenance}(a). Note that $Left(\mathbb{L}_{\alpha}^{1})=\{a_1\}$. The right(remaining) part of $\mathbb{L}_{\alpha}^t$ is denoted as $Right(\mathbb{L}_{\alpha}^{t})$. The horizontal update is to couple $Left(\mathbb{L}_{\alpha}^{t+1})$ with $Right(\mathbb{L}_{\alpha}^{t})$ into a new horizontal list $\mathbb{L}_{\alpha^{-}}^t$. For example, $Left(\mathbb{L}_{\alpha}^{2})= \{a_2,a_3\}$ plus  $Right(\mathbb{L}_{\alpha}^{1})=\{a_4\}$ to form $\mathbb{L}_{\alpha^{-}}^1=\{a_2,a_3,a_4\}$, as shown in Figure \ref{fig:maintenance}(b). Furthermore, the red bold line in Figure \ref{fig:maintenance}(a) denotes the \emph{separatrix} between the left and the right part, which starts from $a_1$. Algorithm \ref{alg:division} studies how to find the separatrix to divide each horizontal list $\mathbb{L}_{\alpha^{-}}^t$ into two parts efficiently.

\Paragraph{Analysis and Algorithm.}
Lemma \ref{theorem:nocross} tells us that the up neighbour relations of the two items in the same list do not cross, which is used in the proof of Theorem \ref{lem:deletion}. 

\vspace{-0.1in}
\begin{lemma}
\label{theorem:nocross}
Let $\alpha=\{a_1,...,a_w\}$ be a sequence and $\mathbb{L}_{\alpha}$ be its corresponding quadruple neighbor list. Let $m$ be the number of horizontal lists in $\mathbb{L}_{\alpha}$. Let $a_i$ and $a_j$ be two items in $\mathbb{L}_{\alpha}^t, t \geq 1$. If $a_i$ is on the left of $a_j$, $un^{k}_{\alpha}(a_i)=un^{k}_{\alpha}(a_j)$ or $un^{k}_{\alpha}(a_i)$ is on the left of $un^{k}_{\alpha}(a_j)$, for every $0\leq k< t$.
\nop{
then $lm^{k}_{\alpha}(a_i)$, $lm^{k}_{\alpha}(a_j)$, $un^{k}_{\alpha}(a_i)$ and $un^{k}_{\alpha}(a_j)$ are all in $\mathbb{L}_{\alpha}^{t-k}$, for every $1\leq k< t$. Furthermore, if $a_i < a_j$, then $lm^{k}_{\alpha}(a_i) \leq lm^{k}_{\alpha}(a_j)$ and $un^{k}_{\alpha}(a_i) \leq un^{k}_{\alpha}(a_j)$.
}
\end{lemma}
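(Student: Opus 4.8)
The plan is to prove the statement by induction on $k$, using the characterization of the up neighbor from Lemma~\ref{lem:property}(\ref{item:unrightmost}): for any item $a$ in $\mathbb{L}_{\alpha}^{s}$ with $s>1$, the up neighbor $un_{\alpha}(a)$ is the rightmost item of $\mathbb{L}_{\alpha}^{s-1}$ that precedes $a$ in $\alpha$. The whole argument is really just the monotonicity of a single $un$-step with respect to position, so before starting I would fix the dictionary between the two orderings in play: by Lemma~\ref{lem:consecutive}(\ref{item:decreasing}), inside any horizontal list the positions (subscripts) increase strictly from left to right, so ``$x$ is on the left of $y$'' within one list is equivalent to ``$x$ occurs before $y$ in $\alpha$,'' and ``on the left of \emph{or equal}'' becomes ``position of $x$ $\le$ position of $y$.''

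For the base case $k=0$, by definition $un^0_{\alpha}(a_i)=a_i$ and $un^0_{\alpha}(a_j)=a_j$, and the hypothesis that $a_i$ is on the left of $a_j$ gives the claim immediately. For the inductive step, I would assume the claim for some $k$ with $0\le k<t-1$ and prove it for $k+1$. Write $b=un^k_{\alpha}(a_i)$ and $c=un^k_{\alpha}(a_j)$; both lie in $\mathbb{L}_{\alpha}^{t-k}$, and by the induction hypothesis $b=c$ or $b$ is on the left of $c$. Since $k+1\le t-1$ forces $t-k\ge 2$, Lemma~\ref{lem:consecutive}(1) guarantees that $b$ and $c$ each have at least one predecessor in $\mathbb{L}_{\alpha}^{t-k-1}$, so (by Lemma~\ref{lem:consecutive}(4)) $un_{\alpha}(b)$ and $un_{\alpha}(c)$ both exist. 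If $b=c$ then trivially $un_{\alpha}(b)=un_{\alpha}(c)$, i.e. $un^{k+1}_{\alpha}(a_i)=un^{k+1}_{\alpha}(a_j)$. Otherwise $b$ is strictly on the left of $c$, so the position of $b$ in $\alpha$ is strictly smaller than that of $c$; then every item of $\mathbb{L}_{\alpha}^{t-k-1}$ preceding $b$ also precedes $c$, whence the rightmost such item for $b$ has position no larger than that for $c$. By Lemma~\ref{lem:property}(\ref{item:unrightmost}) these rightmost items are exactly $un_{\alpha}(b)=un^{k+1}_{\alpha}(a_i)$ and $un_{\alpha}(c)=un^{k+1}_{\alpha}(a_j)$, and since positions increase left-to-right within $\mathbb{L}_{\alpha}^{t-k-1}$, the position inequality translates back into ``$un^{k+1}_{\alpha}(a_i)$ equals or lies to the left of $un^{k+1}_{\alpha}(a_j)$,'' completing the step. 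The induction runs for $0\le k\le t-2$, yielding the claim for all $0\le k<t$.

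I do not expect a serious obstacle here; the content is a one-step monotonicity argument. The only points requiring genuine care are bookkeeping ones: first, ensuring that the up neighbors exist throughout the iteration, which is precisely why the range is restricted to $k<t$ (when $t-k$ drops to $1$ there is no list above, and correspondingly no further $un$-step is claimed); and second, propagating the \emph{non-strict} ``left-or-equal'' relation correctly, i.e. treating the case $b=c$ separately so that equality is allowed to persist rather than accidentally asserting strict order. Phrasing everything in terms of positions in $\alpha$ via Lemma~\ref{lem:consecutive}(\ref{item:decreasing}) keeps both of these issues transparent and avoids any case analysis on values.
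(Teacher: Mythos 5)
Your proof is correct and follows essentially the same route as the paper's: both establish the single-step monotonicity of $un_{\alpha}$ via the characterization of the up neighbor as the rightmost item of the list above that precedes the given item in $\alpha$ (Lemma~\ref{lem:property}(\ref{item:unrightmost})), combined with the left-to-right position ordering within a horizontal list (Lemma~\ref{lem:consecutive}(\ref{item:decreasing})), and then iterate; the paper phrases the iteration as ``recursively'' while you spell it out as an induction on $k$ with explicit existence and equality bookkeeping, which is just a more careful write-up of the identical argument.
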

\nop{
\begin{proof}
If $t = 1$, $un_{\alpha}^{0}(a_i) = a_i$ is certainly on the left of $un_{\alpha}^{0}(a_j)$. If $t > 1$, $un_{\alpha}(a_i)$ is before $a_i$ in $\alpha$ (Lemma \ref{lem:def:undn}(1)). Since $a_i$ is on the left of $a_j$, $a_i$ is certainly before $a_j$ in $\alpha$ (Lemma \ref{lem:order}(\ref{item:decreasing})), hence, $un_{\alpha}^{a_i}$ is also before $a_j$ in $\alpha$. While, $un_{\alpha}(a_j)$ is the rightmost item in $\mathbb{L}_{\alpha}^{t-1}$ who is before $a_j$ (Lemma \ref{lem:def:undn}(\ref{item:un})). Thus, $un_{\alpha}(a_i)$ is either $un_{\alpha}(a_j)$ or an item on the left of $un_{\alpha}(a_j)$. Recursively,  for every $0\leq k < t$, $un_{\alpha}^{k}$ is either $un_{\alpha}^{k}(a_j)$ or an item on the left of $un_{\alpha}^{k}(a_j)$
\end{proof}
} 

\vspace{-0.15in}
\begin{theorem}
\label{lem:deletion}
Given a sequence $\alpha=$ \{$a_1,a_2,\cdots,a_w$\} and $\mathbb{L}_{\alpha}$. Let $m = $ $|\mathbb{L}_{\alpha}|$. Let $\alpha^- = \{a_2,\cdots, a_w\}$ be obtained from $\alpha$ by deleting $a_1$. Then for any $a_i, 2\leq i\leq m \in \mathbb{L}_{\alpha}^ t, 1\leq t \leq m$, we have the following:

\vspace{-0.05in}
\begin{enumerate}
\setlength\itemsep{0.0em}

\item If $un^{t-1}_{\alpha}(a_i)$ is $a_1$, then $RL_{\alpha^-}(a_i) = RL_{\alpha}(a_i)-1$.
\item If $un^{t-1}_{\alpha}(a_i)$ is not $a_1$, then $RL_{\alpha^-}(a_i) = RL_{\alpha}(a_i)$.
\end{enumerate}
\end{theorem}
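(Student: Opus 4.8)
The plan is to pin down $RL_{\alpha^-}(a_i)$ by first trapping it in $\{t-1,t\}$ and then deciding between the two values from the position of $un^{t-1}_{\alpha}(a_i)$ inside the first horizontal list. I would first record two easy inequalities. Deleting $a_1$ cannot create new increasing subsequences, so every increasing subsequence ending at $a_i$ in $\alpha^-$ already occurs in $\alpha$, giving $RL_{\alpha^-}(a_i) \le RL_{\alpha}(a_i) = t$. Conversely, taking any longest increasing subsequence ending at $a_i$ in $\alpha$ and discarding $a_1$ (which, being the first item, can occur only as the head) leaves an increasing subsequence of length at least $t-1$ in $\alpha^-$, so $RL_{\alpha^-}(a_i) \ge t-1$. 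Hence $RL_{\alpha^-}(a_i) \in \{t-1,t\}$, and the whole statement reduces to deciding whether some length-$t$ increasing subsequence ending at $a_i$ survives the deletion of $a_1$.

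Next I would set up the \emph{canonical chain}. By Lemma \ref{lem:consecutive}(4), $un_{\alpha}(a_i)$ is the rightmost predecessor of $a_i$, so $RL_{\alpha}(un_{\alpha}(a_i)) = t-1$; iterating gives $un^{k}_{\alpha}(a_i) \in \mathbb{L}_{\alpha}^{t-k}$ for $0 \le k \le t-1$, and the reversed chain $\{un^{t-1}_{\alpha}(a_i),\dots,un_{\alpha}(a_i),a_i\}$ is an increasing subsequence of length $t$ ending at $a_i$, i.e.\ a member of $MIS_{\alpha}(a_i)$. Its head $un^{t-1}_{\alpha}(a_i)$ lies in $\mathbb{L}_{\alpha}^{1}$. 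I would then note that $a_1$, having position $1$, is the \emph{leftmost} item of $\mathbb{L}_{\alpha}^{1}$, since positions increase from left to right within a horizontal list (Lemma \ref{lem:consecutive}(2)).

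The crux is the claim that $un^{t-1}_{\alpha}(a_i)$ is the \emph{rightmost} possible head among all longest increasing subsequences ending at $a_i$. I would prove, by induction on $k$ using the non-crossing property (Lemma \ref{theorem:nocross}), that for any such subsequence its level-$(t-k)$ item lies at or to the left of $un^{k}_{\alpha}(a_i)$: the level-$(t-k-1)$ item is a predecessor of the level-$(t-k)$ item, hence at or left of that item's up-neighbor, which by non-crossing is at or left of $un^{k+1}_{\alpha}(a_i)$. At $k=t-1$ this yields that every head is at or to the left of $un^{t-1}_{\alpha}(a_i)$. I expect this induction to be the main obstacle, since it is where I must combine ``rightmost predecessor'' with ``chains do not cross''; it also quietly relies on the standard fact that consecutive items of a longest increasing subsequence ending at $a_i$ are predecessors of one another (the length-$j$ prefix forces rising length exactly $j$, otherwise $a_i$ would admit an increasing subsequence longer than $t$).

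Finally I would conclude by cases. If $un^{t-1}_{\alpha}(a_i) = a_1$, then $a_1$ is simultaneously the leftmost item of $\mathbb{L}_{\alpha}^{1}$ and the rightmost head, so every head must equal $a_1$; thus every length-$t$ increasing subsequence ending at $a_i$ begins with $a_1$, none survives in $\alpha^-$, and the bounds force $RL_{\alpha^-}(a_i) = t-1$. If $un^{t-1}_{\alpha}(a_i) \neq a_1$, then the canonical chain is a length-$t$ increasing subsequence ending at $a_i$ whose unique level-$1$ item (its head) differs from $a_1$; since $a_1 \in \mathbb{L}_{\alpha}^{1}$ could appear in the chain only as that head, $a_1$ is absent, so the chain persists in $\alpha^-$ and $RL_{\alpha^-}(a_i) = t$.
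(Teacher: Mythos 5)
Your proposal is correct and takes essentially the same route as the paper's proof: both establish the two-sided bound $RL_{\alpha}(a_i)-1 \le RL_{\alpha^-}(a_i) \le RL_{\alpha}(a_i)$, then run the same induction combining the rightmost-up-neighbor property with the non-crossing lemma (Lemma \ref{theorem:nocross}) to show every member of $MIS_{\alpha}(a_i)$ has its head at or to the left of $un^{t-1}_{\alpha}(a_i)$, and conclude by the same two cases (all heads forced to be $a_1$ versus the canonical chain $\{un^{t-1}_{\alpha}(a_i),\dots,un^{0}_{\alpha}(a_i)\}$ surviving the deletion). Your version merely makes explicit two points the paper leaves implicit — that $a_1$ is the leftmost item of $\mathbb{L}_{\alpha}^{1}$, and that $a_1$ can occur in the canonical chain only as its head — which strengthens rather than changes the argument.
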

\nop{
\begin{proof}
First note that, any $LIS$ of $\alpha^-$ that ends at $a_i$ is also an increasing subsequence of $\alpha$ that ends at $a_i$. Therefore, $RL_{\alpha^-}(a_i) \leq RL_{\alpha}(a_i)$. On the other hand, any  $LIS$ of $\alpha$ that ends at $a_i$ is also an increasing subsequence of $\alpha^-$ once $a_1$ is removed, even if $a_1$ is in the LIS in $LIS_{\alpha}(a_i)$. Therefore, $RL_{\alpha^-}(a_i) \geq RL_{\alpha}(a_i)-1$.

\begin{enumerate}
\item
Consider the case $un^{t-1}_{\alpha}(a_i)$ is $a_1$. $a_i$ is in $\mathbb{L}_{\alpha}^t$.
So according to Theorem \ref{theorem:rlen} (1), for any $LIS_{\alpha}(a_i) =$ \{$a_{i_{t-1}}$, $\cdots$,$a_{i_{1}}$, $a_{i_{0}}\}$ that ends at $a_i$, where $a_{i_{0}}$ also denotes the same item $a_i$ for the presentation simplicity, the item $a_{i_{1}}$ is in $\mathbb{L}_{\alpha}^{t-1}$ and it is an extended up neighbor of $a_i$.
Consider the longest LIS $(un^{t-1}_{\alpha}(a_i)$, $\cdots$,$un^{1}_{\alpha}(a_i), un^{0}_{\alpha}(a_i))$ that ends at $a_i$. The item $un^{1}_{\alpha}(a_i)$ is also in $\mathbb{L}_{\alpha}^ {t-1}$.
According to Lemma \ref{lem:extendedup}, $a_{i_{1}}$ is on the left of $un^{1}_{\alpha}(a_i)$ (could be $un^{1}_{\alpha}(a_i)$ itself). Therefore, according to Lemma \ref{theorem:nocross},  $un_{\alpha}(a_{i_1})$ is on the left of $un_\alpha( un^{1}_{\alpha}(a_i))$, which is $un^{2}_{\alpha}(a_i)$. Note that, $a_{i_{2}}$ is an extended up neighbor of $a_{i_{1}}$.
Hence, according to Lemma \ref{lem:extendedup}, $a_{i_{2}}$ is on the left of $un_{\alpha}(a_{i_1})$. So $a_{i_{2}}$ is on the left of $un^{2}_{\alpha}(a_i)$.
This argument continues and we have every $a_{i_{t-j}}$ is on the left of $un^{t-1}_{\alpha}(a_i)$ (could be the same item) for every $1\leq j<t$.
Thus, if $un^{t-1}_{\alpha}(a_i)$ is $a_1$, then every LIS that ends at $a_i$ begins with $a_1$ and the rising length of $a_i$ must decrease by 1 after the deleting $a_1$. Therefore $RL_{\alpha^-}(a_i) = RL_{\alpha}(a_i)-1$.

\item
Consider the case $un^{t-1}_{\alpha}(a_i)$ is not $a_1$. $\beta= \{un^{t-1}_{\alpha}(a_i),\cdots,un^0_{\alpha}(a_i)\}$ is a LIS of $\alpha$ that ends at $a_i$. Since $un^{t-1}_{\alpha}(a_i) \neq a_1$, so $\beta$ is also an increasing subsequence of $\alpha^-$. Therefore, we have $RL_{\alpha^-}(a_i) = RL_{\alpha}(a_i)$.
\end{enumerate}
\end{proof}
} 
\vspace{-0,1in}

\underline{Naive method.}
With Theorem \ref{lem:deletion}, the straightforward method to update horizontal lists is to compute  $un_{\alpha}^{t-1}(a_i)$ for each $a_i$ in $\mathbb{L}_{\alpha}^{t}$. If $un_{\alpha}^{t-1}(a_i)$ is $a_1$, promote $a_i$ into $\mathbb{L}_{\alpha}^{t-1}$. After grouping items into the correct horizontal lists, we sort the items of each horizontal list in the decreasing order of their values. According to Theorem \ref{lem:deletion} and Lemma \ref{lem:consecutive}(\ref{item:decreasing}) (which states that the horizontal list is in decreasing order), we can easily know that the horizontal lists obtained by the above process is the same as re-building $\mathbb{L}_{\alpha^-}$ for sequence $\alpha^{-}$ (i.e., the sequence after deleting $a_1$).

\underline{Optimized method.}
For each item $a_i$ in $\mathbb{L}_{\alpha}^{t}$ ($1\leq t \leq m$) in the running example, we report its  $(t-1)$-hop up neighbor in Figure \ref{fig:division}.  The shaded vertices denote the items whose $(t-1)$-hop up neighbors are $a_1$ in $ \mathbb{L}_{\alpha}^1$; and the others are in the white vertices. Interestingly, the two categories of items of a list form two consecutive blocks. The shaded one is on the left and the other on the right.

Let us recall Lemma \ref{theorem:nocross}, which says that the up neighbour relations of the two items in the same list do not cross. In fact, after deleting $a_1$, for each $a_i$ $\in \mathbb{L}_{\alpha}^{t}$, if $un_{\alpha}^{t-1}(a_i)$ is $a_1$, then for any item $a_j$ at the left side of $a_i$ in $\mathbb{L}_{\alpha}^{t}$, $un_{\alpha}^{t-1}(a_i)$ is also $a_1$. While, if $un_{\alpha}^{t-1}(a_i)$ is not $a_1$, then for any item $a_k$ at the right side of $a_i$ in $\mathbb{L}_{\alpha}^{t}$, $un_{\alpha}^{t-1}(a_i)$ is not $a_1$. The two claims can be proven by Lemma \ref{theorem:nocross}. This is the reason why two categories of items form two consecutive blocks, as shown in Figure \ref{fig:division}.

After deleting $a_1$, we divide each list $\mathbb{L}_{\alpha}^{t}$ into two sublists: $Left(\mathbb{L}_{\alpha}^{t})$ and $Right(\mathbb{L}_{\alpha}^{t})$. For any item $a_j$ $\in Left(\mathbb{L}_{\alpha}^{t})$, $un_{\alpha}^{t-1}(a_j)$ is $a_1$ while for any item $a_k$ $\in Right(\mathbb{L}_{\alpha}^{t})$, $un_{\alpha}^{t-1}(a_k)$ is not $a_1$. Instead of computing the $(t-1)$-hop up neighbor of each item, we propose an efficient algorithm (Algorithm \ref{alg:division}) to divide each horizontal list $\mathbb{L}_{\alpha}^{t}$ into two sublists: $Left(\mathbb{L}_{\alpha}^{t})$ and $Right(\mathbb{L}_{\alpha}^{t})$.

\begin{figure}[!h]
\vspace{-0.1in}
\centering
\begin{subfigure}[t]{0.45\textwidth}
\centering
	\resizebox{\linewidth}{!}
	{
		\includegraphics[width=0.60\textwidth]{\picfolder 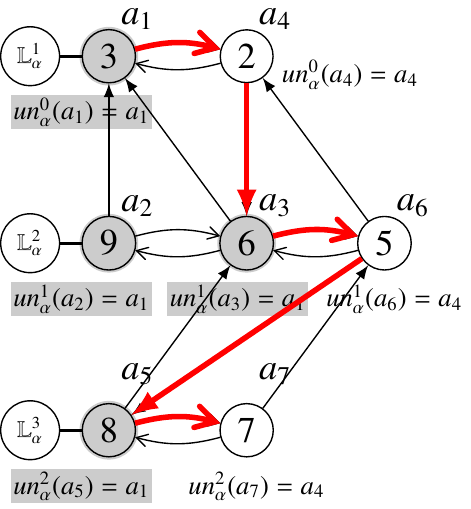}
	}
	\caption{Division.}
	\vspace{-0.1in}
	\label{fig:division}
\end{subfigure}
\begin{subfigure}[t]{0.45\textwidth}
\centering
	\resizebox{\linewidth}{!}
	{
		\includegraphics[width=0.60\textwidth]{\picfolder 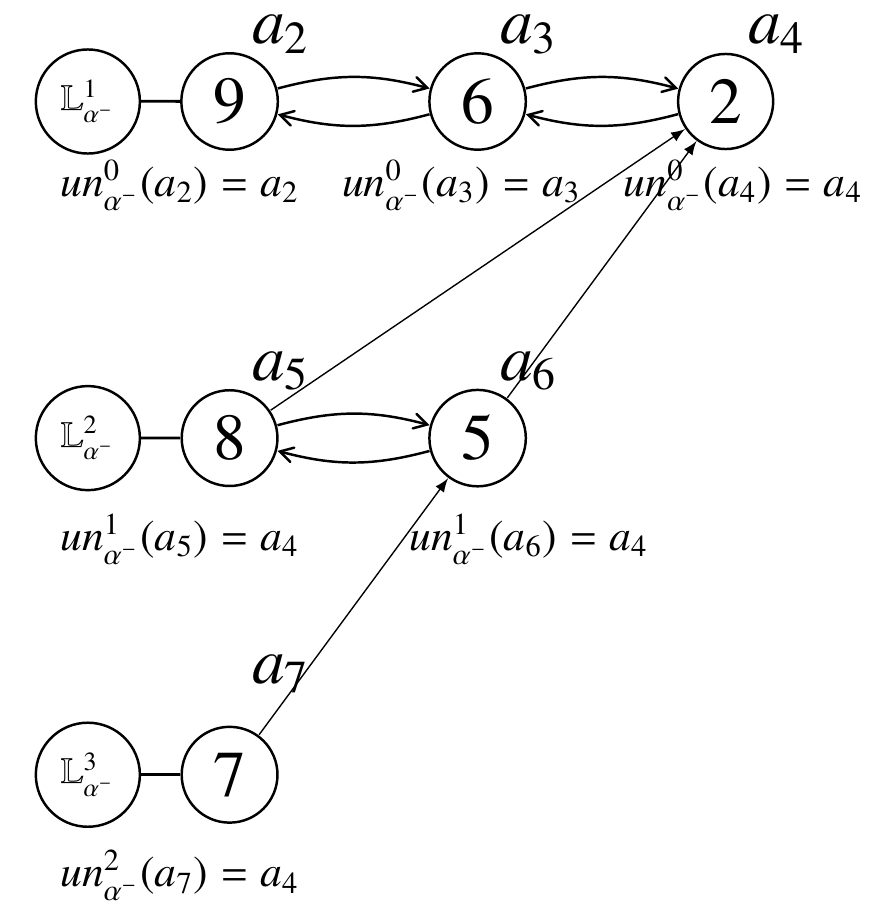}
	}
	\caption{after Deletion.}
	\vspace{-0.1in}
	\label{fig:afterupdate}
\end{subfigure}
\caption{Maintenance}
\label{fig:maintenance}
\end{figure}
\vspace{-0.15in}

Let's consider the division of each horizontal list of $\mathbb{L}_{\alpha}$. In fact, in our division algorithm, the division of $\mathbb{L}_{\alpha}^{t}$ depends on that of $\mathbb{L}_{\alpha}^{t-1}$. We first divide $\mathbb{L}_{\alpha}^{1}$. Apparently, $Left(\mathbb{L}_{\alpha}^{1})$ $= \{a_1\}$ and $Right(\mathbb{L}_{\alpha}^{1})$ $= \{\mathbb{L}_{\alpha}^{1}\} -{\{a_1\}}$. Recursively, assuming that we have finished the division of $\mathbb{L}_{\alpha}^{t}$, $1 \leq t < m$, there are three cases to divide $\mathbb{L}_{\alpha}^{t+1}$. Note that for each item $a_i \in Left(\mathbb{L}_{\alpha}^{t})$, $un_{\alpha}^{t-1}(a_i)=a_1$; while for each item $a_i \in Right(\mathbb{L}_{\alpha}^{t})$, $un_{\alpha}^{t-1}(a_i) \neq a_1$.

\vspace{-0.05in}
\begin{enumerate}
\setlength\itemsep{0.0em}
\item If $Right(\mathbb{L}_{\alpha}^{t})$  $= NULL$, for any item $a_j$ $\in \mathbb{L}_{\alpha}^{t+1}$, we have $un_{\alpha}(a_j)$ $\in Left(\mathbb{L}_{\alpha}^{t})$, thus, $un_{\alpha}^{t}(a_j)$ is exactly $a_1$. Thus, all below lists are set to be the left part. Specifically, for any $t^{\prime}>t$, we set $Left(\mathbb{L}_{\alpha}^{t^{\prime}})$ $= \mathbb{L}_{\alpha}^{t^{\prime}}$ and $Right(\mathbb{L}_{\alpha}^{t^\prime})$ $=NULL$.
\item If $Right(\mathbb{L}_{\alpha}^{t})$ $\neq NULL$ and the head item of $Right(\mathbb{L}_{\alpha}^{t})$ is $a_k$:
\vspace{-0,06in}
	\begin{enumerate}
	 \item if $dn_{\alpha}(a_k)$ does not exist, namely, $\mathbb{L}_{\alpha}^{t+1}$ is empty at the time when $a_k$ is inserted into $\mathbb{L}_{\alpha}^{t}$, then all items in $\mathbb{L}_{\alpha}^{t+1}$ come after $a_k$ and their up neighbors are either $a_k$ or item at the right side of $a_k$, thus, the $t$-hop up neighbor of each item in $\mathbb{L}_{\alpha}^{t+1}$ cannot be $a_1$. Actually, all below lists are set to be the right part. Specifically, for any $t^{\prime}>t$, we set $Left(\mathbb{L}_{\alpha}^{t^{\prime}})$ $=NULL$ and $Right(\mathbb{L}_{\alpha}^{t^\prime})$ $= \mathbb{L}_{\alpha}^{t^{\prime}}$. 
	 \item if $dn_{\alpha}(a_k)$ exists, then $dn_{\alpha}(a_k)$ and items at its left side come before $a_k$ and their up neighbors can only be at the left side of $a_k$ (i.e., $Left(\mathbb{L}_{\alpha}^{t})$), thus, the $t$-hop up neighbor of $dn_{\alpha}(a_k)$ or items on the left of $dn_{\alpha}(a_k)$ must be $a_1$. Besides, items at the right side of $dn_{\alpha}(a_k)$ come after $a_k$, and their up neighbors is either $a_k$ or item at the right side of $a_k$, thus, the $t$-hop up neighbor of each item on the right of $dn_{\alpha}$ cannot be $a_1$. Generally, we set $Left(\mathbb{L}_{\alpha}^{t+1})$ as the induced sublist from the head of $\mathbb{L}_{\alpha}^{t+1}$ to $dn_{\alpha}(a_k)$(included) and set $Right(\mathbb{L}_{\alpha}^{t+1})$ as the remainder, namely, $Right(\mathbb{L}_{\alpha}^{t+1})=\mathbb{L}_{\alpha}^{t+1}$ $- Left(\mathbb{L}_{\alpha}^{t+1})$. We iterate the above process for the remaining lists.
	\end{enumerate}
\end{enumerate}
\vspace{-0.1in}

\nop{
We illustrate the above division approach by the running example (see Figure \ref{fig:division}). Initially,  $Left(\mathbb{L}_{\alpha}^1)$=$\{a_1\}$ and $Right(\mathbb{L}_{\alpha}^1)$=$\{a_4\}$. The head of $Right(\mathbb{L}_{\alpha}^1)$ is $a_4$. Thus, in the list $\mathbb{L}_{\alpha}^2$, the sublist from the head ($a_2$) to $dn_{\alpha}(a_4)=a_3$ forms $Left(\mathbb{L}_{\alpha}^2)=\{a_2,a_3\}$. The remainder forms $Right(\mathbb{L}_{\alpha}^2)=\{a_6\}$. Iteratively, we obtain $Left(\mathbb{L}_{\alpha}^3)=\{a_5\}$ and $Right(\mathbb{L}_{\alpha}^3)=\{a_7\}$. Actually, the red bold line in the Figure denotes the \emph{separatrix} between the left and the right part.
}

Finally, for any $1\leq t \leq m$, the left sublist $Left(\mathbb{L}_{\alpha}^t)$ should be promoted to the above list; and $Right(\mathbb{L}_{\alpha}^t)$ is still in the $t$-th list. Specifically, $\mathbb{L}_{\alpha^{-}}^t=Left(\mathbb{L}_{\alpha^{-}}^{t+1})+Right(\mathbb{L}_{\alpha^{-}}^{t})$, i.e., appending $Right(\mathbb{L}_{\alpha^{-}}^{t})$ to $Left(\mathbb{L}_{\alpha^{-}}^{t+1})$ to form $\mathbb{L}_{\alpha^{-}}^t$.  In the running example, we append $Right(\mathbb{L}_{\alpha}^{1})=\{a_2,a_3\}$ to $Left(\mathbb{L}_{\alpha}^{2})=\{a_4\}$ to form $\mathbb{L}_{\alpha^{-}}^{1}=\{a_2,a_3,a_4\}$, as shown in Figure \ref{fig:afterupdate}.

\vspace{-0.06in}
\begin{theorem}\label{theorem:updatedecreasing} 
The list formed by appending $Right(\mathbb{L}_{\alpha}^{t})$ to $Left(\mathbb{L}_{\alpha}^{t+1})$ are monotonic decreasing from the left to the right.
\end{theorem}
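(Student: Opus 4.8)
The plan is to reduce the claim to a single junction inequality. Write the concatenated list as $Left(\mathbb{L}_{\alpha}^{t+1})$ followed by $Right(\mathbb{L}_{\alpha}^{t})$. Since $Left(\mathbb{L}_{\alpha}^{t+1})$ is a prefix of the horizontal list $\mathbb{L}_{\alpha}^{t+1}$ and $Right(\mathbb{L}_{\alpha}^{t})$ is a suffix of $\mathbb{L}_{\alpha}^{t}$, each of the two sublists is itself monotonically (indeed strictly) decreasing by Lemma \ref{lem:consecutive}(\ref{item:decreasing}). A concatenation of two decreasing sequences is decreasing precisely when the last (smallest) element of the first block is at least the first (largest) element of the second block, so it suffices to compare the tail of $Left(\mathbb{L}_{\alpha}^{t+1})$ with the head of $Right(\mathbb{L}_{\alpha}^{t})$. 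The degenerate cases are immediate: if $Right(\mathbb{L}_{\alpha}^{t}) = NULL$ the list equals $Left(\mathbb{L}_{\alpha}^{t+1})$, and if $Left(\mathbb{L}_{\alpha}^{t+1}) = NULL$ (the situation of division case 2(a), where $dn_{\alpha}(a_k)$ does not exist) the list equals $Right(\mathbb{L}_{\alpha}^{t})$; in either case the list is a sublist of a single horizontal list and we are done by Lemma \ref{lem:consecutive}(\ref{item:decreasing}).

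The remaining and main case is when both blocks are nonempty, i.e.\ division case 2(b). Let $a_k$ be the head of $Right(\mathbb{L}_{\alpha}^{t})$, which is its largest element because horizontal lists decrease from left to right. By the construction in case 2(b), $Left(\mathbb{L}_{\alpha}^{t+1})$ is exactly the prefix of $\mathbb{L}_{\alpha}^{t+1}$ running from its head down to and including $dn_{\alpha}(a_k)$; hence the tail (smallest element) of $Left(\mathbb{L}_{\alpha}^{t+1})$ is $dn_{\alpha}(a_k)$. Now I invoke Lemma \ref{lem:property}(\ref{item:dnrightmost}), which guarantees $dn_{\alpha}(a_k) > a_k$. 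This is precisely the junction inequality: the last element of $Left(\mathbb{L}_{\alpha}^{t+1})$ strictly exceeds the first element of $Right(\mathbb{L}_{\alpha}^{t})$, so the concatenation is strictly decreasing at the seam and, together with the internal monotonicity of the two blocks, strictly decreasing throughout.

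The one place that needs care---and the only real obstacle---is establishing that the element sitting at the seam on the $Left$ side is exactly $dn_{\alpha}(a_k)$; everything else is bookkeeping over the three division cases. This identification rests on reading off the definition of $Left(\mathbb{L}_{\alpha}^{t+1})$ in case 2(b) together with the fact, also from Lemma \ref{lem:property}(\ref{item:dnrightmost}), that $dn_{\alpha}(a_k)$ genuinely lies in $\mathbb{L}_{\alpha}^{t+1}$, so that the prefix ending at it is well defined. Once the seam element is pinned down, the single strict inequality $dn_{\alpha}(a_k) > a_k$ closes the argument, and since the reasoning is uniform in $t$, the conclusion holds for every $1 \le t \le m$.
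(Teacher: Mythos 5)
Your proof is correct and follows essentially the same route as the paper's: both establish that each block is decreasing as a sublist of a horizontal list, dispose of the empty-block cases, and then close the seam by identifying the tail of $Left(\mathbb{L}_{\alpha}^{t+1})$ with $dn_{\alpha}(a_k)$ (where $a_k$ is the head of $Right(\mathbb{L}_{\alpha}^{t})$) and invoking Lemma \ref{lem:property}(\ref{item:dnrightmost}) to get $dn_{\alpha}(a_k) > a_k$. Your explicit case analysis over the division algorithm (pinning the nonempty-blocks situation to case 2(b)) is a slightly more careful write-up of the same argument the paper compresses into ``according to the way we divide horizontal lists.''
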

\nop{
\begin{proof}
Since $Left(\mathbb{L}_{\alpha}^{t+1})$ is sublist of $\mathbb{L}_{\alpha}^{t+1}$ that is strictly decreasing, $Left(\mathbb{L}_{\alpha}^{t+1})$ is strictly decreasing too. Similar, $Right(\mathbb{L}_{\alpha}^{t})$ is also monotonic decreasing. If $Left(\mathbb{L}_{\alpha}^{t+1})$ or $Right(\mathbb{L}_{\alpha}^{t})$ is $NULL$, this theorem holds certainly. Otherwise, let $a_j$ be the last item in $Left(\mathbb{L}_{\alpha}^{t+1})$ and $a_k$ be the first item $Right(\mathbb{L}_{\alpha}^{t})$. According to the way we divide horizontal lists of $\mathbb{L}_{\alpha}$, $a_j$ is the down neighbour of $a_k$. Thus, $a_k < dn_\alpha(a_k) = a_j$(Lemma \ref{alg:buildingindex}(\ref{item:dnlarger})). Therefore, the list formed by appending $Right(\mathbb{L}_{\alpha}^{t})$ to $Left(\mathbb{L}_{\alpha}^{t+1})$ is monotonic decreasing from the left to the right.
\end{proof}
}
\vspace{-0.08in}

According to Theorem \ref{lem:deletion} and Lemma \ref{lem:property}(1), we can prove that the list formed by appending $Right(\mathbb{L}_{\alpha}^{t})$ to $Left(\mathbb{L}_{\alpha}^{t+1})$,  denoted as $L$, contains the same set of items as $\mathbb{L}_{\alpha^-}^{t}$ does. Besides, according to Lemma \ref{lem:consecutive}(2) and Theorem \ref{theorem:updatedecreasing}, both $L$ and $\mathbb{L}_{\alpha^-}^{t}$ are monotonic decreasing, thus, we can know that $L$ is equivalent to $\mathbb{L}_{\alpha^-}^{t}$ and we can derive that the horizontal list adjustment method is correct.

\vspace{-0.05in}
\begin{algorithm}[!h]
\small
\caption{Divide each horizontal list after deletion}
\label{alg:division}
\KwIn{$\mathbb{L}_{\alpha}$: the quadruple neighbor list for $\alpha$.}
\KwIn{$a_1$: the item to be deleted.}
\KwOut{$Left(\mathbb{L}_{\alpha}^{t})$ and $Right(\mathbb{L}_{\alpha}^{t})$ for each $\mathbb{L}_{\alpha}^{t}$, $1\leq t \leq m$.}

$m=|\mathbb{L}_{\alpha}|$ \\
$Left(\mathbb{L}_{\alpha}^{1})=\{a_1\}$ \\
$Right(\mathbb{L}_{\alpha}^{1})=\mathbb{L}_{\alpha}^{1}/\{a_1\}$\\
\For{$t \gets 1$ to $m-1$}
{
    \If{$Right(\mathbb{L}_{\alpha}^{t}) = NULL$}
    {
     \For{$t' \gets (t+1)$ to $m$}
      {
	      $Left(\mathbb{L}_{\alpha}^{t'})=\mathbb{L}_{\alpha}^{t'}$ \\
	       $Right(\mathbb{L}_{\alpha}^{t'})=NULL$\\
      }
      RETURN
    }
    $a_k = Head(Right(\mathbb{L}_{\alpha}^{t}))$ \\
    \If{$dn_{\alpha}(a_k) = NULL$}{
      \For{$t' \gets (t+1)$ to $m$}
       {
	       $Left(\mathbb{L}_{\alpha}^{t'})=NULL$ \\
	       $Right(\mathbb{L}_{\alpha}^{t'})=\mathbb{L}_{\alpha}^{t'}$ \\
       }
       RETURN
    }

	Set $Left(\mathbb{L}_{\alpha}^{t+1})$ to be the part at the left side of $dn_{\alpha}(a_k)$ in $\mathbb{L}_{\alpha}^{t+1}$, including $dn_{\alpha}(a_k)$ itself. \\
	$Right(\mathbb{L}_{\alpha}^{t+1}) = \mathbb{L}_{\alpha}^{t+1} - Left(\mathbb{L}_{\alpha}^{t+1})$
}
RETURN
\end{algorithm}
\vspace{-0.15in}

\subsection{Vertical Update}

Besides adjusting the horizontal lists, we also need to update the vertical neighbor relationship in the quadruple neighbor list to finish the transformation from $\mathbb{L}_{\alpha}$ to $\mathbb{L}_{\alpha^-}$. Before presenting our method, we recall Lemma \ref{lem:property}(2), which says, for item $a_i \in \mathbb{L}_{\alpha}^{t}$, $un_\alpha(a_i)$(if exists) is the rightmost item in $\mathbb{L}_{\alpha}^{t-1}$ who is before $a_i$ in sequence $\alpha$; while, $dn_\alpha(a_i)$(if exists) is the rightmost item in $\mathbb{L}_{\alpha}^{t+1}$ who is before $a_i$ in sequence $\alpha$.

\Paragraph{Running example and intuition.} Let us recall Figure \ref{fig:maintenance}. After adjusting the horizontal lists, we need to handle updates of vertical neighbors. The following Lemma \ref{lem:vertical} tells us which vertical relations will remain when transforming  $\mathbb{L}_{\alpha}$ into  $\mathbb{L}_{\alpha^{-}}$. Generally, when we promote $Left(\mathbb{L}_{\alpha}^{t})$ to the above level, we need to change their up neighbors but not down neighbors. While, $Right(\mathbb{L}_{\alpha}^{t})$ is still in the same level after the horizontal update. We need to change their down neighbors but not up neighbors. 

For example, $Left(\mathbb{L}_{\alpha}^{3})=\{a_5\}$ is promoted to the $\mathbb{L}_{\alpha^{-}}^{2}$. In  $\mathbb{L}_{\alpha}$, $un_{\alpha}(a_5)$ is $a_3$, but we change it to $un_{\alpha^{-}}(a_5)=$ $a_4$, i.e.,  the rightmost item in $\mathbb{L}_{\alpha^{-}}^{1}$ who is before $a_5$ in sequence $\alpha^{-}$. Analogously,  $Right(\mathbb{L}_{\alpha}^{2})=\{a_6\}$ is still at the second level of $\mathbb{L}_{\alpha^{-}}$. $dn_{\alpha}(a_6)$ is $a_5$, but we change it to null (i.e., $dn_{\alpha^{-}}(a_6)=null$), since there is no item in $\mathbb{L}_{\alpha^{-}}^{3}$ who is before $a_6$. We give the formal analysis and algorithm description of the vertical update as follows. 


\Paragraph{Analysis and Algorithm.}




\vspace{-0.05in}
\begin{lemma}
\label{lem:dnleftunright}
  Given a sequence $\alpha$ and $\mathbb{L}_{\alpha}$, for any $1\leq t \leq m$: 
  \vspace{-0.05in}
  \begin{enumerate}
  \setlength\itemsep{0.0em}
  
  \item
  $\forall a_i \in$ $Left(\mathbb{L}_{\alpha}^{t})$, $dn_{\alpha}(a_i)$ (if exists) $\in$ $Left(\mathbb{L}_{\alpha}^{t+1})$.
  \item $\forall a_i \in$ $Right(\mathbb{L}_{\alpha}^{t+1})$, $un_{\alpha}(a_i)$ (if exists) $\in$ $Right(\mathbb{L}_{\alpha}^{t})$.
  \end{enumerate}
\end{lemma}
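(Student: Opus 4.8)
The plan is to reason entirely from the characterization of the two sublists that was set up immediately before the statement: for $a\in\mathbb{L}_{\alpha}^{t}$, we have $a\in Left(\mathbb{L}_{\alpha}^{t})$ exactly when $un^{t-1}_{\alpha}(a)=a_1$ and $a\in Right(\mathbb{L}_{\alpha}^{t})$ exactly when $un^{t-1}_{\alpha}(a)\neq a_1$; moreover $Left(\mathbb{L}_{\alpha}^{t})$ is the left-consecutive block of the horizontal list. With this in hand, part (2) falls out by a single composition step. Let $a_i\in Right(\mathbb{L}_{\alpha}^{t+1})$, so $un^{t}_{\alpha}(a_i)\neq a_1$, and suppose $un_{\alpha}(a_i)=a_u$ exists. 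By definition $a_u\in\mathbb{L}_{\alpha}^{t}$, and nesting the up-neighbor operator gives $un^{t-1}_{\alpha}(a_u)=un^{t-1}_{\alpha}(un_{\alpha}(a_i))=un^{t}_{\alpha}(a_i)\neq a_1$. Hence $a_u\in Right(\mathbb{L}_{\alpha}^{t})$, which is precisely (2). All the intermediate up-neighbors used here exist because every item in a list above the first has a predecessor, by Lemma \ref{lem:consecutive}.

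Part (1) is the harder direction, and the one where I would spend the real effort, because $dn_{\alpha}$ and $un_{\alpha}$ point in opposite directions and so there is no direct composition identity to exploit; here I would combine a position argument with the non-crossing lemma. Fix $a_i\in Left(\mathbb{L}_{\alpha}^{t})$ and assume $a_d:=dn_{\alpha}(a_i)$ exists. By Lemma \ref{lem:property}(3), $a_d\in\mathbb{L}_{\alpha}^{t+1}$ and $a_d$ precedes $a_i$ in $\alpha$. First I would locate $un_{\alpha}(a_d)$ inside $\mathbb{L}_{\alpha}^{t}$: it exists by Lemma \ref{lem:consecutive}, and by Lemma \ref{lem:property}(2) it is the rightmost item of $\mathbb{L}_{\alpha}^{t}$ that precedes $a_d$. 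Since $a_d$ precedes $a_i$, the position of $un_{\alpha}(a_d)$ is strictly smaller than the position of $a_i$, so $un_{\alpha}(a_d)$ lies strictly to the left of $a_i$ in $\mathbb{L}_{\alpha}^{t}$, recalling that positions increase from left to right in a horizontal list (Lemma \ref{lem:consecutive}(2)).

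With $un_{\alpha}(a_d)$ placed to the left of $a_i$, I would conclude in either of two equivalent ways. Since $Left(\mathbb{L}_{\alpha}^{t})$ is the left-consecutive block of the list and contains $a_i$, everything to the left of $a_i$, in particular $un_{\alpha}(a_d)$, also lies in $Left(\mathbb{L}_{\alpha}^{t})$, so $un^{t-1}_{\alpha}(un_{\alpha}(a_d))=a_1$, that is $un^{t}_{\alpha}(a_d)=a_1$, whence $a_d\in Left(\mathbb{L}_{\alpha}^{t+1})$. Alternatively, applying Lemma \ref{theorem:nocross} with $k=t-1$ to the pair $un_{\alpha}(a_d)$ (on the left) and $a_i$ shows that $un^{t-1}_{\alpha}(un_{\alpha}(a_d))$ equals or lies to the left of $un^{t-1}_{\alpha}(a_i)=a_1$; as $a_1$ is the first item of $\alpha$ it is leftmost in $\mathbb{L}_{\alpha}^{1}$, so the relation collapses to equality and again $un^{t}_{\alpha}(a_d)=a_1$. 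The case $t=1$ is vacuous, since then $Left(\mathbb{L}_{\alpha}^{1})=\{a_1\}$ and $dn_{\alpha}(a_1)$ does not exist.

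The only delicate point, and the sole place the argument is not purely formal, is transferring the left-to-right ordering between positions in $\alpha$ and positions within the horizontal list, and then invoking the non-crossing property to push $un_{\alpha}(a_d)$ all the way up to $a_1$. Everything else is bookkeeping with the sublist characterization and the definitions of $un_{\alpha}$ and $dn_{\alpha}$, so I expect no hidden difficulty once it is carefully established that $un_{\alpha}(a_d)$ remains to the left of $a_i$ at the base level, from which Lemma \ref{theorem:nocross} carries the rest.
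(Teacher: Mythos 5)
Your proof is correct, but it takes a genuinely different route from the paper's. The paper anchors both claims on the separatrix that Algorithm~\ref{alg:division} constructs: writing $a_h = Head(Right(\mathbb{L}_{\alpha}^{t}))$ and $a_t = Tail(Left(\mathbb{L}_{\alpha}^{t+1}))$, the division guarantees $dn_{\alpha}(a_h)=a_t$, and both parts then follow by purely positional reasoning from the ``rightmost item before'' characterizations in Lemma~\ref{lem:property}: for (1), $dn_{\alpha}(a_i)$ precedes $a_h$ in $\alpha$, hence equals $a_t$ or lies to its left; for (2), $a_i$ lies to the right of $a_t$, hence comes after $a_h$ in $\alpha$, so $un_{\alpha}(a_i)$ cannot lie to the left of $a_h$. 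You instead anchor on the semantic characterization of the division ($a \in Left(\mathbb{L}_{\alpha}^{t})$ iff $un^{t-1}_{\alpha}(a)=a_1$), which turns part (2) into the one-line composition identity $un^{t-1}_{\alpha}(un_{\alpha}(a_i)) = un^{t}_{\alpha}(a_i)$, and settles part (1) by placing $un_{\alpha}(a_d)$ to the left of $a_i$ in $\mathbb{L}_{\alpha}^{t}$ and then invoking either the prefix-block structure of $Left(\mathbb{L}_{\alpha}^{t})$ or Lemma~\ref{theorem:nocross} together with the fact that $a_1$ is leftmost in $\mathbb{L}_{\alpha}^{1}$. Both routes rest only on material established before the lemma, so there is no circularity. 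Your version buys a cleaner part (2) and independence from how the division is computed: it applies to any partition satisfying the defining property, and needs no case split on whether $Right(\mathbb{L}_{\alpha}^{t})$ is empty, a case the paper's proof silently passes over when it names $a_h$. The paper's version buys symmetry (both parts use the same two anchor items) and stays closer to the invariant that the maintenance algorithm explicitly produces, without re-invoking the non-crossing machinery inside this particular proof.
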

\nop{
\begin{proof}
\begin{enumerate}
	\item
	  Assuming that $a_h$ and $a_t$ are $Head(Right(\mathbb{L}_{\alpha}^{t}))$ and $Tail(Left(\mathbb{L}_{\alpha}^{t+1}))$ respectively. $dn_{\alpha}(a_i)$ is before $a_i$ in $\alpha$ (Lemma \ref{lem:def:undn}), while $a_i$ is before $a_h$ in $\alpha$ according to the horizontal adjustment, thus, $dn_{\alpha}(a_i)$ is before $a_h$ in $\alpha$. Since $dn_{\alpha}(a_h) = a_t$, $a_t$ is the rightmost item in $\mathbb{L}_{\alpha}^{t+1}$ who is before $a_h$ in $\alpha$. Thus, $dn_{\alpha}(a_i)$ is either $a_t$ or some item on the left of $a_t$. Since $a_t$ is the tail item of $Left(\mathbb{L}_{\alpha}^{t+1})$, $dn_{\alpha}(a_i) \in Left(\mathbb{L}_{\alpha}^{t+1})$.
	\item
     Assuming that $a_h$ and $a_t$ are $Head(Right(\mathbb{L}_{\alpha}^{t}))$  and $Tail(Left(\mathbb{L}_{\alpha}^{t+1}))$ respectively. $a_t$ is the rightmost item in $\mathbb{L}_{\alpha}^{t+1}$ that is before $a_h$ in $\alpha$ since $dn_{\alpha}(a_h) = a_t$ (Lemma \ref{lem:def:undn}). while $a_i$ is on the right of $a_t$ in $\mathbb{L}_{\alpha}^{t+1}$, then $a_i$ can't be an item before $a_h$ in $\alpha$ , namely, $a_h$ is before $a_i$ in $\alpha$. Since $un_{\alpha}(a_i)$ is the rightmost item in $\mathbb{L}_{\alpha}^{t}$ that is before $a_i$ in $\alpha$, then $un_{\alpha}(a_i)$ can't be on the left of $a_h$. Hence,  $un_{\alpha}(a_i) \in$ $Right(\mathbb{L}_{\alpha}^{t})$.
\end{enumerate}
\end{proof}
}

\begin{lemma} \label{lem:vertical}
Let $\alpha= \{a_1,a_2,\cdots,a_w)$ be a sequence. Let $\mathbb{L}_{\alpha}$ be its corresponding quadruple neighbor list and $m$ be the total number of horizontal lists in $\mathbb{L}_{\alpha}$. Let $\alpha^- = \{a_2,\cdots, a_w\}$ be obtained from $\alpha$ by deleting $a_1$. Consider an item $a_i \in \mathbb{L}_{\alpha^{-}}^{t}$, where $1\leq t \leq m$. According to the horizontal list adjustment, there are two cases for $a_i$: $a_i$ is from $Left(\mathbb{L}_{\alpha}^{t+1})$ or $a_i$ is from $Right(\mathbb{L}_{\alpha}^{t})$. Then, the following claims hold:
\vspace{-0.05in}
	\begin{enumerate}
	\setlength\itemsep{0.0em}
	
	\item Assuming $a_i$ is from $Left(\mathbb{L}_{\alpha}^{t+1})$
	\vspace{-0.1in}
		\begin{enumerate}
		\setlength\itemsep{0.0em}
		
		\item \label{item:dnremain}
			$dn_{\alpha^-}(a_i) = dn_{\alpha}(a_i)$ (i.e., the down neighbor do not change).
		\item \label{item:unnotx}
			Let $x$ be the rightmost item of $Left(\mathbb{L}_{\alpha}^{t})$.
		If $un_{\alpha}(a_i) \neq x$, then $un_{\alpha^-}(a_i) = un_{\alpha}(a_i)$ (i.e., the up neighbor remains).
		
		
		\end{enumerate}
	\vspace{-0.05in}
		\item
			Assuming $a_i$ is from $Right(\mathbb{L}_{\alpha}^{t})$
		\vspace{-0.1in}
		\begin{enumerate}
		\item \label{item:unremain}
			$un_{\alpha^-}(a_i) = un_{\alpha}(a_i)$ (i.e., the up neighbor do not change).
		\item \label{item:dnnoty}
			Let $y$ be the rightmost item of $Left(\mathbb{L}_{\alpha}^{t+1})$.		
		If $dn_{\alpha}(a_i) \neq y$, $dn_{\alpha^-}(a_i) = dn_{\alpha}(a_i)$ (i.e., the down neighbor remains)
		\end{enumerate}
	
	\end{enumerate}

\end{lemma}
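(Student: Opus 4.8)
The plan is to reduce all four equalities to a single positional principle. By Lemma~\ref{lem:property}(2)--(3), applied to either $\alpha$ or $\alpha^-$, the up-neighbor (resp.\ down-neighbor) of an item in level $s$ is characterized purely combinatorially: it is the rightmost item of level $s-1$ (resp.\ $s+1$) that precedes it in the underlying sequence. Since $\alpha^-$ is obtained from $\alpha$ merely by removing $a_1$, the relative order of any two surviving items $a_2,\dots,a_w$ is identical in $\alpha$ and $\alpha^-$; and by the horizontal update (Theorem~\ref{theorem:updatedecreasing} and the discussion following it) each new level is the concatenation $\mathbb{L}_{\alpha^-}^{s} = Left(\mathbb{L}_{\alpha}^{s+1}) + Right(\mathbb{L}_{\alpha}^{s})$, again a genuine horizontal list in which positions increase strictly from left to right by Lemma~\ref{lem:consecutive}(2). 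Hence, to prove a neighbor is unchanged it suffices to verify: (i) the claimed old neighbor lands in the \emph{correct adjacent level} of $\mathbb{L}_{\alpha^-}$, and (ii) no item lying to its right inside that level precedes $a_i$, so it remains the rightmost-before-$a_i$. The ``if exists'' cases are subsumed automatically: if no item at the adjacent level precedes $a_i$, the neighbor is undefined in both structures.

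I would dispatch the two easy parts, (1a) and (2a), directly from Lemma~\ref{lem:dnleftunright}. For (1a), $a_i\in Left(\mathbb{L}_{\alpha}^{t+1})$ gives $dn_{\alpha}(a_i)\in Left(\mathbb{L}_{\alpha}^{t+2})\subseteq \mathbb{L}_{\alpha^-}^{t+1}$, settling (i); for (ii), an item to its right in $\mathbb{L}_{\alpha^-}^{t+1}$ is either in $Left(\mathbb{L}_{\alpha}^{t+2})$, hence right of $dn_{\alpha}(a_i)$ in $\mathbb{L}_{\alpha}^{t+2}$ and so after $a_i$ by the rightmost property, or in $Right(\mathbb{L}_{\alpha}^{t+1})$, hence right of $a_i$ in $\mathbb{L}_{\alpha}^{t+1}$ and so after $a_i$ in position. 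Part (2a) is the mirror image: Lemma~\ref{lem:dnleftunright}(2) places $un_{\alpha}(a_i)\in Right(\mathbb{L}_{\alpha}^{t-1})\subseteq\mathbb{L}_{\alpha^-}^{t-1}$, and every item to its right in $\mathbb{L}_{\alpha^-}^{t-1}$ is a further-right item of $Right(\mathbb{L}_{\alpha}^{t-1})$, hence after $a_i$.

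The substance lies in (1b) and (2b), where the surviving neighbor sits adjacent to the concatenation seam and I must use the exact shape of the split from Algorithm~\ref{alg:division}, namely $Tail(Left(\mathbb{L}_{\alpha}^{s+1})) = dn_{\alpha}(Head(Right(\mathbb{L}_{\alpha}^{s})))$. For (2b), write $a_p = Head(Right(\mathbb{L}_{\alpha}^{t}))$, so $y = dn_{\alpha}(a_p)$. Since $a_i\in Right(\mathbb{L}_{\alpha}^{t})$ lies at or right of $a_p$, the rightmost-before characterization forces $dn_{\alpha}(a_i)$ to be $y$ or to lie right of $y$ in $\mathbb{L}_{\alpha}^{t+1}$; the hypothesis $dn_{\alpha}(a_i)\neq y$ then places it in $Right(\mathbb{L}_{\alpha}^{t+1})\subseteq\mathbb{L}_{\alpha^-}^{t+1}$, after which (ii) is immediate since its right-neighbors in $\mathbb{L}_{\alpha^-}^{t+1}$ are exactly its right-neighbors in $\mathbb{L}_{\alpha}^{t+1}$. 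For (1b), first note $a_i\in Left(\mathbb{L}_{\alpha}^{t+1})$ means $un^{t}_{\alpha}(a_i)=a_1$, i.e.\ $un^{t-1}_{\alpha}(un_{\alpha}(a_i))=a_1$, so $un_{\alpha}(a_i)\in Left(\mathbb{L}_{\alpha}^{t})\subseteq\mathbb{L}_{\alpha^-}^{t-1}$, giving (i). The delicate point is (ii): the glued-on block $Right(\mathbb{L}_{\alpha}^{t-1})$ could in principle supply an intruder preceding $a_i$. Writing $a_q = Head(Right(\mathbb{L}_{\alpha}^{t-1}))$ so that $x = dn_{\alpha}(a_q)$, the hypothesis $un_{\alpha}(a_i)\neq x$ places $un_{\alpha}(a_i)$ strictly left of $x$ in $\mathbb{L}_{\alpha}^{t}$; since $un_{\alpha}(a_i)$ is rightmost-before-$a_i$, the item $x$ must come after $a_i$, while $x = dn_{\alpha}(a_q)$ comes before $a_q$. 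Chaining these, $a_i$ precedes $a_q$, so the whole block $Right(\mathbb{L}_{\alpha}^{t-1})$ (all at or right of $a_q$) comes after $a_i$ and cannot intrude; the remaining right-neighbors inside $Left(\mathbb{L}_{\alpha}^{t})$ come after $a_i$ by the rightmost property.

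I expect the main obstacle to be exactly this ordering chain in (1b) and its transpose in (2b): showing that, under $un_{\alpha}(a_i)\neq x$ (resp.\ $dn_{\alpha}(a_i)\neq y$), the surviving neighbor both lands in the correct block \emph{and} is not overtaken across the seam by the freshly concatenated block. Lemma~\ref{theorem:nocross} underwrites the non-crossing block structure invoked throughout (it is the basis of the Left/Right consecutive-block split from Theorem~\ref{lem:deletion}), and once the seam argument is in place the remainder is routine bookkeeping with Lemma~\ref{lem:property} and the position-monotonicity of Lemma~\ref{lem:consecutive}(2).
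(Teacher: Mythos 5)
Your proof is correct and follows essentially the same route as the paper's: both reduce every claim to the rightmost-item-before-$a_i$ characterization of Lemma \ref{lem:property}(2)--(3), locate the surviving neighbor's block via Lemma \ref{lem:dnleftunright} (or the defining property of $Left$), and invoke the seam property $Tail(Left(\mathbb{L}_{\alpha}^{s+1})) = dn_{\alpha}(Head(Right(\mathbb{L}_{\alpha}^{s})))$ from Algorithm \ref{alg:division} for the delicate cases (1b) and (2b). If anything, your explicit ordering chain ruling out intruders from the concatenated block in (1b) is spelled out more carefully than the paper's corresponding step, which reaches the same conclusion by noting that right neighbors are preserved and appealing to position-monotonicity of the new horizontal lists.
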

\nop{
\begin{proof}
	\begin{enumerate}
	\item If $a_i$ is from $Left(\mathbb{L}_{\alpha}^{t+1})$
		\begin{enumerate}
		\item
		    Let $a_d^- = dn_{\alpha^-}(a_i)$, if $a_d^-$ exists, we can know that:
		    \begin{enumerate}
		      \item
			  $a_d^-$ is before $a_i$ in $\alpha$:
			
			  this holds certainly since $a_d^-$ is before $a_i$ in $\alpha^-$ according to the definition of down neighbor.
		
		      \item
		      $a_d^-$ $\in Left(\mathbb{L}_{\alpha}^{t+2})$:
		
		      since $a_d^-$ $\in \mathbb{L}_{\alpha^-}^{t+1}$, and we can know that
		            $a_d^-$ comes from either $Left(\mathbb{L}_{\alpha}^{t+2})$ or $Right(\mathbb{L}_{\alpha}^{t+1})$ (according to the horizontal update method), however, all items in $Right(\mathbb{L}_{\alpha}^{t+1})$ are after $a_i$ in $\alpha$ since $a_i$ comes from $Left(\mathbb{L}_{\alpha}^{t+1})$, hence, $a_d^-$ can only come from $Left(\mathbb{L}_{\alpha}^{t+2})$.
		
		      \item
		      $a_d^-$ is exactly $dn(\alpha,a)$:
		
		      since $a_d^-$ $\in Left(\mathbb{L}_{\alpha}^{t+2})$, if $a_d^-$ is not the tail item of $Left(\mathbb{L}_{\alpha}^{t+2})$ , then $rn(\alpha, a_d^-)$ $\in Left(\mathbb{L}_{\alpha}^{t+2})$ is exactly $rn(\alpha^-, a_d^-)$ (According to the horizontal update) and $a_d^-$ is the rightmost item in $\mathbb{L}_{\alpha}^{t+2}$ who is before $a$ in $\alpha$, thus, $a_d^-$ is exactly $dn(\alpha,a)$ (Lemma \ref{lem:def:undn}); if $a_d^-$ is the tail item of $Left(\mathbb{L}_{\alpha}^{t+2})$, then $a_d^-$ is the rightmost item in $Left(\mathbb{L}_{\alpha}^{t+2})$ who is before $a$ in $\alpha$, and $dn_{\alpha}(a_i)$ can only be $a_d^-$ because we know that $dn_{\alpha}(a_i)$ $\in Left(\mathbb{L}_{\alpha}^{t+2})$(Lemma \ref{lem:dnleftunright}).
		    \end{enumerate}
		        Besides, if $a_d^-$ does not exist, there is no item in $\mathbb{L}_{\alpha^-}^{t+1}$ who is before $a_i$ in $\alpha^-$, which means there is no item in $Left(\mathbb{L}_{\alpha}^{t+2})$(Also $\mathbb{L}_{\alpha}^{t+2}$) who is before $a_i$ in $\alpha$, namely, $dn(\alpha,a)$ does not exist, either. Above all, $dn_{\alpha^-}(a_i) = dn_{\alpha}(a_i)$.
		\item
			if $un_{\alpha}(a_i)$ is not $x$, then $un_{\alpha}(a_i)$ can only be an item on the left of $x$ in $Left(\mathbb{L}_{\alpha}^{t-1})$ (Lemma \ref{lem:dnleftunright}). Then $rn_{\alpha}(un_{\alpha}(a_i))$ must be the same as $rn_{\alpha^-}(un_{\alpha}(a_i))$ according to our horizontal adjustment. Thus, $un_{\alpha}(a_i)$ is still the right most item in $\mathbb{L}_{\alpha^-}^{t-2}$ who is before $a_i$ in $\alpha$($\alpha^-$), namely, $un_{\alpha}(a_i)$ is exactly $un_{\alpha^-}(a_i)$.
		\end{enumerate}
		
		\item If $a_i$ is from $Right(\mathbb{L}_{\alpha}^{t})$

		\begin{enumerate}
		\item
		  If $t = 1$,  $un_{\alpha}(a_i)$ = $un_{\alpha^-}(a_i) = NULL$ according to our horizontal adjustment. If $t > 1$, assuming that $a_t$ and $a_h$ are $Tail(Left(\mathbb{L}_{\alpha}^{t}))$ and $Head(Right(\mathbb{L}_{\alpha}^{t-1}))$ ,respectively, then $un_{\alpha}(a_i) \in$ $Right(\mathbb{L}_{\alpha}^{t})$(Lemma \ref{lem:dnleftunright}), thus, $un_{\alpha}(a_i) \in$ $\mathbb{L}_{\alpha}^{t-1}$. $rn_{\alpha}(un_{\alpha}(a_i))$(if exist) is after $a_i$ in $\alpha$, hence, $rn_{\alpha^-}(a_u)$ is after $a$ in $\alpha^-$ because $rn_{\alpha}(a_u)$ and $rn_{\alpha^-}(a_u)$ is the same item(or both of them don't exist) according to the horizontal adjustment. Thus, $a_u$ is the rightmost item in $\mathbb{L}_{\alpha^-}^{t-1}$ whose position is before $a$ in $\alpha$, namely, $a_u$ is exactly $un_{\alpha^-}(a_i)$.
		\item
			Since $y$ is before $Head(Right(\mathbb{\alpha}^{t}))$ in $\alpha$, then $y$ is also before $a_i$ in $\alpha$. Besides, $dn_{\alpha}(a_i)$ is the rightmost item in $\mathbb{L}_{\alpha}^{t+1}$ who is before $a_i$, then $dn_{\alpha}(a_i)$ is either $y$ or an item on the right of $y$. If $dn_{\alpha}(a_i)$ is not $y$, $dn_{\alpha}(a_i)$ must be on the right of $y$, namely, $dn_{\alpha}(a_i)$ is in $Right(\mathbb{L}_{\alpha}^{t+1})$. Hence, $rn_{\alpha^-}(dn_{\alpha}(a_i))$ will be the same as $rn_{\alpha}(dn_{\alpha}(a_i))$, thus, $dn_{\alpha}(a_i)$ is still the rightmost item in $\mathbb{L}_{\alpha^-}^{t+1}$ who is before $a_i$ in $\alpha$($\alpha^-$), namely, $dn_{\alpha^-}(a_i) = dn_{\alpha}(a_i)$.
		\end{enumerate}	
	\end{enumerate}
\end{proof}
} 

With Lemma \ref{lem:vertical}, for an item $a_i \in \mathbb{L}_{\alpha^-}^{t}$, there are only two cases that we need to update the vertical neighbor relations of $a_i$.

\vspace{-0.1in}
\begin{enumerate}
\setlength\itemsep{0.0em}

\item
Case 1: $a_i$ is from $Left(\mathbb{L}_{\alpha}^{t+1})$. Let $x$ be the rightmost item of $Left(\mathbb{L}_{\alpha}^{t})$. We need to update the \emph{up neighbor} of $a_i$ in  $\mathbb{L}_{\alpha^-}$ if $un_{\alpha}(a_i)=x$. Figure \ref{fig:updateun} demonstrates this case.
\item
Case 2: $a_i$ is from $Right(\mathbb{L}_{\alpha}^{t})$. Let $y$ be the rightmost item of $Left(\mathbb{L}_{\alpha}^{t+1})$. We need to update the \emph{down neighbor} of $a_i$ in  $\mathbb{L}_{\alpha^-}$ if $dn_{\alpha}(a_i)=y$. Figure \ref{fig:updatedn} demonstrates this case.
\end{enumerate}
\vspace{-0.05in}

We illustrate the detailed process as follows.

\vspace{-0.05in}
\begin{figure}[!h]
\centering
\includegraphics[width=0.9\textwidth]{\picfolder 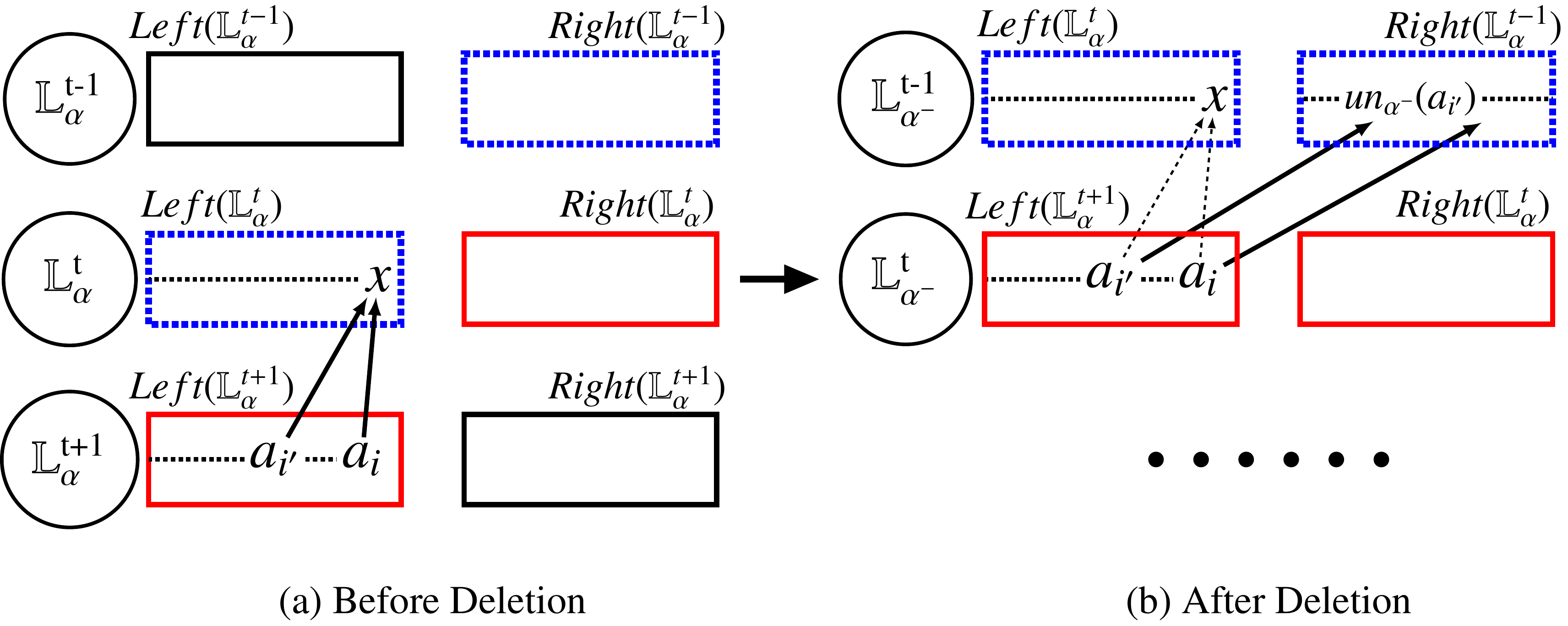}
\caption{Case 1: updating up neighbors}
\vspace{-0.1in}
\label{fig:updateun}
\end{figure}
\vspace{-0.1in}

\Paragraph{Case 1:} Consider all items in $Left(\mathbb{L}_{\alpha}^{t+1})$. According to the horizontal adjustment, $Left(\mathbb{L}_{\alpha}^{t+1})$ will be promoted into the list $\mathbb{L}_{\alpha^{-}}^{t}$.

Let $a_i$ be the rightmost item of $Left(\mathbb{L}_{\alpha}^{t+1})$ and $x=Tail(Left(\mathbb{L}_{\alpha}^{t}))$, namely, $x$ is the rightmost item in $Left(\mathbb{L}_{\alpha}^{t})$. According to Lemma \ref{lem:vertical}(1.b), if $un_{\alpha}(a_i)\neq x$, then $un_{\alpha}(a_i)$ $=un_{\alpha^-}(a_i)$. It is easy to prove that: If $un_{\alpha}(a_i)\neq x$ then $un_{\alpha}(a_j)\neq x$, where $a_j$ is on the left of $a_i$ in $Left(\mathbb{L}_{\alpha}^{t+1})$. In other words, all items in $Left(\mathbb{L}_{\alpha}^{t+1})$ do not change the vertical relations (see Lines \ref{code:unnochangeBegin}-\ref{code:unnochangeEnd} in Algorithm \ref{alg:updateun}).

Now, we consider the case that $un_{\alpha}(a_i)= x$ (Lines \ref{code:unchangeBegin}-\ref{code:unnochangeEnd} in Algorithm \ref{alg:updateun}). Then can scan $Left(\mathbb{L}_{\alpha}^{t+1})$ from $a_i$ to the left until finding the leftmost item $a_{i^{\prime}}$, where $un_{\alpha}(a_{i^{\prime}})$ is also $x$. The up neighbors of the items in the consecutive block from $a_{i^{\prime}}$ to $a_i$ (included both) are all $x$ in $\mathbb{L}_{\alpha}$ (note that $x$ is the rightmost item in $Left(\mathbb{L}_{\alpha}^{t})$ ), as shown in Figure \ref{fig:updateun}(a). These items' up neighbors need to be adjusted in  $\mathbb{L}_{\alpha^{-}}$. We work as follows: First, we adjust the up neighbor of $a_{i^{\prime}}$ in $\mathbb{L}_{\alpha^{-}}$. Initially, we set $a^{*}=un_{\alpha}(a_{i^{\prime}})=x$. Then, we move $a^{*}$ to the right step by step in $\mathbb{L}_{\alpha^{-}}^{t-1}$ until finding the rightmost item whose position is before $a_{i^{\prime}}$ in sequence $\alpha^{-}$. Finally, we set $un_{\alpha^{-}}(a_{i^{\prime}})=a^{*}$ (see Lines \ref{code:assginun} in Algorithm \ref{alg:updateun}).

In the running example, when deleting $a_1$ in Figure \ref{fig:division}, $Left(\mathbb{L}_{\alpha}^{3})$ $ = \{a_5\}$, and $un_{\alpha}(a_5)$ is exactly the tail item $a_3$ of $Left(\mathbb{L}_{\alpha}^{2})$, since $\mathbb{L}_{\alpha^-}^{1}$ is $\{a_2=9, a_3=6, a_4=2\}$, formed by appending $Right(\mathbb{L}_{\alpha}^{1})$($\{a_2=9, a_3=6\}$) to $Left(\mathbb{L}_{\alpha}^{2})$ ($\{a_4=2\}$), and $a_4$ is the rightmost item in $\mathbb{L}_{\alpha^-}^{1}$ who is before $a_5$ in $\alpha^-$, then we set $un_{\alpha^-}(a_5)$ as $a_4 = 2$, as shown in Figure \ref{fig:afterupdate}.

Iteratively, we consider the items on the right of $a_{i^{\prime}}$. Actually, the adjustment of the next item's up neighbor can begin from the current position of $a^{*}$ (Line \ref{code:unfromastar}). It is straightforward to know the time complexity of Algorithm \ref{alg:updateun} is $O(|\mathbb{L}_{\alpha^{-}}^{t-1}|)$, since each item in $\mathbb{L}_{\alpha^{-}}^{t-1}$ is scanned at most one time.

\Paragraph{Case 2:} Consider all items in $Right(\mathbb{L}_{\alpha}^{t})$. According to the horizontal adjustment, the down neighbors of items in $Right(\mathbb{L}_{\alpha}^{t})$ are the tail item (i.e., the rightmost item) of $Left(\mathbb{L}_{\alpha}^{t+1})$ or items in $Right(\mathbb{L}_{\alpha}^{t+1})$.

Actually, Case 2 is symmetric to Case 1. We highlight some important steps as follows. Let $a_i$ be the leftmost item in $Right(\mathbb{L}_{\alpha}^{t})$ and $y=Tail(Left(\mathbb{L}_{\alpha}^{t+1}))$, namely, $y$ is the rightmost item in $Left(\mathbb{L}_{\alpha}^{t+1})$. Obviously, $dn_{\alpha}(a_i)=y$, since the left-right division algorithm (Algorithm  \ref{alg:division}) guarantees that. Then we scan $Right(\mathbb{L}_{\alpha}^{t})$ from $a_i$ to the right until finding the rightmost item $a_{i^{\prime}}$, where $dn_{\alpha}(a_{i^{\prime}})$ is $y$. The up neighbors of the items in the consecutive block from $a_{i}$ to $a_{i^{\prime}}$ (included both) are all $y$ (see Figure \ref{fig:updatedn}(a)). Items on the right of $a_{i^{\prime}}$ need no changes in their down neighbors, since their down neighbors in $\mathbb{L}_{\alpha}$ are not $y$ (see Lemma \ref{lem:vertical}(2.b)).

\vspace{-0.1in}
\begin{figure}[!h]
\centering
\includegraphics[width=0.9\textwidth]{\picfolder 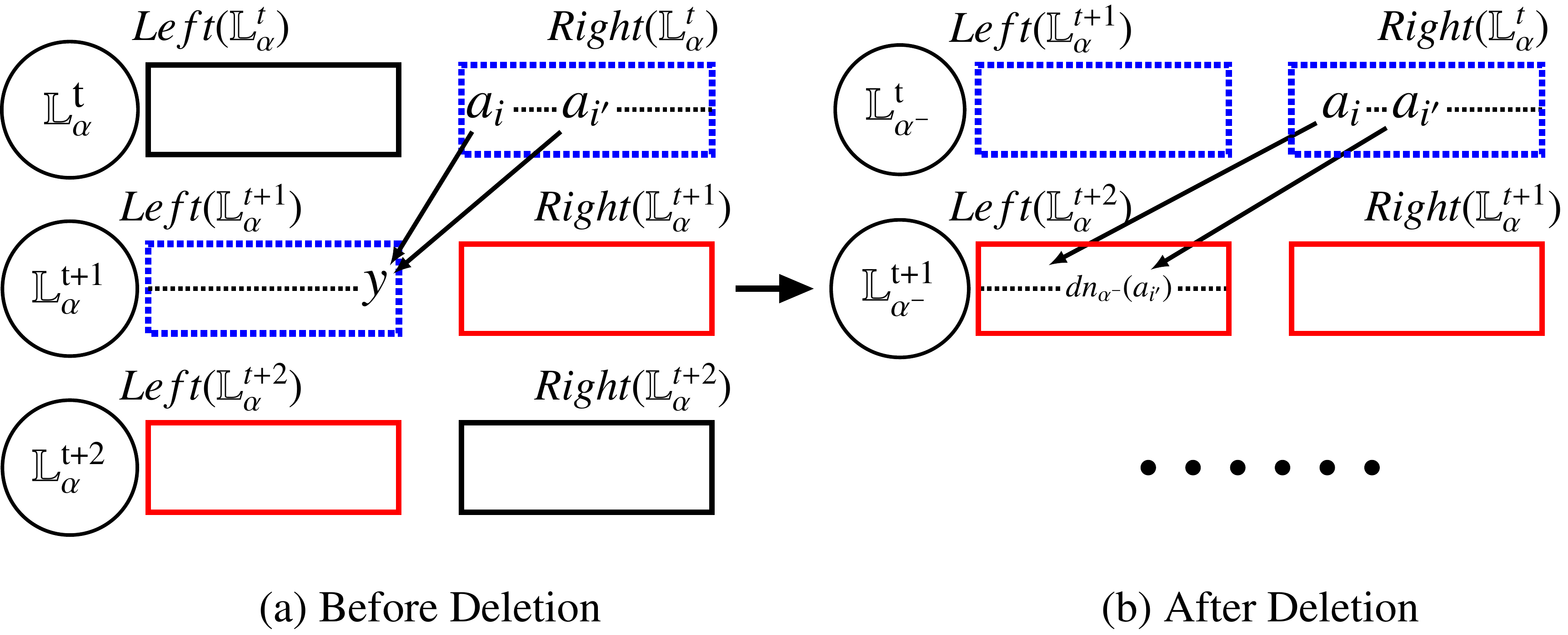}
\caption{Case 2: updating down neighbors}
\vspace{-0.1in}
\label{fig:updatedn}
\end{figure}
\vspace{-0.15in}

\vspace{-0.1in}
\begin{algorithm}[!h]
\small
\caption{Update up neighbors of items in $Left$($\mathbb{L}_{\alpha}^{t+1}$)}
\label{alg:updateun}
\KwIn{$Left$($\mathbb{L}_{\alpha}^{t+1}$))}
\KwOut{the updated $Left$($\mathbb{L}_{\alpha}^{t+1}$)}

\If{$Left(\mathbb{L}_{\alpha}^{t+1}) = NULL$}{
	RETURN
}

\If{$t = 1$}{
	\For{$a_i \in Left(\mathbb{L}_{\alpha}^{t+1})$}{
		$un_{\alpha^-}(a_i) = NULL$
	}
	RETURN
}
Let $a_i = Tail(Left(\mathbb{L}_{\alpha}^{t+1}))$ and
$x = Tail(Left(\mathbb{L}_{\alpha}^{t}))$ \\
\label{code:unnochangeBegin}
\If{$un_{\alpha}(a_i)$ is not $x$}{
	RETURN \label{code:unnochangeEnd}
}
\label{code:unchangeBegin}
Scan $Left(\mathbb{L}_{\alpha}^{t+1})$ from right to left and find the leftmost item whose up neighbor is $x$, denoted as $a_{i'}$ \\
$a^* = x$ \\
\While{$a_{i'} \geq a_i$}
{
	\While{$rn_{\alpha^-}(a^*)$ is before $a_{i'}$}{
		$a^* = rn_{\alpha^-}(a^*)$ \\ \label{code:unfromastar}
	}
	$un_{\alpha^-}(a_{i'}) = a^*$ \\ \label{code:assginun}
	$a_{i'} = rn_{\alpha^-}(a_{i'})$ \\ \label{code:unchangeEnd}
}
RETURN
\end{algorithm}
\vspace{-0.25in}
\begin{algorithm}[h!]
\small
\caption{Update down neighbors of items in $Right$($\mathbb{L}_{\alpha}^{t}$)}
\label{alg:updatedn}
\KwIn{$Right$($\mathbb{L}_{\alpha}^{t}$))}
\KwOut{the updated $Right$($\mathbb{L}_{\alpha}^{t}$)}

\If{$t \geq m-1$ OR $Right(\mathbb{L}_{\alpha}^{t}) = NULL$}{
	RETURN
}
Let $a_i = Head(Right(\mathbb{L}_{\alpha}^{t}))$ and $y = dn_{\alpha}(a_i)$ \\
Scan $Right(\mathbb{L}_{\alpha}^{t})$ from left to right and find the rightmost item whose down neighbor is $y$, denoted as $a_{i'}$ \\
$a^* = Tail(Left(\mathbb{L}_{\alpha}^{t+1}))$ \\
\While{$a_{i'} \leq a_i$}
{
	\If{$a^* = NULL$ OR $a^*$ is before $a_{i'}$}{
		$dn_{\alpha^-}(a_{i'}) = a^*$ \\ \label{code:assigndn}
		$a_{i'} = ln_{\alpha^-}(a_{i'})$ \\
	}\Else{
		$a^* = ln_{\alpha^-}(a^*)$ \\ \label{code:dnfromastar}
	}
}
RETURN
\end{algorithm}
\vspace{-0.1in}

We only consider the consecutive block from $a_{i}$ to $a_{i^{\prime}}$ (see Figure \ref{fig:updatedn}) as follows. First, we adjust the down neighbor of $a_{i^{\prime}}$ in $\mathbb{L}_{\alpha^{-}}$. Initially, we set $a^{*}=Tail(Left(\mathbb{L}_{\alpha}^{t+1}))$, i.e., the rightmost item of $Left(\mathbb{L}_{\alpha}^{t+1})$.  Then, we move $a^{*}$ to the left step by step in $\mathbb{L}_{\alpha^{-}}^{t+1}$ until finding the rightmost item whose position is before $a_{i^{\prime}}$. Finally, we set $dn_{\alpha^{-}}(a_{i^{\prime}})=a^{*}$ (see Lines \ref{code:assigndn} in Algorithm \ref{alg:updatedn}).

In the running example, when deleting $a_1 = 3$, $Right(\mathbb{L}_{\alpha}^{1})$ is $\{a_4=2\}$ whose head item is $a_4$. And $dn_{\alpha}(a_4)$ is $a_3$ $= 6$ that is the tail item of $Left(\mathbb{L}_{\alpha}^{2})$. Then, initially, we set $dn_{\alpha^-}(a_4)$ as the tail item of $Left(\mathbb{L}_{\alpha}^{3})$, namely, $dn_{\alpha^-}(a_4)=a_5 $ and scan $\mathbb{L}_{\alpha^-}^{2}$ from the right to the left until finding a rightmost item who is before $a_4$ in $\alpha^-$. Since there is no such item in $\mathbb{L}_{\alpha^-}^{2}$, we set $dn_{\alpha^-}(a_4)$ as $NULL$.

Iteratively, we consider the items on the left of $a_{i^{\prime}}$. Actually, the adjustment of the down neighbor can begin from the current position of $a^{*}$ (Line \ref{code:dnfromastar} in Algorithm \ref{alg:updatedn}). It is straightforward to know the time complexity of Algorithm \ref{alg:updatedn} is $O(|\mathbb{L}_{\alpha^{-}}^{t+1}|)$, since each item in $\mathbb{L}_{\alpha^{-}}^{t+1}$ is scanned at most twice.

\vspace{-0.05in}
\subsection{Putting It All Together}

Finally, we can see that solution to handle the deletion of the head item $a_1$ in sequence $\alpha$ consists two main phrase. The first phrase is to divides each list $\mathbb{L}_{\alpha}^{t}$ ($1\leq t \leq m$) using Algorithm \ref{alg:division} and then finishes the horizontal update by appending $Right(\mathbb{L}_{\alpha}^{t})$ to $Left(\mathbb{L}_{\alpha}^{t+1})$ . In the second phrase, we can call Algorithms \ref{alg:updateun} and \ref{alg:updatedn} for vertical update. Pseudo codes of algorithm handling deletion are presented in Appendix \citeAPPdelete
\submitshow{
of the full version of this paper \cite{li2016lis}
}.
\vspace{-0.1in}
\begin{theorem} \label{theo:deletetime}
The time complexity of our deletion algorithm is $O(w)$, where $w$ denotes the time window size.
\end{theorem}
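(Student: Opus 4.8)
The plan is to bound the two phases of the deletion procedure separately and add the bounds, using the fact that the number of horizontal lists satisfies $m = |\mathbb{L}_{\alpha}| \le w$ (by Lemma \ref{lem:property}(1), $m$ equals the LIS length, hence at most $w$). The first phase is the horizontal update: Algorithm \ref{alg:division} followed by the merges $\mathbb{L}_{\alpha^-}^t = Left(\mathbb{L}_{\alpha}^{t+1}) + Right(\mathbb{L}_{\alpha}^{t})$. The second is the vertical update carried out by Algorithms \ref{alg:updateun} and \ref{alg:updatedn}. I will show that each phase costs $O(w)$, which gives the claim.

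For the horizontal phase I would represent each horizontal list as a doubly-linked list with head and tail pointers. Algorithm \ref{alg:division} iterates $t$ from $1$ to $m-1$ and, in the generic case, splits $\mathbb{L}_{\alpha}^{t+1}$ immediately after $dn_{\alpha}(a_k)$. The crucial observation is that $dn_{\alpha}(a_k)$ is a stored pointer pointing directly into $\mathbb{L}_{\alpha}^{t+1}$, so performing the split---setting the tail of $Left$ and the head of $Right$---takes $O(1)$ time with no scanning. The two early-return branches simply fill in the remaining levels, touching each of the at most $m$ lists once with $O(1)$ work apiece. Hence Algorithm \ref{alg:division} runs in $O(m)$ time. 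Each subsequent merge of $Right(\mathbb{L}_{\alpha}^{t})$ onto $Left(\mathbb{L}_{\alpha}^{t+1})$ is a single splice of two doubly-linked sublists, $O(1)$ per level, so the entire horizontal phase costs $O(m) = O(w)$.

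For the vertical phase the per-level bounds are already established in the discussion preceding the theorem: the invocation that repairs the up-neighbors of the promoted block $Left(\mathbb{L}_{\alpha}^{t+1})$ costs $O(|\mathbb{L}_{\alpha^-}^{t-1}|)$, while the invocation that repairs the down-neighbors of $Right(\mathbb{L}_{\alpha}^{t})$ costs $O(|\mathbb{L}_{\alpha^-}^{t+1}|)$. These bounds rest on the two-pointer scheme inside the algorithms: the cursor $a^{*}$ advances monotonically along $\mathbb{L}_{\alpha^-}^{t-1}$ (respectively $\mathbb{L}_{\alpha^-}^{t+1}$) and never backtracks, so each item of the adjacent list is visited a constant number of times per call. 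Summing over all levels $t$ and using that the lists $\{\mathbb{L}_{\alpha^-}^{t}\}$ form a partition of the $w-1$ items of $\alpha^-$, the sums $\sum_t |\mathbb{L}_{\alpha^-}^{t-1}|$ and $\sum_t |\mathbb{L}_{\alpha^-}^{t+1}|$ are each at most $w-1$ (the lists scanned across distinct calls are distinct). Thus the vertical phase costs $O(w)$, and adding the two phases yields the claimed $O(w)$ bound.

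The step I expect to be the main obstacle is making the per-call vertical bound airtight: I must argue that within a single invocation both the cursor $a^{*}$ and the block boundary $a_{i'}$ move in one direction only, so that no item of the neighboring list is scanned more than a constant number of times. This is precisely the amortized two-pointer argument, and it is what prevents a naive $O(w)$-per-item, $O(w^2)$-total blowup. A secondary point to verify is the $O(1)$-split claim in Algorithm \ref{alg:division}, which hinges on the down-neighbor pointer landing exactly on the separatrix item so that no linear search is incurred.
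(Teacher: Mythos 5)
Your proposal is correct and follows essentially the same route as the paper's own proof: it bounds the horizontal division by $O(m)$ with $O(1)$ work per list (exploiting the stored down-neighbor pointer as the separatrix), bounds each vertical-update call by the size of the single adjacent list it scans with a monotone, non-backtracking cursor, and then amortizes over all levels using the fact that the horizontal lists partition the $w$ items, yielding $O(m+w)=O(w)$. The only cosmetic difference is that you make the per-call two-pointer argument and the summation over distinct lists explicit, whereas the paper states the constant-scans-per-item bound directly; the substance is identical.
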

\vspace{-0.06in}
\nop{
\begin{proof}
We can see that the time complexity of Algorithm \ref{alg:division} is $O(|LIS|)$ since division of each horizontal list costs $O(1)$ and there are $|LIS|$ horizontal lists in total. Besides, during the up neighbors update(Lines \ref{code:unupdateBegin}-\ref{code:unupdateEnd}), each horizontal list is scanned at most twice which means each item in $\alpha$ will be scanned at most twice. Similarly, during the down neighbors update(Lines \ref{code:dnupdateBegin}-\ref{code:dnupdateEnd}), each item in $\alpha$ is also scanned at most twice. We have $|\alpha| = w$, thus, the time complexity of Algorithm \ref{alg:deletion} is $O(|LIS|+w)$, namely, $O(w)$ since $|LIS|$ $\leq w$.
\end{proof}
}

\vspace{-0.1in}
\section{Computing LIS with constraints}\label{sec:computation}

As noted earlier in Section \ref{sec:introduction}, some applications are more interested in computing LIS with certain constraints. In this section, we consider four kinds of constraints (maximum/minimum weight/gap) that are defined in Section \ref{sec:problemdef}.

In Section \ref{sec:lisenumnew}, we define the DAG (Definition \ref{def:daggraph}) based on the \emph{predecessor} (Definition \ref{def:pred}). Each length-$m$ path in DAG denotes a LIS. Considering the equivalence between DAG and $\mathbb{L}_{\alpha}$, we illustrate our algorithm using DAG for the ease of the presentation. These algorithm steps can be easily mapped to those in $\mathbb{L}_{\alpha}$. According to  Lemma \ref{lem:consecutive}(2), items in $\mathbb{L}_{\alpha}^{t}$ ($1\leq t \leq m$) decrease from the left to the right. Thus, the leftmost length-$m$ path in DAG denotes the LIS with the maximum weight; while, the rightmost length-$m$ path denotes the LIS with the minimum weight. Formally, we define the \emph{leftmost child} as follows.

\vspace{-0.08in}
\begin{definition}\textbf{(Leftmost child)}. Given an item $a_i \in \mathbb{L}_{\alpha}^{t}$ ($1 \leq t \leq m$), the \emph{leftmost child} of $a_i$, denoted as $lm_{\alpha}(a_i)$, is the \emph{leftmost} predecessor (see Definition \ref{def:pred}) of $a_i$ in $\mathbb{L}_{\alpha}^{t-1}$.
\end{definition}
\vspace{-0.1in}

Recall Figure \ref{fig:lisdag}. $a_3$ is the leftmost child of $a_7$, denoted  as $a_3=lm_{\alpha}(a_7)$. Similar to the recursive definition of $k$-hop up neighbor $un^{k}_{\alpha}(a_i)$ for $a_i$, we recursively define  $lm^{k}_{\alpha}(a_i) = $ $lm_{\alpha}(lm_{\alpha}^{k-1}(a_i))$ $(k \geq 1)$ for any $k < t$, where $lm^{0}_{\alpha}(a_i)$ $= a_i$. Obviously, given an item $a_i \in \mathbb{L}_{\alpha}^m$ (i.e., the last list), $(lm_{\alpha}^{m-1}(a_i), lm_{\alpha}^{m-2}(a_i),...,lm_{\alpha}^{0}(a_i)=a_i)$ forms the leftmost path ending with $a_i$ in the DAG. It is easy to know that the leftmost path $(lm_{\alpha}^{m-1}(a_i), lm_{\alpha}^{m-2}(a_i),...,lm_{\alpha}^{0}(a_i)=a_i)$ in the DAG is the LIS ending with $a_i$ with maximum weight and minimum gap; while the rightmost path $(un_{\alpha}^{m-1}(a_i), un_{\alpha}^{m-2}(a_i),...,un_{\alpha}^{0}(a_i)=a_i)$ in the DAG is the LIS ending with $a_i$ with minimum weight and maximum gap. Formally, we have the following theorem that is the central to our constraint-based LIS computation.

\vspace{-0.08in}
\begin{theorem}
\label{theorem:constrainedmining}
Given a sequence $\alpha=\{a_1,...,a_w\}$ and $\mathbb{L}_{\alpha}$. Let $m=$ $|\mathbb{L}_{\alpha}|$ and DAG $G_{\alpha}$ be the corresponding DAG created  from $\mathbb{L}_{\alpha}$.
\begin{enumerate}
\setlength\itemsep{0.0em}
\vspace{-0.1in}
\item Given $a_i$, $a_j$ $\in \mathbb{L}_{\alpha}^t$ where $a_i < a_j$. then for every $1\leq k< t$,  $lm^{k}_{\alpha}(a_i)$, $lm^{k}_{\alpha}(a_j)$, $un^{k}_{\alpha}(a_i)$ and $un^{k}_{\alpha}(a_j)$ are all in $\mathbb{L}_{\alpha}^{t-k}$, and $lm^{k}_{\alpha}(a_i) \leq lm^{k}_{\alpha}(a_j)$, $un^{k}_{\alpha}(a_i) \leq un^{k}_{\alpha}(a_j)$.

\item \label{item:pmost:compare}
Given $a_i$ $\in \mathbb{L}_{\alpha}^t$. Consider an LIS $\beta$ ending with $a_i$: $\beta =\{a_{i_{t-1}}, \cdots,a_{i_{k}}, \cdots, a_{i_{0}}=a_i \}$. Then $\forall$ $k \in [0,t-1]$, $lm^{k}_{\alpha}(a_i)$, $a_{i_{k}}$, and $un^{k}_{\alpha}(a_i)$ are all in $\mathbb{L}_{\alpha}^{t-k}$. Also $lm^{k}_{\alpha}(a_i) \geq a_{i_{k}} \geq un^{k}_{\alpha}(a_i)$.

\item 
Given $a_i$ $\in \mathbb{L}_{\alpha}^{m}$ (the last list). Among all LIS ending with $a_i$, $\{$$lm^{m-1}_{\alpha}(a_i)$, $\cdots$,$lm^{0}_{\alpha}(a_i)$$\}$ has maximum weight and minimum gap,  while $\{$$un^{m-1}_{\alpha}(a_i)$, $\cdots$,$un^{0}_{\alpha}(a_i)$$\}$ has the minimum weight and maximum gap.

\item Let $a^{m}_h$ and $a^{m}_t$ be the head and tail of $\mathbb{L}_{\alpha}^{m}$ respectively. Then
the LIS  $\{$$lm^{m-1}_{\alpha}(a^{m}_h)$, $\cdots$,$lm^{0}_{\alpha}(a^{m}_h)$$\}$ has the maximum weight. The LIS $\{un^{m-1}_{\alpha}(a^{m}_t), \cdots,un^{0}_{\alpha}(a^{m}_t)\}$ has the minimum weight.

\end{enumerate}
\end{theorem}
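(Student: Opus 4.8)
The plan is to treat the four parts as a single inductive edifice resting on Part~1, which is essentially a \emph{non-crossing} (monotonicity) statement for both the up-neighbor iterate $un_{\alpha}^{k}$ and the leftmost-child iterate $lm_{\alpha}^{k}$; once Part~1 is available, Parts~2--4 reduce to termwise comparisons and summations. I would first record two one-hop facts for items $a \le b$ lying in a common list $\mathbb{L}_{\alpha}^{s}$: (i) $un_{\alpha}(a) \le un_{\alpha}(b)$ and (ii) $lm_{\alpha}(a) \le lm_{\alpha}(b)$, together with the membership fact that both $un_{\alpha}$ and $lm_{\alpha}$ send $\mathbb{L}_{\alpha}^{s}$ into $\mathbb{L}_{\alpha}^{s-1}$. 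The membership claim of Part~1 then follows by iterating $k$ times, and the inequalities follow by chaining: given $un_{\alpha}^{k-1}(a_i) \le un_{\alpha}^{k-1}(a_j)$ in the common list $\mathbb{L}_{\alpha}^{t-k+1}$, apply (i) to obtain $un_{\alpha}^{k}(a_i) \le un_{\alpha}^{k}(a_j)$, and symmetrically for $lm$.

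The hard step is fact (ii), the one-hop monotonicity of the leftmost child, since the paper supplies the non-crossing property only for up-neighbors (Lemma~\ref{theorem:nocross}). Here I would lean on the structural description of the predecessor block from Lemma~\ref{lem:consecutive}: the predecessors of an item $a \in \mathbb{L}_{\alpha}^{s}$ occupy the consecutive block of $\mathbb{L}_{\alpha}^{s-1}$ running from $lm_{\alpha}(a)$ (leftmost, hence largest value) to $un_{\alpha}(a)$ (rightmost, hence smallest value), and since values decrease while positions increase along a list (Lemma~\ref{lem:consecutive}(\ref{item:decreasing})), $lm_{\alpha}(a)$ is exactly the leftmost item of $\mathbb{L}_{\alpha}^{s-1}$ whose value is $\le a$ and whose position precedes $a$. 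For $a \le b$ in $\mathbb{L}_{\alpha}^{s}$ (so $a$ sits weakly to the right of $b$ and weakly later in $\alpha$) I would split on whether $lm_{\alpha}(b) \le a$: if not, $lm_{\alpha}(b)$ is too large to be a predecessor of $a$, forcing $lm_{\alpha}(a)$ strictly to its right; if so, $lm_{\alpha}(b)$ precedes $b$ hence precedes $a$ and is therefore a predecessor of $a$, and a short position argument shows no qualifying predecessor of $a$ can lie left of $lm_{\alpha}(b)$, giving $lm_{\alpha}(a) = lm_{\alpha}(b)$. Either way $lm_{\alpha}(a) \le lm_{\alpha}(b)$. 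This case analysis is the main obstacle; everything else is bookkeeping.

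For Part~2 I would induct on $k$ along the LIS $\beta = (a_{i_{t-1}}, \cdots, a_{i_0} = a_i)$. The base $k=0$ is the identity $lm_{\alpha}^{0}(a_i) = a_i = un_{\alpha}^{0}(a_i)$. For the step, $a_{i_k}$ is a predecessor of $a_{i_{k-1}}$, so it lies in the predecessor block of $a_{i_{k-1}}$, giving $un_{\alpha}(a_{i_{k-1}}) \le a_{i_k} \le lm_{\alpha}(a_{i_{k-1}})$. Feeding the inductive hypotheses $un_{\alpha}^{k-1}(a_i) \le a_{i_{k-1}} \le lm_{\alpha}^{k-1}(a_i)$ through the one-hop facts (i) and (ii) yields $un_{\alpha}^{k}(a_i) \le un_{\alpha}(a_{i_{k-1}})$ and $lm_{\alpha}(a_{i_{k-1}}) \le lm_{\alpha}^{k}(a_i)$, and chaining the two pairs of inequalities closes the induction; membership is inherited from Part~1.

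Parts~3 and~4 are then immediate. Fixing $a_i \in \mathbb{L}_{\alpha}^{m}$ and summing the termwise bounds of Part~2 over $k = 0,\dots,m-1$ gives $\sum_{k} un_{\alpha}^{k}(a_i) \le weight(\beta) \le \sum_{k} lm_{\alpha}^{k}(a_i)$ for every LIS $\beta$ ending at $a_i$, identifying the $lm$-path and the $un$-path as the maximum- and minimum-weight such LIS; since all these LIS share the tail $a_i$, their gap depends only on the head, and the same Part~2 bound on $a_{i_{m-1}}$ makes the $lm$-path minimum-gap and the $un$-path maximum-gap. For Part~4 I would use that the head $a^{m}_h$ and tail $a^{m}_t$ of $\mathbb{L}_{\alpha}^{m}$ are its largest and smallest items (Lemma~\ref{lem:consecutive}(\ref{item:decreasing})); applying Part~1 across endpoints, $a_i \le a^{m}_h$ gives $lm_{\alpha}^{k}(a_i) \le lm_{\alpha}^{k}(a^{m}_h)$ termwise so $a^{m}_h$ maximizes the per-endpoint maximum weight, while $a^{m}_t \le a_i$ gives $un_{\alpha}^{k}(a^{m}_t) \le un_{\alpha}^{k}(a_i)$ so $a^{m}_t$ minimizes it, yielding the two claimed global extremal LIS.
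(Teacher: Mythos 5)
Your proposal is correct and takes essentially the same route as the paper's proof: Part~1 by the same leftmost-predecessor/compatibility contradiction on the decreasing horizontal lists (the paper argues $lm$ directly and invokes symmetry for $un$, whereas you reuse Lemma~\ref{theorem:nocross} for $un$ and prove $lm$ by your case split), Part~2 by the same induction chaining predecessor bounds through the one-hop monotonicity, and Parts~3--4 by termwise summation and the head/tail endpoint comparison via Part~1. The differences are purely organizational, so nothing further is needed.
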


\vspace{-0.1in}
\Paragraph{LIS with maximum/minimum weight}

Based on Theorem \ref{theorem:constrainedmining}(4), we can design algorithms 
\arxivshow{
	(Algorithms \ref{alg:lismaxweightnew} and \ref{alg:lisminweightnew} in Appendix \ref{sec:appendix:extremeweight})
}
to compute the unique LIS with maximum weight and the unique LIS with minimum weight, respectively 
\submitshow{
	(Refer to Appendix \citeAPPweight in the full version of this paper \cite{li2016lis} for pseudo codes)
}.
Generally speaking, it searches for the leftmost path and the rightmost path in the DAG. It is straightforward to know that both algorithms cost $O(w)$ time.

\Paragraph{LIS with maximum/minimum gap}

For the maximum gap (minimum gap, respectively) problem, there may be numerous LIS with maximum gap (minimum gap, respectively). In the running example, LIS with the maximum gap are \{$a_1=3$,$a_3=6$,$a_5=8$\} and \{$a_4=2$,$a_6=5$,$a_7=7$\}; while LIS with the minimum gap are \{$a_1=3$,$a_6=5$,$a_7=7$\} and \{$a_1=3$,$a_3=6$,$a_7=7$\}, as shown in Figure \ref{fig:timewindow}. According to Theorem \ref{theorem:constrainedmining}, it is obviously that if $\beta=\{a_{i_{m-1}},a_{i_{m-2}},...,a_{i_{0}}\}$ is a LIS with the maximum gap, $a_{i_{m-1}}=un_{\alpha}^{m-1}(a_{i_{0}})$; while, if $\beta$ is a LIS with the minimum gap, $a_{i_{m-1}}=lm_{\alpha}^{m-1}(a_{i_{0}})$. Note that, there may be multiple LIS sharing the same head and tail items. For example,  \{$a_1=3$,$a_6=5$,$a_7=7$\} and \{$a_1=3$,$a_3=6$,$a_7=7$\} are both LIS with minimum gap, but they share the same head and tail items. Since computing LIS with the minimum gap is analogous to LIS with the maximum gap, we only consider LIS with the maximum gap as follows.

For the maximum gap problem, one could compute \{$un^{m-1}_{\alpha}(a_i),$ $\cdots,un^{0}_{\alpha}(a_i)$\} for all $a_i \in \mathbb{L}_{\alpha}^{m}$ first,  and then figure out the maximum gap 
$\theta_{max}$ 
\arxivshow{((Line \ref{code:maximum} in Algorithm \ref{alg:lismaxheightnew})}
.
We design a sweeping algorithm 
\arxivshow{(Lines \ref{code:rmBegin}-\ref{code:rmEnd} in Algorithm \ref{alg:lismaxheightnew})}
with $O(w)$ time to compute $un^{t-1}_{\alpha}(a_i)$ for each item $a_i \in \mathbb{L}_{\alpha}^t$, $1\leq t \leq m$. We will return back to the sweeping algorithm at the end of this subsection. Here, we assume that $un^{m-1}_{\alpha}(a_i)$ has been computed for each item $a_i \in \mathbb{L}_{\alpha}$.

Assume that $a_i - un^{m-1}_{\alpha}(a_i) = \theta_{max}$ for some $a_i \in \mathbb{L}_{\alpha}^{m}$. We need to enumerate all LIS starting with $un^{m-1}(a_i)$ ending with $a_i$. We only need to slightly modify the LIS enumeration algorithm 
\arxivshow{(Algorithm \ref{alg:lisenum})}
as follows: Initially, we push $a_i$ into the stack. If $a_j \in \mathbb{L}_{\alpha}^{t}$ is a predecessor of top element in the stack, we push $a_j$ into the stack \emph{if and only if}  $un_{\alpha}^{t-1}(a_j)=$$un_{\alpha}^{m-1}(a_i)$ 
\arxivshow{(Line \ref{code:gapenum} in Algorithm \ref{alg:lismaxheightnew})}
.

To compute $un_{\alpha}^{t-1}(a_i)$ for each $a_i \in \mathbb{L}_{\alpha}^{t}$, $1\leq t \leq m$, we design a sweeping algorithm 
\arxivshow{(Lines \ref{code:rmBegin}-\ref{code:rmEnd} in Algorithm \ref{alg:lismaxheightnew})}
 from $\mathbb{L}_{\alpha}^{2}$ to $\mathbb{L}_{\alpha}^{m}$ and once we figure out $un_{\alpha}^{t-1}(a_j)$ for each $a_j$ in $\mathbb{L}_{\alpha}^{t}$, then for any $a_i \in$ $\mathbb{L}_{\alpha}^{t+1}$, $un_{\alpha}^{t}(a_i)$ is $un_{\alpha}^{t-1}(un_{\alpha}(a_i))$. Apparently, this sweeping algorithm takes $\Theta (w)$ time.

\vspace{-0.1in}
\begin{theorem} 
\label{timecom:maxh}
The time complexity of our algorithm for LIS with maximum gap is $O(w+$\emph{OUTPUT}$)$, where $w$ denotes the window size and \emph{OUTPUT} is the total length of all LIS with maximal gap.
\end{theorem}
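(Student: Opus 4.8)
The plan is to charge the running time to two disjoint phases and bound each separately, so that their sum is $O(w+\emph{OUTPUT})$. The first phase is pure preprocessing: the sweeping subroutine that fills in $un^{t-1}_{\alpha}(a_i)$ for every item, together with the scan that determines the maximum gap $\theta_{max}$. The second phase is the modified DFS enumeration that actually emits the LIS of maximum gap. I expect the preprocessing bound to be routine and the output-sensitivity of the enumeration to be the real content of the proof.

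For the first phase I would argue directly that the sweep runs in $\Theta(w)$. Processing the lists in increasing order $\mathbb{L}_{\alpha}^{2},\dots,\mathbb{L}_{\alpha}^{m}$, each $a_i\in\mathbb{L}_{\alpha}^{t+1}$ is handled by the identity $un^{t}_{\alpha}(a_i)=un^{t-1}_{\alpha}(un_{\alpha}(a_i))$; since $un_{\alpha}(a_i)\in\mathbb{L}_{\alpha}^{t}$ and its $(t-1)$-hop value was stored when list $t$ was processed, each item costs $O(1)$, and there are $w$ items in total. Computing $\theta_{max}=\max_{a_i\in\mathbb{L}_{\alpha}^{m}}\big(a_i-un^{m-1}_{\alpha}(a_i)\big)$ is one pass over $\mathbb{L}_{\alpha}^{m}$, hence $O(w)$. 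Thus the first phase is $O(w)$, independent of the output.

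The crux is showing the enumeration is $O(\emph{OUTPUT})$. Fix a qualifying tail $a_i\in\mathbb{L}_{\alpha}^{m}$ with $a_i-un^{m-1}_{\alpha}(a_i)=\theta_{max}$ and write $h=un^{m-1}_{\alpha}(a_i)$; the DFS pushes a predecessor $a_j\in\mathbb{L}_{\alpha}^{t}$ of the current top iff $un^{t-1}_{\alpha}(a_j)=h$. I would first establish a \emph{no-dead-end invariant}: every pushed $a_j\in\mathbb{L}_{\alpha}^{t}$ satisfies $un^{t-1}_{\alpha}(a_j)=h$, so the chain $\{un^{t-1}_{\alpha}(a_j),\dots,un^{0}_{\alpha}(a_j)=a_j\}$ is a length-$t$ increasing subsequence from $h$ to $a_j$; and whenever $t>1$, the rightmost predecessor $un_{\alpha}(a_j)$ (which exists by Lemma \ref{lem:consecutive}(4)) lies in $\mathbb{L}_{\alpha}^{t-1}$ with $un^{t-2}_{\alpha}(un_{\alpha}(a_j))=un^{t-1}_{\alpha}(a_j)=h$, hence itself passes the push test. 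Consequently every partial path on the stack can always be continued downward until it reaches $h$ in $\mathbb{L}_{\alpha}^{1}$, where a complete maximum-gap LIS is printed. With no dead ends, the same output-sensitive DFS accounting used for Theorem \ref{theorem:allLIS} applies: the stack states visited are exactly the distinct suffixes of the emitted LIS, whose number is bounded by the total emitted length, so the descents and the printing together cost $O(\emph{OUTPUT})$.

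The one remaining subtlety I would address is the per-expansion scanning cost, so that no traversal is wasted. When expanding a top $a_c\in\mathbb{L}_{\alpha}^{t}$ (which satisfies $un^{t-1}_{\alpha}(a_c)=h$), its predecessors form a consecutive block of $\mathbb{L}_{\alpha}^{t-1}$ ending at $un_{\alpha}(a_c)$ (Lemma \ref{lem:consecutive}(3)). By the monotonicity of Theorem \ref{theorem:constrainedmining}(1) (equivalently the non-crossing property of Lemma \ref{theorem:nocross}), the values $un^{t-2}_{\alpha}(\cdot)$ along this block are monotone; the rightmost predecessor $un_{\alpha}(a_c)$ maps to $h$, while items further left map weakly to the left of $h$. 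Hence the predecessors that pass the test (those mapping exactly to $h$) form a contiguous sub-block adjacent to $un_{\alpha}(a_c)$, so the scan pushes them consecutively and halts at the first non-qualifying (or incompatible) item after a single $O(1)$ test charged to the expansion. Summing over the $|\mathbb{L}_{\alpha}^{m}|$ starting tails (whose initial scan is already covered by the $O(w)$ term) and combining with the first phase yields the claimed $O(w+\emph{OUTPUT})$. The main obstacle, as noted, is exactly this output-sensitivity argument --- the dead-end-freeness together with the contiguity of qualifying predecessors --- rather than the routine $\Theta(w)$ sweep.
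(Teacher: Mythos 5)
Your proof is correct and follows the same decomposition as the paper's: an $O(w)$ charge for the sweeping/preprocessing phase plus an output-sensitive charge for the DFS enumeration. The paper's own proof is a two-line assertion of exactly these bounds; your no-dead-end invariant (the rightmost predecessor $un_{\alpha}(a_j)$ always passes the test since $un^{t-2}_{\alpha}(un_{\alpha}(a_j))=un^{t-1}_{\alpha}(a_j)=h$) and the contiguity of qualifying predecessors (via the non-crossing property of Lemma \ref{theorem:nocross}) are precisely the justifications the paper leaves implicit, and both are argued correctly.
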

\nop{
\begin{proof} The sweeping steps from $\mathbb{L}_{\alpha}^{2}$ to $\mathbb{L}_{\alpha}^{m}$ need to access each item at most one time. It takes $O(w)$ time. The output cost is at most one time of the output size. Therefore, the total cost is  $O(w+$\emph{OUTUT}$)$.
\end{proof}
} 
\vspace{-0.06in}

Computing LIS with minimum gap is analogous to that of LIS with maximum gap by computing $lm^{t-1}_{\alpha}(a_i)$ instead of $un^{t-1}_{\alpha}(a_i)$. We also design a sweeping algorithm
\arxivshow{
	(Lines \ref{code:lmBegin}-\ref{code:lmEnd} in Algorithm \ref{alg:mingap} in Appendix \ref{sec:appendix:extremegap})
}
from $\mathbb{L}_{\alpha}^{2}$ to $\mathbb{L}_{\alpha}^{m}$ to figure out $lm^{t-1}_{\alpha}(a_i)$ for each $a_i \in \mathbb{L}_{\alpha}^{t}$, $1\leq t \leq m$ 
\submitshow{
	(Pseudo codes are presented in the Appendix \citeAPPgap of the full version of this paper \cite{li2016lis})
}.

\vspace{-0.15in}
\section{Comparative Study}
\label{sec:compare}


To the best of our knowledge, there is no existing work that studies both LIS enumeration and LIS with constraints in the data stream model. In this section, we compare our method with five related algorithms, four of which are state-of-the-art LIS algorithms, i.e., {LISSET} \cite{lissetChen2007}, {MHLIS} \cite{minheight2009}, {VARIANT}  \cite{variant2009} and {LISone} \cite{liswAlbert2004}, and the last one is the classical dynamic program (DP) algorithm. None of them covers either the same computing model or the same computing task with our approach. Table \ref{tab:comparative} summarizes the differences between our approach with other comparative ones.

  
\textbf{LISSET} \cite{lissetChen2007} is the only one which proposed LIS enumeration in the context of ``stream model''. It enumerates all LIS in each sliding window but it fails to compute LIS with different constraints, such as LIS with extreme gaps and LIS with extreme weights. To enable the comparison in constraint-based LIS, we first compute all LIS followed by filtering using constraints to figure out constraint-based LIS, which is denoted as ``LISSET-Post'' in our experiments.



\textbf{MHLIS} \cite{minheight2009} is to find LIS with the minimum gap but it does not work in the context of data stream model. The data structure in MHLIS does not consider the maintenance issue. To enable the comparison, we implement two streaming version of MHLIS: \textbf{MHLIS+Rebuild} and \textbf{MHLIS+Ins/Del} where MHLIS+Rebuild is to re-compute LIS from scratch in each time window and MHLIS+Ins/Del is to apply our update method in MHLIS.


  
 A family of algorithms was proposed in \cite{variant2009} including LIS of minimal/maximal weight/gap (denoted as \textbf{VARIANT}). Since these algorithms are not intended for the streaming model, for the comparison, we implement two stream version of VARIANT: \textbf{VARIANT+Rebuild} and \textbf{VARIANT+Ins/Del} where VARIANT+Rebuild is to re-compute LIS from scratch in each time window and VARIANT+Ins/Del is to apply our update method in VARIANT.

We include the classical algorithm computing LIS based on dynamic programming (denoted as \textbf{DP}) in the comparative study. The standard DP LIS algorithm only computes the length of LIS and output a single LIS (not enumeration). To enumerate all LIS, we save all predecessors of each item when determining the maximum length of the increasing subsequence ending with it. 

\textbf{LISone}\cite{liswAlbert2004} computed LIS length and output an LIS in the sliding model. They  maintained the first row of Young's Tableaux when update happened. The length of the first row is exactly the LIS length of the sequence in the window. 

\vspace{-0.1in}
\begin{table}[!h]
\def\arraystretch{1}
\setlength{\tabcolsep}{0pt}
\centering

    \caption{Compaison between our method and the comparative ones} 
    \label{tab:comparative}

    \begin{small}
    \resizebox{!}{0.15\textwidth}
    {
    \begin{tabular}{|p{1.8cm}<{\centering}|p{1.4cm}<{\centering}|p{1.4cm}<{\centering}|p{2.2cm}<{\centering}|p{1.5cm}<{\centering}|p{1.0cm}<{\centering}|}
      \hline 
      \multirow{3}{*}{\textbf{Methods} } &{\textbf{Stream Model} }& \textbf{LIS  Enumeration} & \textbf{LIS with extreme weight} & \textbf{LIS with extreme gap} &\textbf{LIS length} \\ 
	  \hline
      Our Method & \cmark & \cmark & \cmark & \cmark & \cmark \\ 
      \hline
      LISSET & \cmark & \cmark & \xmark & \xmark & \cmark \\ 
      \hline
      MHLIS & \xmark & \xmark & \xmark & \cmark & \cmark \\ 
      \hline
      VARIANT & \xmark & \xmark & \cmark & \cmark & \cmark \\ 
      \hline
      DP & \xmark & \cmark & \xmark & \xmark & \cmark \\ 
      \hline
      LISone & \cmark & \xmark & \xmark & \xmark & \cmark \\ 
      \hline
    \end{tabular}
    }
    \end{small}
    \vspace{-0.1in}
\end{table}
\vspace{-0.1in}

\nop{straightforward way to compute LIS, which costs $O(w^2)$ time and $O(w)$ space where $w$ denotes the length of the sequence. DP computes only the length of LIS and it can figure out an LIS with an auxiliary array. To enumerate all LIS, we can just save all predecessors of each item when determine the maximum length of the increasing subsequence ending with it, which does not result in a higher time complexity while improves the space cost from $O(w)$ to $O(w^2)$.
}


\begin{table*}[t] \RawFloats

\setlength{\belowcaptionskip}{-2pt}
\begin{minipage}[t]{0.66\linewidth}
\def\arraystretch{1.2}
\setlength{\tabcolsep}{0.15pt}
\centering
\small
    \begin{small}
    \resizebox{\linewidth}{!}
    {
  	\begin{tabular}{|l|c|c|c|c|c|c|}
    \hline
    \multirow{2}{*}{\bfseries{Method}}  			   &
    \multirow{2}{*}{\bfseries{ \makecell{LIS \\ Enumeration}}}     &
    \multirow{2}{*}{\bfseries{\makecell{LIS with \\ max Weight}}} &
    \multirow{2}{*}{\bfseries{\makecell{LIS with \\ min Weight}}} &
    \multirow{2}{*}{\bfseries{\makecell{LIS with \\ max Gap}}} &
    \multirow{2}{*}{\bfseries{\makecell{LIS with \\ min Gap}}}  &
    \multirow{2}{*}{\bfseries{\makecell{LIS \\ Length}}}  
    \\

     &  &  &  &  & & 
    \\ \hline

    Our Method & $O(OUTPUT)$ & $O(OUTPUT)$ & $O(OUTPUT)$ & $O(w+OUTPUT)$ & $O(w+OUTPUT)$ & $O(OUTPUT)$ 
    \\ \hline
    LISSET & $O(OUTPUT)$ & -- & --  & -- & -- & $O(OUTPUT)$ 
    \\ \hline
    MHLIS  & -- & -- & --  & -- & $O(w+OUTPUT)$ & $O(OUTPUT)$ 
    \\ \hline
    VARIANT  & -- & $O(OUTPUT)$ & $O(OUTPUT)$  & $O(w+OUTPUT)$ & $O(w+OUTPUT)$ & $O(OUTPUT)$ 
    \\ \hline
    DP  & $O(OUTPUT)$ & -- & --  & -- & -- & $O(OUTPUT)$ 
    \\ \hline
    LISone  & -- & -- & --  & -- & -- & $O(OUTPUT)$ 
    \\ \hline
  	\end{tabular}
    }
    \vspace{0.05in}
    \caption{Theoretical Comparison on Online Query} 
    \label{tab:online:query}
    \end{small}
\end{minipage}%
\hspace{0.1in}
\begin{minipage}[t]{0.33\linewidth}
\def\arraystretch{1.3}
\setlength{\tabcolsep}{0.1pt}
\centering
\small
    \begin{small}
    \resizebox{0.95\linewidth}{!}
    {
    \begin{tabular}{|l|c|c|c|c|}
            \hline \multirow{2}{*}{\textbf{Methods} } &
            \multirow{2}{*}{\makecell{\textbf{Space} \\ \textbf{Complexity}} }  &
            \multicolumn{3}{c|}{\textbf{Time Complexity}} \\ \cline{3-5}
     & & \bfseries{Construction} & \bfseries{Insert} & \bfseries{Delete}  \\ \hline
      Our Method & $O(w)$ & $O(w \log w)$ & $O(\log w)$ & $O(w)$ \\ \hline
      LISSET & $O(w^2)$ & $O(w^2)$ & $O(w)$ & $O(w)$ \\ \hline
      MHLIS & $O(w)$ & $O(w \log w)$ & -- & -- \\ \hline
      VARIANT & $O(w)$ & $O(w \log w)$ & -- & -- \\ \hline
      DP & $O(w^2)$ & $O(w^2)$ & -- & -- \\ 
      \hline
	  LISone & $O(w^2)$ & $O(w^2)$ & $O(w)$ & $O(w)$ \\ 
	  \hline
    \end{tabular}
    }
    \end{small}
    \vspace{0.05in}
    \caption{Theoretical Comparison on Data Structure} 
    \label{tab:index:comp}
\end{minipage}%

\end{table*}


\vspace{-0.15in}
\subsection{Theoretical Analysis}\label{sec:theory}
\Paragraph{Data Structure Comparison.}
We compare the space, construction time and update time of our data structure against those of other works. The comparison results are presented in Table \ref{tab:index:comp}\footnote{The time complexities in Table \ref{tab:index:comp} are based on the worst case analysis. We also studies the time complexity of our method over sorted sequence in Appendix \citeAPPsorted
\submitshow{of the full version of this paper \cite{li2016lis}}
.
} 
. Note that the data structures in the comparative approaches cannot support all LIS-related problems, while our data structure can support both LIS enumeration and constraint-based LIS problems (Table \ref{tab:comparative}) in a uniform manner.

Since  MHLIS, VARIANT or DP does not address data structure maintenance issue, they cannot be used in the streaming model directly. To enable comparison of the three algorithms, we re-construct the data structure in each time window. In this case, the time complexity of the data structure maintenance in MHLIS, VARIANT and DP are the same with their construction time.

We assume that $w$ is the time window length. Table \ref{tab:index:comp} shows that our approach is better or not worse than any comparative work on any metric. Our data structure is better than LISSET on both space and the construction time complexity. Furthermore, the insertion time $O(\log w)$ in our method is also better than the time complexity $O(w)$ in LISSET. As mentioned earlier, none of MHLIS, VARIANT or DP addresses the data structure update issue. Thus, they need $O(w \log w)$ ($O(w^2)$ for DP) time to re-build data structure in each time window.  Obviously, ours is better than theirs.

\Paragraph{Online Query Algorithm Comparison.}
Table \ref{tab:online:query} shows online query time complexities of different approaches. As we know, the online query response time in the data stream model consists of both online query time and the data structure maintenance time. Since the data structure maintenance time has been presented in Table \ref{tab:index:comp}, we only show the online query algorithm' time complexities in Table \ref{tab:online:query}. We can see that, our online query time complexities are the same with the comparative ones. However, the data structure update time complexity in our method is better than others. Therefore, our overall query response time is  better than the comparative ones from the theoretical perspective.

\vspace{-0.1in}
\subsection {Experimental Evaluation}\label{sec:experimenteva}
\vspace{-0.05in}
\label{sec:experiment}
We evaluate our solution against the comparative approaches. All methods, including comparative methods, are implemented by C++ on Eclipse(4.5.0) and all codes are compiled by g++(5.2.0) under default settings. Each comparative method are implemented according the corresponding paper with our best effort. The experiments are conducted in Window 8.1 on a machine with an Intel(R) Core(TM) i7-4790 3.6GHz CPU and a 8G memory. All codes, including those for comparative methods are provided in Github \cite{lisgit}.

\Paragraph{Dataset.}
\label{sec:setup}
We use four datasets  in our experiments: real-world stock data, gene sequence datasets, power usage data and synthetic data. 
 The stock data is about the historical open prices of Microsoft Cooperation in the past two decades\footnote{\url{http://finance.yahoo.com/q/hp?s=MSFT&d=8&e=13&f=2015&g=d&a=2&b=13&c=1986&z=66&y=66}}, up to 7400 days.
 The gene datasets is a sequence of 4,525 matching positions, which are computed over the BLAST output of mRNA sequences\footnote{\url{ftp://ftp.ncbi.nih.gov/refseq/B_taurus/mRNA_Prot/}} against a gene dataset\footnote{\url{ftp://ftp.ncbi.nlm.nih.gov/genbank/}} according to the process in \cite{blastalignment}.
The power usage dataset\footnote{\url{http://www.cs.ucr.edu/~eamonn/discords/power_data.txt}} is a public power demand dataset used in \cite{Keogh2007powerdat}. It measured the power consumption for a Dutch research facility in 1997 which contains 35,040 power usage value.
The synthetic dataset \footnote{\url{https://archive.ics.uci.edu/ml/datasets/Pseudo+Periodic+Synthetic+Time+Series}} is a time series benchmark \cite{synthekeogh1999indexing} that contains one million data points(See \cite{synthekeogh1999indexing} for the details of data generation).
Due to the space limits, we only present the experimental results over stock dataset in this section and the counterparts over the other three datasets are available in Appendix \citeAPPexp
\submitshow{
of the full version of this paper \cite{li2016lis}
}.

\input{\expfolder stock.tex}

\Paragraph{Data Structure Comparison.}
In this experiment, we compare the data structures of different approaches on space cost, construction time and update time.

The \textbf{space cost} of each method is presented in Figure \ref{fig:index:space}. Our method costs much less memory than LISSET, DP and LISone while slightly more than that of MHLIS and VARIANT, which results from the extra cost in our QN-List to support efficient maintenance and computing LIS with constraints. Note that none of the comparative methods can support both LIS enumeration and LIS with constraints; but our QN-List can support all these LIS-related problems in a uniform manner (see Table \ref{tab:comparative}).


We \textbf{construct} each data structure five times and present their average constuction time in Figure \ref{fig:index:construct}. Similarly, our method runs much faster than that of LISSET, DP and LISone, since our construction time is linear but LISSET , DP and LISone have the square time complexity (see Table \ref{tab:index:comp}). Our construction time is slightly slower than VARIANT and faster than MHLIS, since they have the same construction time complexity (Table \ref{tab:index:comp}).

None of MHLIS, VARIANT or DP addresses \textbf{maintenance} issue. To enable comparison, we implement two stream versions of MHLIS and VARIANT. The first is to rebuild the data structure in each time window(MHLIS+Rebuild, VARIANT+Rebuild). The second version is to apply our update idea into MHLIS and VARIANT (MHLIS+Ins/Del, VARIANT+ Ins/Del). The maintenance efficiency is measured by the throughput, i.e., the number of items to be handled in per second without answering any query. Figure \ref{fig:index:update} shows that our method is obviously faster than comparative approaches on data structure update performance.

\Paragraph{LIS Enumeration.}
We compare our method on LIS Enumeration with LISSET and DP, where LISSET is the only previous work that can be used to enumerate LIS under the sliding window model. We report the average query response time in Figure \ref{fig:enumtime}. In the context of data stream, the overall query response time includes two parts, i.e., the data structure update time and online query time. Our method is faster than both LISSET and DP, and with the increasing of time window size, the performance advantage is more obvious.

\Paragraph{LIS with Max/Min Weight.}
We compare our method with VARIANT on LIS with maximum/minimum weight. VARIANT \cite{variant2009} is the only previous work on LIS with maximum/minimum weight. Figures \ref{fig:minwtime} and \ref{fig:maxwtime} confirms the superiority of our method with regard to VARIANT(VARIANT+Rebuild and VARIANT+Ins/Del).

\Paragraph{LIS with Max/Min Gap.}
There are two previous proposals studying LIS with maximum/minimum gaps. VARIANT \cite{variant2009} computes the LIS with maximum and minimum gap while MHLIS \cite{minheight2009} only computes LIS with the minimum gap. The average running time in each window of different methods are  in Figures \ref{fig:maxhtime} and \ref{fig:minhtime}.
We can see that our method outperforms other methods significantly.

\Paragraph{LIS length (Output a single LIS).}
We compare our method with LISone \cite{liswAlbert2004} on outputting an LIS (The length comes out directly). Since other comparative methods can easily support outputting an LIS, we also add other comparative works into comparison. Figure \ref{fig:lisone} shows that our method is much more efficient than comparative methods on computing LIS length and output a single LIS.

\vspace{-0.10in}
\section {Conclusions}
\label{sec:con}
In this paper, we propose a uniform data structure to support enumerating
all LIS and LIS with specific constraints over sequential data stream.
The data structure built by our algorithm only takes linear space and can be updated only in linear time, which make our approach practical in handling high-speed sequential data streams. To the best of our knowledge, our work is the first to proposes a uniform solution (the same data structure and computing framework)
to address all LIS-related issues in the data stream scenario. Our method outperforms the state-of-the-art work not only theoretically, but also empirically in both time and space cost.

\vspace{-0.10in}

\section*{Acknowledgment}
\vspace{-0.05in}
This work was supported by The National Key Research and Development Program of China under grant 2016YFB1000603 and NSFC under grant 61622201, 61532010 and 61370055. Lei Zou is the corresponding author of this paper. 





\small
\bibliographystyle{abbrv}
\bibliography{Main}  

\clearpage
\appendix


\section{More Experiments}
\label{sec:power}
Experimental results on the gene data, power usage data and synthetic dataset are presented in Figure \ref{fig:exp:gene}--\ref{fig:exp:synthetic}.
\arxivshow{(This is a full version\cite{li2016lis}).}

\input{\expfolder gene.tex}

\input{\expfolder power.tex}

\input{\expfolder synthetic.tex}

\section{Proofs of Lemmas \& Theorems}
\label{sec:appendix:lemmaproofs}

\newcommand{\ilemma}[1]{L\scriptsize EMMA\normalsize\ \ref{#1}}

\newcommand{\mylemma}[3]{
\textbf{Proof of Lemma \ref{#1}}


\vspace{-0.06in}
\begin{proof} 
#3
\end{proof}
\vspace{-0.05in}
}

\nop{
\mylemma 
{lem:pred}
{ 
Given a sequence $\alpha$ and $a_i $ $\in \alpha$ where $RL_{\alpha}(a_i)$ $= k$, then $\forall$ $s = $ \{$a_{i_1}$, $a_{i_2}$,...,$a_{i_k}$\} where $s$ is an MIS of $a_i$ ($a_{i_k}$), then $RL_{\alpha}(a_{i_{k-1}})$ $= (k-1)$ and $a_{i_{k-1}}$ $\seqpar{\alpha} a_i$. Recursively, we have
\[RL_{\alpha}(a_{i_{k'}}) = k' \land a_{i_k'} \seqpar{\alpha} a_{i_{k'+1}} \ (1 \leq k' < k)\]
}
{ 
\vspace{-0.05in}
$a_{i_{k-1}} \seqpar{\alpha}$ $a_{i_k}$ since $s$ is an increasing subsequence, and hence, $RL_{\alpha}(a_{i_{k-1}})$ $< RL_{\alpha}(a_{i_k})$ $= k$. 
Besides the first $k-1$ items form an increasing subsequence ending with $a_{i_{k-1}}$ then $RL_{\alpha}(a_{i_{k-1}})$ $\geq k-1$. 
Thus, $RL_{\alpha}(a_{i_{k-1}})$ $= k-1$. Recursively, $RL_{\alpha}(a_{i_{k'}})$ $ = k'$ $\land$ $a_{i_{k'}} \seqpar{\alpha} a_{i_{k'+1}}$ $(1 \leq k' < k)$.
}
} 

\mylemma 
{lem:consecutive}
{ 
 Let $\alpha=$ $\{a_1$ $,a_2,...,a_w\}$ be a sequence. Consider two items $a_i$ and $a_j$ in a horizontal list $\mathbb{L}_{\alpha}^t$ (see Definition \ref{def:hlist}).  

\begin{enumerate}
\item 
	If $t = 1$ then $a_i$ has no predecessor and if $t > 1$ then $a_i$ has at least one predecessor and all predecessors of $a_i$ are located in $\mathbb{L}_{\alpha}^{t-1}$.
\item
	If $rn_{\alpha}(a_j) = a_i$, then $i > j$ and $a_i < a_j$. If $ln_{\alpha}(a_j) = a_i$, then $ i < j$ and $a_i > a_j$. Items in a horizontal list $\mathbb{L}_{\alpha}^t$ ($t=1,\cdots,m$) are monotonically decreasing while their subscripts (i.e., their original position in $\alpha$) are monotonically increasing from the left to the right. And no item is compatible with any other item in the same list.
\item
	$\forall a_i$ $\alpha$, all predecessors of $a_i$ form a nonempty consecutive block in $\mathbb{L}_{\alpha}^{t-1}$ ($t > 1$).
\item
	$un_\alpha(a_i)$ is the rightmost predecessor of $a_i$ in $\mathbb{L}_{\alpha}^{t-1}$ ($t > 1$).
\end{enumerate}
}
{ 

	1. It holds according to the definition of predecessor. 

	2. Since $a_i$ $= rn_{\alpha}(a_j)$, $a_i$ is after $a_j$ and $i > j$. Besides, $a_i$ $< a_j$, otherwise if $a_i$ $> a_j$, then $a_j \seqpar{\alpha}$ $a_i$ and the rising length of $a_i$  and $a_j$ could not be the same, which contradicts the definition of right neighbor.

	3. For $t>1$, predecessors of $a_i$ locates in $\mathbb{L}_{\alpha}^{t-1}$. Assuming that $a_{k_1}$ and $a_{k_2}$ are two predecessor of $a_i$ in $\mathbb{L}_{\alpha}^{t-1}$ and $a_{k'}$ is an item between $a_{k_1}$ and $a_{k_2}$ in $\mathbb{L}_{\alpha}^{t-1}$. We know that items in $\mathbb{L}_{\alpha}^{t-1}$ are decreasing from the left to the right while their subscripts are increasing and consequently, $a_{k'}$ $< a_{k_1}$ $< a_i$ and $k'$ $< k_2$ $< i$. Hence, $a_{k'}$ $\seqpar{\alpha} a_i$. Besides, $a_{k'}$ $\in$ $\mathbb{L}_{\alpha}^{t-1}$ and $a_{k'}$ must be a predecessor of $a_i$. Thus, items between predecessors of $a_i$ in $\mathbb{L}_{\alpha}^{t-1}$ are also predecessors of $a_i$ and all predecessors of $a_i$ form a consecutive block.
	
	4. (By contradiction) According to the definition of up neighbor, $un_{\alpha}(a_i)$ is before $a_i$ and $RL_{\alpha}(un_{\alpha}(a_i))$ $= RL_{\alpha}(a_i)-1$.  Assuming that $un_{\alpha}(a_i)$ is not a predecessor of $a_i$, then $un_{\alpha}$ $> a_i > a_j$ where $a_j$ is a predecessor of $a_i$. Since $un_{\alpha}(a_i)$ is nearer to $a_i$ than $a_j$, $un_{\alpha}(a_i)$ is at the right of $a_j$. Thus, $a_j \seqpar{\alpha}$ $un_{\alpha}(a_i)$ and $RL_{\alpha}(un_{\alpha}(a_i))$ $\geq RL_{\alpha}(a_i)+1$ $= RL_{\alpha}(a_i)$ which contracts the definition of up neighbor. Thus, $un_{\alpha}(a_i)$ is the nearest predecessor of $a_i$. Besides, according to Statement 2, $un_{\alpha}(a_i)$ is the rightmost predecessor of $a_i$ in $\mathbb{L}_{\alpha}^{t-1}$.
}

\mylemma 
{lem:property}
{ 
Given a sequence $\alpha$ and its corresponding index $\mathbb{L}_{\alpha}$, for $a_i \in \mathbb{L}_{\alpha}^{t}$, for any $1\leq t \leq m$.
\begin{enumerate}
	\item 
		$RL_{\alpha}(a_i)$ = $t$  if and only if $a_i$ $\in$ $\mathbb{L}_{\alpha}^{t}$, namely:
		\[\forall i \in [1, w] ,RL_{\alpha}(a_i) = t \Leftrightarrow a_i  \in  \mathbb{L}_{\alpha}^{t}\]
	\item 
		$un_\alpha(a_i)$(if exists) is the rightmost item in $\mathbb{L}_{\alpha}^{t-1}$ which is before $a_i$ in sequence $\alpha$.
	\item 
		$dn_\alpha(a_i)$(if exists) is the rightmost item in $\mathbb{L}_{\alpha}^{t+1}$ which is before $a_i$ in sequence $\alpha$. Besides, $dn_{\alpha}(a_i)$ $> a_i$.
\end{enumerate}	
}
{ 
We just prove that $dn_{\alpha}(a_i)$ $> a_i$ since the other claims hold obviously according to the definitions of horizontal list, up neighbor and down neighbor respectively.  Assuming that $a_j$ $\in \mathbb{L}_{\alpha}^{t}$ is a predecessor of $dn_{\alpha}(a_i)$, then $a_j$ is before $dn_{\alpha}(a_i)$. Hence, $a_j$ is before $a_i$ since $dn_{\alpha}(a_i)$ is before $a_i$. Thus, $a_j$ is at the left of $a_i$ in $\mathbb{L}_{\alpha}^{t}$ and $a_j$ $> a_i$(Lemma \ref{lem:consecutive}(\ref{item:decreasing})). Besides, $dn_{\alpha}(a_i)$ $> a_j$, thus $dn_{\alpha}(a_i)$ $> a_i$.
}

\mylemma 
{lem:tailsorted}
{ 
Given a sequence $\alpha$ and its corresponding index $\mathbb{L}_{\alpha}$, the following equation holds
\[Tail(\mathbb{L}^i_{\alpha}) \leq Tail(\mathbb{L}^j_{\alpha}) \leftrightarrow i < j\], where $1\leq i < j \leq m$ and $m$ is the number of horizontal lists in $\mathbb{L}_{\alpha}$.
}
{ 
Consider $Tail(\mathbb{L}_{\alpha}^{i})$ and $Tail(\mathbb{L}_{\alpha}^{i+1})$ where $1 \leq i < m$, assuming that $a_p$ is a predecessor of $Tail(\mathbb{L}_{\alpha}^{i+1})$ then $a_p$ $< Tail(\mathbb{L}_{\alpha}^{i+1})$. Besides, $Tail(\mathbb{L}_{\alpha}^{i})$ $\leq a_p$ since items in $\mathbb{L}_{\alpha}^{i+1}$ is decreasing from the left to the right, thus, $Tail(\mathbb{L}_{\alpha}^{i})$ $\leq a_p$ $< Tail(\mathbb{L}_{\alpha}^{i+1})$. Apparently, this claim holds. 
}

\mylemma 
{theorem:nocross}
{ 
Let $\alpha=\{a_1,...,a_w\}$ be a sequence and $\mathbb{L}_{\alpha}$ be its corresponding index. Let $m$ be the number of horizontal lists in $\mathbb{L}_{\alpha}$. Let $a_i$ and $a_j$ be two items in $\mathbb{L}_{\alpha}^t, t \geq 1$. If $a_i$ is on the left of $a_j$, $un^{k}_{\alpha}(a_i)=un^{k}_{\alpha}(a_j)$ or $un^{k}_{\alpha}(a_i)$ is on the left of $un^{k}_{\alpha}(a_j)$, for every $0\leq k< t$.

}
{ 
If $t = 1$, $un_{\alpha}^{0}(a_i) = a_i$ is certainly on the left of $un_{\alpha}^{0}(a_j)$. If $t > 1$, $un_{\alpha}(a_i)$ is before $a_i$ in $\alpha$. Since $a_i$ is on the left of $a_j$, $a_i$ is certainly before $a_j$ in $\alpha$ (Lemma \ref{lem:consecutive}(\ref{item:decreasing})), hence, $un_{\alpha}(a_i)$ is also before $a_j$ in $\alpha$. While, $un_{\alpha}(a_j)$ is the rightmost item in $\mathbb{L}_{\alpha}^{t-1}$ who is before $a_j$ (Lemma \ref{lem:property}(\ref{item:unrightmost})). Thus, $un_{\alpha}(a_i)$ is either $un_{\alpha}(a_j)$ or an item on the left of $un_{\alpha}(a_j)$. Recursively,  for every $0\leq k < t$, $un_{\alpha}^{k}$ is either $un_{\alpha}^{k}(a_j)$ or an item on the left of $un_{\alpha}^{k}(a_j)$
}

\mylemma 
{lem:dnleftunright}
{ 
  Given a sequence $\alpha$ and corresponding $\mathbb{L}_{\alpha}$,
  \begin{enumerate}
  \item
  $\forall a_i \in$ $Left(\mathbb{L}_{\alpha}^{t})$, $dn_{\alpha}(a_i)$ (if exists) $\in$ $Left(\mathbb{L}_{\alpha}^{t+1})$.
  \item $\forall a_i \in$ $Right(\mathbb{L}_{\alpha}^{t+1})$, $un_{\alpha}(a_i)$ (if exists) $\in$ $Right(\mathbb{L}_{\alpha}^{t})$.
  \end{enumerate}
}
{ 

	  1. Let $a_h$ $ = Head(Right(\mathbb{L}_{\alpha}^{t}))$ and $a_t$ $=Tail(Left(\mathbb{L}_{\alpha}^{t+1}))$. $dn_{\alpha}(a_i)$ is before $a_i$ in $\alpha$ (Lemma \ref{lem:property}(\ref{item:dnrightmost})), while $a_i$ is before $a_h$ in $\alpha$ according to the horizontal adjustment, thus, $dn_{\alpha}(a_i)$ is before $a_h$ in $\alpha$. Also according to the horizontal adjustment, $dn_{\alpha}(a_h) = a_t$, $a_t$ is the rightmost item in $\mathbb{L}_{\alpha}^{t+1}$ who is before $a_h$ in $\alpha$. Thus, $dn_{\alpha}(a_i)$ is either $a_t$ or some item on the left of $a_t$. Since $a_t$ is the tail item of $Left(\mathbb{L}_{\alpha}^{t+1})$, $dn_{\alpha}(a_i) \in Left(\mathbb{L}_{\alpha}^{t+1})$.

     2. Assuming that $a_h$ and $a_t$ are $Head(Right(\mathbb{L}_{\alpha}^{t}))$  and $Tail(Left(\mathbb{L}_{\alpha}^{t+1}))$ respectively. $a_t$ is the rightmost item in $\mathbb{L}_{\alpha}^{t+1}$ that is before $a_h$ in $\alpha$ since $dn_{\alpha}(a_h) = a_t$ (Lemma \ref{lem:property}(\ref{item:dnrightmost})). while $a_i$ is on the right of $a_t$ in $\mathbb{L}_{\alpha}^{t+1}$, $a_h$ must be before $a_i$ in $\alpha$. Since $un_{\alpha}(a_i)$ is the rightmost item in $\mathbb{L}_{\alpha}^{t}$ that is before $a_i$ in $\alpha$, $un_{\alpha}(a_i)$ can not be on the left of $a_h$. Hence,  $un_{\alpha}(a_i) \in$ $Right(\mathbb{L}_{\alpha}^{t})$.
}

\mylemma 
{lem:vertical}
{ 
Let $\alpha= \{a_1,a_2,\cdots,a_w)$ be a sequence. Let $\mathbb{L}_{\alpha}$ be its corresponding index and $m$ be the total number of horizontal lists in $\mathbb{L}_{\alpha}$. Let $\alpha^- = \{a_2,\cdots, a_w\}$ be obtained from $\alpha$ by deleting $a_1$. Consider an item $a_i \in \mathbb{L}_{\alpha^{-}}^{t}$, where $1\leq t \leq m$. According to the horizontal list adjustment, there are two cases for $a_i$: $a_i$ is from $Left(\mathbb{L}_{\alpha}^{t+1})$ or $a_i$ is from $Right(\mathbb{L}_{\alpha}^{t})$. Then, the following claims hold:

	\begin{enumerate}
	\item Assuming $a_i$ is from $Left(\mathbb{L}_{\alpha}^{t+1})$
		\begin{enumerate}
		\item \label{item:dnremain}
			$dn_{\alpha^-}(a_i) = dn_{\alpha}(a_i)$ (i.e., the down neighbor do not change).
		\item \label{item:unnotx}
			Let $x$ be the rightmost item of $Left(\mathbb{L}_{\alpha}^{t})$.

		if $un_{\alpha}(a_i) \neq x$, $un_{\alpha^-}(a_i) = un_{\alpha}(a_i)$ (i.e., the up neighbor do not change).
		
		
		\end{enumerate}
		
		\item
			Assuming $a_i$ is from $Right(\mathbb{L}_{\alpha}^{t})$
		
		\begin{enumerate}
		\item \label{item:unremain}
			$un_{\alpha^-}(a_i) = un_{\alpha}(a_i)$ (i.e., the up neighbor do not change).
		\item \label{item:dnnoty}
			Let $y$ be the rightmost item of $Left(\mathbb{L}_{\alpha}^{t+1})$.
		
		if $dn_{\alpha}(a_i) \neq y$, $dn_{\alpha^-}(a_i) = dn_{\alpha}(a_i)$ (i.e., the up neighbor do not change)
		\end{enumerate}
	
	\end{enumerate}

}
{ 
	
	1. If $a_i$ is from $Left(\mathbb{L}_{\alpha}^{t+1})$

		   (a) Let $a_d^- = dn_{\alpha^-}(a_i)$, if $a_d^-$ exists, the following three claims holds:

			  \ \ \ \ i. $a_d^-$ is before $a_i$ in $\alpha$(Also $\alpha^-$):\ \   this holds according to the definition of down neighbor.
		
		      \ \ \ \ ii. $a_d^-$ $\in Left(\mathbb{L}_{\alpha}^{t+2})$:\ \  since $a_d^-$ $\in \mathbb{L}_{\alpha^-}^{t+1}$, 
	          $a_d^-$ comes from either $Left(\mathbb{L}_{\alpha}^{t+2})$ or $Right(\mathbb{L}_{\alpha}^{t+1})$ (according to the horizontal update method), however, all items in $Right(\mathbb{L}_{\alpha}^{t+1})$ are after $a_i$ in $\alpha$ since $a_i$ comes from $Left(\mathbb{L}_{\alpha}^{t+1})$, hence, $a_d^-$ can only come from $Left(\mathbb{L}_{\alpha}^{t+2})$.
		
		      \ \ \ \ iii. $a_d^-$ is exactly $dn_{\alpha}(a)$:\ \  since $a_d^-$ $\in Left(\mathbb{L}_{\alpha}^{t+2})$, if $a_d^-$ is not $Tail(Left(\mathbb{L}_{\alpha}^{t+2}))$ , then $rn_{\alpha}(a_d^-)$ $\in Left(\mathbb{L}_{\alpha}^{t+2})$ is exactly $rn_{\alpha^-}(a_d^-)$ (According to the horizontal update) and $a_d^-$ is the rightmost item in $\mathbb{L}_{\alpha}^{t+2}$ who is before $a$ in $\alpha$, thus, $a_d^-$ is $dn_{\alpha}(a)$ (Lemma \ref{lem:property}(\ref{item:dnrightmost})); if $a_d^-$ is the tail item of $Left(\mathbb{L}_{\alpha}^{t+2})$, then $a_d^-$ is the rightmost item in $Left(\mathbb{L}_{\alpha}^{t+2})$ who is before $a$ in $\alpha$, and $dn_{\alpha}(a_i)$ can only be $a_d^-$ because we know that $dn_{\alpha}(a_i)$ $\in Left(\mathbb{L}_{\alpha}^{t+2})$(Lemma \ref{lem:dnleftunright}).
		    
		        Besides, if $a_d^-$ does not exist, there is no item in $\mathbb{L}_{\alpha^-}^{t+1}$ who is before $a_i$ in $\alpha^-$, which means there is no item in $Left(\mathbb{L}_{\alpha}^{t+2})$(Also $\mathbb{L}_{\alpha}^{t+2}$) who is before $a_i$ in $\alpha$, namely, $dn_{\alpha}(a)$ does not exist, either. Above all, $dn_{\alpha^-}(a_i) = dn_{\alpha}(a_i)$.

		(b)	if $un_{\alpha}(a_i)$ is not $x$, then $un_{\alpha}(a_i)$ can only be an item on the left of $x$ in $Left(\mathbb{L}_{\alpha}^{t-1})$ (Lemma \ref{lem:dnleftunright}). Then $rn_{\alpha}(un_{\alpha}(a_i))$ must be the same as $rn_{\alpha^-}(un_{\alpha}(a_i))$ according to our horizontal adjustment. Thus, $un_{\alpha}(a_i)$ is still the rightmost item in $\mathbb{L}_{\alpha^-}^{t-2}$ who is before $a_i$ in $\alpha$(Also $\alpha^-$), namely, $un_{\alpha}(a_i)$ is exactly $un_{\alpha^-}(a_i)$.
		
		
		2. If $a_i$ is from $Right(\mathbb{L}_{\alpha}^{t})$

		 (a) If $t = 1$,  $un_{\alpha}(a_i)$ = $un_{\alpha^-}(a_i) = NULL$ according to our horizontal adjustment. If $t > 1$, $un_{\alpha}(a_i) \in$ $Right(\mathbb{L}_{\alpha}^{t-1})$ (Lemma \ref{lem:dnleftunright}), thus, $un_{\alpha}(a_i) \in$ $\mathbb{L}_{\alpha^-}^{t-1}$. $rn_{\alpha}(un_{\alpha}(a_i))$(if exist) is after $a_i$ in $\alpha$, hence, $rn_{\alpha^-}(un_{\alpha}(a_i))$ is after $a_i$ in $\alpha^-$ because $rn_{\alpha}(un_{\alpha}(a_i))$ and $rn_{\alpha^-}(un_{\alpha}(a_i))$ is the same item(or both of them don't exist) according to the horizontal adjustment. Thus, $un_{\alpha}(a_i)$ is the rightmost item in $\mathbb{L}_{\alpha^-}^{t-1}$ whose position is before $a_i$ in $\alpha$, namely, $un_{\alpha}(a_i)$ is exactly $un_{\alpha^-}(a_i)$.

		(b) Since $y$ is before $Head(Right(\mathbb{L}_{\alpha}^{t}))$ in $\alpha$, then $y$ is also before $a_i$ in $\alpha$. Besides, $dn_{\alpha}(a_i)$ is the rightmost item in $\mathbb{L}_{\alpha}^{t+1}$ who is before $a_i$, then $dn_{\alpha}(a_i)$ is either $y$ or an item on the right of $y$. If $dn_{\alpha}(a_i)$ is not $y$, $dn_{\alpha}(a_i)$ must be  in $Right(\mathbb{L}_{\alpha}^{t+1})$. Hence, $rn_{\alpha^-}(dn_{\alpha}(a_i))$ will be the same as $rn_{\alpha}(dn_{\alpha}(a_i))$, thus, $dn_{\alpha}(a_i)$ is still the rightmost item in $\mathbb{L}_{\alpha^-}^{t+1}$ who is before $a_i$ in $\alpha$(Also $\alpha^-$), namely, $dn_{\alpha^-}(a_i) = dn_{\alpha}(a_i)$.
}




\newcommand{\mytheorem}[3]{
\textbf{Proof of Theorem \ref{#1}}


\vspace{-0.05in}
\begin{proof} 
#3
\end{proof}
\vspace{-0.06in}
}

\nop{
\mytheorem 
{theorem:pred}
{ 
Given a sequence $\alpha$ and two items $a_i$, $a_j$ $\in \alpha$,
\begin{enumerate}
\item 
	If $a_j$ is a predecessor of $a_i$, $\forall s^{\prime}$ $\in MIS_{\alpha}(a_j)$, the sequence $s$ $= s^{\prime}\oplus a_i$\footnote{$\oplus$ means appending item $a_i$ to the end of $s^{\prime}$} is an increasing subsequence ending with $a_i$ whose length equals to $RL_{\alpha}(a_i)$, namely, $s$ $\in MIS_{\alpha}(a_i)$.
\item
	$\forall s$ $\in MIS_{\alpha}(a_i)$, assuming that $a_j$ is the item ahead of $a_i$ in $s$, then $a_j$ is a predecessor of $a_i$ and the sequence $s^{\prime}= s \backslash a_i$ \footnote{$\backslash$ means deleting the last item $a_i$ from $s$} is an increasing subsequence ending with $a_j$ whose length equals to $RL_{\alpha}(a_j)$, namely,  $s^{\prime}$ $\in MIS_{\alpha}(a_j)$.
\end{enumerate}
}
{ 
	1. Apparently, sequence $s = $ $s^{\prime}\oplus a_i$ is an increasing subsequence ending with $a_i$ since $a_j$ $\seqpar{\alpha} a_i$. Besides, $|s|$ $= |s^{\prime}|+1$ $= RL_{\alpha}(a_j)+1$ $=RL_{\alpha}(a_i)$, and thus, $s$ $\in MIS_{\alpha}(a_i)$.

	2. It can be easily proved with the definition of predecessor and increasing subsequence.
}
} 

\nop{
\mytheorem 
{theorem:spacecostDAG}
{ 
Given a sequence $\alpha$ of length $w$, the graph $G(\alpha)$ defined in Definition \ref{def:daggraph} has $O(w^2)$ space.
}
{ 
For each item $a_i$ in $\alpha$, the number of predecessors of $a_i$ is $O(w)$ since any item before $a_i$ may be a predecessor of $a_i$ and space complexity of $G(\alpha)$ is $O(w^2)$. In a most extreme case where the first $w/2$ items and the last $w/2$ items are decreasing, respectively, while the first item of $\alpha$ is less than the last item, then for each item $a_i$ where $w/2 < i$, all of the first $w/2$ items of $\alpha$ are predecessors of $a_i$ and the space cost is $w^2/4$.
} 
} 


\mytheorem 
{theorem:indexspace}
{ 
 Given a sequence $\alpha=\{a_1,...,a_w\}$, the index $\mathbb{L}_{\alpha}$ defined in Definition \ref{def:orthogonal} has $O(w)$ space.
}
{ 
Each item in sequence $\alpha$ has at most four neighbors in $\mathbb{L}_{\alpha}$ (Some neighbors of an item can be NULL). So the space cost is $O(w)$.
}

\mytheorem 
{theorem:timeconstruct}
{ 
Let $\alpha = \{a_1,a_2,...,a_w\}$ be a sequence with $w$ items. Then we have the following:

\begin{enumerate}
\item   The time complexity of  Algorithm \ref{alg:insertelement} for inserting one item is $O(log w)$.

\item The time complexity of Algorithm \ref{alg:buildingindex} for building the whole corresponding index is $O(wlogw)$.
\end{enumerate}
}
{ 
1. Consider Algorithm \ref{alg:insertelement}. Binary search costs $O(log m)$ time, where $m$ denotes the number of horizontal lists in $\mathbb{L}_{\alpha}$. All other operations cost $O(1)$ time. Since $m \leq w$, so the time complexity of Algorithm \ref{alg:insertelement} is $O(logw)$.

2. Consider Algorithm \ref{alg:buildingindex}. Algorithm \ref{alg:buildingindex} loops on Algorithm \ref{alg:insertelement} for $O(w)$ times and Algorithm \ref{alg:insertelement} costs $O(logw)$ time. Thus the time complexity of Algorithm \ref{alg:buildingindex} for building the corresponding index is $O(wlogw)$.
}

\mytheorem 
{theorem:allLIS}
{ 
The time complexity of Algorithm \ref{alg:lisenum} to enumerate all LIS in $\alpha$ is $O($\emph{OUTPUT}$)$, where \emph{OUTPUT} is the sum of all LIS lengths.
}
{ 
The correctness of the theorem is based on the following simple facts: (1) Every item pushed into the stack and popped out from the stack is printed into a LIS at least once. Hence, associated cost is at most $3$ times of the output size. (2) Items scanned but not pushed into the stack (i.e., items that are on the left of all predecessors) occur at most once at each level $\mathbb{L}_{\alpha}^{k}, 1 \leq k \leq m$. Hence, associated cost is at most one time of the output size. So the total cost is at most $4$ times the output size.
}

\mytheorem 
{lem:deletion}
{ 
Let $\alpha=$ \{$a_1,a_2,\cdots,a_w$\} be a sequence. Let $\mathbb{L}_{\alpha}$ be its corresponding index and $m$ be the total number of horizontal lists in $\mathbb{L}_{\alpha}$. Let $\alpha^- = \{a_2,\cdots, a_w\}$ be obtained from $\alpha$ by deleting $a_1$. Then for any $a_i, 2\leq i\leq m \in \mathbb{L}_{\alpha}^ t, 1\leq t \leq m$, we have the following:
\begin{enumerate}
\item If $un^{t-1}_{\alpha}(a_i)$ is $a_1$, then $RL_{\alpha^-}(a_i) = RL_{\alpha}(a_i)-1$.
\item If $un^{t-1}_{\alpha}(a_i)$ is not $a_1$, then $RL_{\alpha^-}(a_i) = RL_{\alpha}(a_i)$.
\end{enumerate}
where $un^{t-1}_{\alpha}(a_i)$ is defined in Definition \ref{def:khopup}.
}
{ 
First note that, any increasing subsequence of $\alpha^-$ that ends with $a_i$ is also an increasing subsequence of $\alpha$ that ends with $a_i$. Therefore, $RL_{\alpha^-}(a_i) \leq RL_{\alpha}(a_i)$. On the other hand, $a_i$ can only be head item of any increasing subsequence since $a_1$ is the first item of $\alpha$, thus, once $a_1$ is removed, the length of increasing subsequence ending with $a_i$ in $\alpha$ can at most decrease by $1$. Therefore, $RL_{\alpha^-}(a_i) \geq RL_{\alpha}(a_i)-1$.

1. Consider the case $un^{t-1}_{\alpha}(a_i)$ is $a_1$. $a_i$ is in $\mathbb{L}_{\alpha}^t$.
Assuming that $s$ $\in MIS_{\alpha}(a_i)$ where $s = $ \{$a_{i_{t-1}}$, $\cdots$,$a_{i_{1}}$, $a_{i_{0}} = a_i$\}, $a_{i_{1}}$ is a predecessor of $a_i$ in $\mathbb{L}_{\alpha}^{t-1}$. Consider another sequence $s'$ where $s' = $ $(un^{t-1}_{\alpha}(a_i)$, $\cdots$,$un^{1}_{\alpha}(a_i), un^{0}_{\alpha}(a_i))$. Obviously, $s'$ $\in MIS_{\alpha}(a_i)$ and the item $un^{1}_{\alpha}(a_i)$ is also in $\mathbb{L}_{\alpha}^ {t-1}$. According to Lemma \ref{lem:property}(\ref{item:unrightmost}), $a_{i_{1}}$ is on the left of $un^{1}_{\alpha}(a_i)$ (could be $un^{1}_{\alpha}(a_i)$ itself). Therefore, according to Lemma \ref{theorem:nocross},  $un_{\alpha}(a_{i_1})$ is on the left of $un_\alpha( un^{1}_{\alpha}(a_i))$, which is $un^{2}_{\alpha}(a_i)$. Note that, $a_{i_{2}}$ is a predecessor of $a_{i_{1}}$.
Hence, according to Lemma \ref{lem:property}(\ref{item:unrightmost}), $a_{i_{2}}$ is on the left of $un_{\alpha}(a_{i_1})$. So $a_{i_{2}}$ is on the left of $un^{2}_{\alpha}(a_i)$.
This argument continues and we have every $a_{i_{t-j}}$ is on the left of $un^{t-1}_{\alpha}(a_i)$ (could be the same item) for every $1\leq j<t$.
Thus, if $un^{t-1}_{\alpha}(a_i)$ is $a_1$, then each sequence in $MIS_{\alpha}(a_i)$ begins with $a_1$ and the rising length of $a_i$ must decrease by 1 after deleting $a_1$. Therefore $RL_{\alpha^-}(a_i) = RL_{\alpha}(a_i)-1$.

2. Consider the case $un^{t-1}_{\alpha}(a_i)$ is not $a_1$. $\beta= \{un^{t-1}_{\alpha}(a_i),\cdots,un^0_{\alpha}(a_i)\}$ is an increasing subsequence of ending with $a_i$ in $\alpha$. Since $un^{t-1}_{\alpha}(a_i) \neq a_1$, so $\beta$ is also an increasing subsequence of $\alpha^-$. Besides, $|\beta|$ is $RL_{\alpha}(a_i)$, therefore, we have $RL_{\alpha^-}(a_i) = RL_{\alpha}(a_i)$.
}

\mytheorem 
{theorem:updatedecreasing}
{ 
The list formed by appending $Right(\mathbb{L}_{\alpha}^{t})$ to $Left(\mathbb{L}_{\alpha}^{t+1})$ are monotonic decreasing from the left to the right.
}
{ 
Since $Left(\mathbb{L}_{\alpha}^{t+1})$ is sublist of $\mathbb{L}_{\alpha}^{t+1}$ which is monotonic decreasing, $Left(\mathbb{L}_{\alpha}^{t+1})$ is monotonic decreasing too. Similar, $Right(\mathbb{L}_{\alpha}^{t})$ is also monotonic decreasing. If $Left(\mathbb{L}_{\alpha}^{t+1})$ or $Right(\mathbb{L}_{\alpha}^{t})$ is $NULL$, this theorem holds certainly. Otherwise, let $a_j$ be the last item in $Left(\mathbb{L}_{\alpha}^{t+1})$ and $a_k$ be the first item $Right(\mathbb{L}_{\alpha}^{t})$. According to the way we divide horizontal lists of $\mathbb{L}_{\alpha}$, $a_j$ is the down neighbour of $a_k$. Thus, $a_k < dn_\alpha(a_k) = a_j$(Lemma \ref{lem:property}(\ref{item:dnrightmost})). Therefore, the list formed by appending $Right(\mathbb{L}_{\alpha}^{t})$ to $Left(\mathbb{L}_{\alpha}^{t+1})$ is monotonic decreasing from the left to the right.
}
 
 
\mytheorem 
{theo:deletetime}
{ 
The time complexity of Algorithm \ref{alg:deletion} is $O(w)$, where $w$ denotes the time window size.
}
{ 
We can see that the time complexity of Algorithm \ref{alg:division} is $O(|LIS|)$ since division of each horizontal list costs $O(1)$ and there are $|LIS|$ horizontal lists in total. Besides, during the up neighbors update(Lines \ref{code:unupdateBegin}-\ref{code:unupdateEnd}), each horizontal list is scanned at most twice and each item in $\alpha$ will be scanned at most twice. Similarly, during the down neighbors update(Lines \ref{code:dnupdateBegin}-\ref{code:dnupdateEnd}), each item in $\alpha$ is also scanned at most twice. Since $|\alpha| = w$, the time complexity of Algorithm \ref{alg:deletion} is $O(|LIS|+w)$, namely, $O(w)$ since $|LIS|$ $\leq w$.
}


\mytheorem 
{theorem:constrainedmining}
{ 
Let $\alpha=\{a_1,...,a_w\}$ be a sequence and $\mathbb{L}_{\alpha}$ be its corresponding index. Let $m$ be the number of horizontal lists in $\mathbb{L}_{\alpha}$ and DAG $G_{\alpha}$ be the corresponding DAG created  from $\mathbb{L}_{\alpha}$.
\begin{enumerate}
\item Let $a_i$ and $a_j$ be two items in $\mathbb{L}_{\alpha}^t, t \geq 1$. then $lm^{k}_{\alpha}(a_i)$, $lm^{k}_{\alpha}(a_j)$, $un^{k}_{\alpha}(a_i)$ and $un^{k}_{\alpha}(a_j)$ are all in $\mathbb{L}_{\alpha}^{t-k}$, for every $1\leq k< t$. Furthermore, if $a_i < a_j$, then $lm^{k}_{\alpha}(a_i) \leq lm^{k}_{\alpha}(a_j)$ and $un^{k}_{\alpha}(a_i) \leq un^{k}_{\alpha}(a_j)$.

\item Let $a_i$ be an item in $\mathbb{L}_{\alpha}^t, t\geq 1$. Let $\beta =\{a_{i_{t-1}}, \cdots,a_{i_{k}}, \cdots, a_{i_{0}}=a_i \}$ be any longest increasing subsequence (LIS) that ends at $a_i$ (note that $a_{i_{0}}$ and $a_i$ are used to denote the same item for presentation simplicity). Then $lm^{k}_{\alpha}(a_i)$, $a_{i_{k}}$, and $un^{k}_{\alpha}(a_i)$ are all in $\mathbb{L}_{\alpha}^{t-k}$, for every $0\leq k \leq t-1$. Furthermore, $lm^{k}_{\alpha}(a_i) \leq a_{i_{k}} \leq un^{k}_{\alpha}(a_i)$.

\item Let $a_i$ be an item in $\mathbb{L}_{\alpha}^{m}$ (i.e, the last list). Then $\{lm^{m-1}_{\alpha}(a_i)$, $\cdots$,$lm^{0}_{\alpha}(a_i))$ has maximum weight and minimum gap among all LIS that end at $a_i$,  $\{un^{m-1}_{\alpha}(a_i), \cdots,un^{0}_{\alpha}(a_i))$ has the minimum weight and maximum gap among the all LIS that end at $a_i$.

\item Let $a^{m}_h$ and $a^{m}_t$ be the head and tail of $\mathbb{L}_{\alpha}^{m}$ respectively. Then
the LIS  $\{lm^{m-1}_{\alpha}(a^{m}_h), \cdots,lm^{0}_{\alpha}(a^{m}_h))$ has the maximum weight. The LIS $\{un^{m-1}_{\alpha}(a^{m}_t), \cdots,un^{0}_{\alpha}(a^{m}_t))$ has the smallest weight.

\end{enumerate}
}
{ 
1. $a_i$ and $a_j$ are in $\mathbb{L}_{\alpha}^t$. By the definition of leftmost child and up neighbor, both the leftmost child and the up neighbor of an item $a_p$ are placed at the horizontal list above the horizontal list $a_p$ is in. Therefore, $lm^{k}_{\alpha}(a_i)$, $lm^{k}_{\alpha}(a_j)$, $un^{k}_{\alpha}(a_i)$ and $un^{k}_{\alpha}(a_j)$ are all in $\mathbb{L}_{\alpha}^{t-k}$ for every $1\leq k< t$. Both $a_i$ and $a_j$ are in $\mathbb{L}_{\alpha}^t$, which is a monotonic decreasing subsequence from the left to the right according to Lemma \ref{lem:consecutive}(\ref{item:decreasing}) . Therefore, if $a_i < a_j$, then $j<i$. Denote the indexes (i.e., their positions in $\alpha$) of $lm_{\alpha}(a_i)$ and $lm_{\alpha}(a_j)$ by $i'$ and $j'$ respectively. Then $j'<j$, $a_{j'}< a_j$ and $i'<i$, $a_{i'}< a_i$. Next we want to show that $j' \leq i'$. Assume, for the sake of contradiction, that $i'< j'$. Combined with $j'<j$, we have $i'<j$. In addition, because $a_{i'}<a_i$ and $a_i<a_j$, so $a_{i'}<a_j$. Thus, $a_{i'}$ is compatible with $a_j$, i.e., $lm_{\alpha}(a_i)$ (which is $a_{i'}$) is compatible with $a_j$. We know that $lm_{\alpha}(a_j)$ (i.e., $a_{j'}$) is the leftmost item in  $\mathbb{L}_{\alpha}^{t-1}$ which is compatible with $a_j$. Hence $j'$ is the smallest index of all items in $\mathbb{L}_{\alpha}^{t-1}$ which are compatible with $a_j$. So $j' \leq i'$. Contradiction. Thus, if $a_i < a_j$, then $lm^{1}_{\alpha}(a_i) \leq lm^{1}_{\alpha}(a_j)$. Applying the same argument to $lm^{1}_{\alpha}(a_i)$ and $lm^{1}_{\alpha}(a_j)$ recursively, we have $lm^{k}_{\alpha}(a_i) \leq lm^{k}_{\alpha}(a_j)$ for each $1\leq k< t$. The statement $un^{k}_{\alpha}(a_i) \leq un^{k}_{\alpha}(a_j)$ can be proved symmetrically.

2. According to Statement (1), both $lm^{k}_{\alpha}(a_i)$ and $un^{k}_{\alpha}(a_i)$ are in $\mathbb{L}_{\alpha}^{t-k}$, for every $1\leq k< t$. 
According to Lemma \ref{lem:property}(\ref{item:rlen}), $a_{i_{k}}$ is also in $\mathbb{L}_{\alpha}^{t-k}$ because $RL_{\alpha}(a_{i_k})$ $= t-k$ . 
Consider $a_{i_{1}}$ as a predecessor of $a_{i_0}$ (i.e., $a_i$),  $lm^{1}_{\alpha}(a_i) \geq a_{i_{1}}$ according to Statement (1). 
Hence, $lm^{2}_{\alpha}(a_i) \geq lm^{1}_{\alpha}(a_{i_{1}})$. Besides, $a_{i_{2}}$ is a predecessor of $a_{i_{1}}$. Thus $lm^{2}_{\alpha}(a_i) \geq a_{i_{2}}$. 
Therefore, $lm^{2}_{\alpha}(a_i) \geq a_{i_{2}}$. 
Repeating the above argument $t-1$ times, we have $lm^{k}_{\alpha}(a_i) \geq a_{i_{k}}$ for every $1\leq k<t$. 
The statement $a_{i_{k}} \geq un^{k}_{\alpha}(a_i)$ can be proved symmetrically.

3. According to Statement (2), for any subsequence in $MIS_{\alpha}(a_i)$, its item at $\mathbb{L}_{\alpha}^{m-k}$ is less than or equal to $lm^{k}_{\alpha}(a_i)$ for every $ 0 \leq k < m$. Therefore, \{$lm^{m-1}_{\alpha}(a_i)$, $\cdots$,$lm^{0}_{\alpha}(a_i)$\} has largest weight among all subsequences in $MIS_{\alpha}(a_i)$. Note that, among all subsequences in $MIS_{\alpha}(a)$, \{$lm^{m-1}_{\alpha}(a_i)$, $\cdots$,$lm^{0}_{\alpha}(a_i)$\} has the largest head $lm^{m-1}_{\alpha}(a_i)$. Therefore, it also has smallest gap among all subsequences in $MIS_{\alpha}(a_i)$. Symmetrically, we can prove that \{$un^{m-1}_{\alpha}(a_i)$, $\cdots$,$un^{0}_{\alpha}(a_i)$\} has smallest weight and largest gap among all the subsequence in $MIS_{\alpha}(a_i)$.
 4. It holds obviously according to Statement (3) and the fact that $\mathbb{L}_{\alpha}^{m}$ is monotonically decreasing.
}


\mytheorem 
{timecom:maxh}
{ 
The time complexity of Algorithm \ref{alg:lismaxheightnew} is $O(w+$\emph{OUTUT}$)$, where $w$ denotes the time window size and \emph{OUTPUT} is the total lengths of all LIS with maximal gap.
}
{ 
The sweeping steps from $\mathbb{L}_{\alpha}^{2}$ to $\mathbb{L}_{\alpha}^{m}$ need to access each item at most twice. It takes $O(w)$ time. The output cost is at most one time of the output size. Therefore, the total cost is  $O(w+$\emph{OUTUT}$)$.
}




\nop{
\section{Running Example of Insertion}\label{sec:appendixinsertioinexample}
Figure \ref{fig:ris} illustrates an example of building $\mathbb{L}_{\alpha}$ for the sequence $\alpha=\{a_1=3,a_2=9,a_3=6,a_4=2,a_5=8,a_6=5,a_7=7\}$ . The straight arrows in Figure \ref{fig:ris} denote up or down neighbors while the curved ones denote left or right neighbours. At step 1, we create the first horizontal list $\mathbb{L}_{\alpha}^{1}$ and add $a_1=3$ into $\mathbb{L}_{\alpha}^{1}$(see Figure \ref{fig:ipica}). At step 2, we can see that $a_2=9$ cannot be added into $\mathbb{L}_{\alpha}^{1}$ because $a_1=3 < a_2=9$, then we create $\mathbb{L}_{\alpha}^{2}$ and add $9$ into $\mathbb{L}_{\alpha}^{2}$ and set $un_{\alpha}(a_2)$ = $a_1$ (See Figure \ref{fig:ipicb}). At step 3, since $Tail(\mathbb{L}_{\alpha}^{2})$ $> a_3=6$, we append $a_3=6$ to $\mathbb{L}_{\alpha}^{2}$ and set $ln_{\alpha}(a_3)=a_2$ and $rn_{\alpha}(a_2)=a_3$. We also set the up neighbor of $a_3$ to be the current tail of $\mathbb{L}_{\alpha}^{1}$, i.e, $un_{\alpha}(a_3)=a_1$ (See Figure \ref{fig:ipicc}). We omit the following steps which are explicitly presented in Figure \ref{fig:ipicd}-\ref{fig:ipicg}.
}


\section{LIS Enumeration}
Pseudo codes for for LIS enumertion are presented in Algorithm \ref{alg:lisenum}.
\begin{algorithm}[h!]
\small
\caption{Enumerate $LIS$ in $\alpha$}
 \label{alg:lisenum}
\KwIn{$\alpha$ and $\mathbb{L}_{\alpha}$}
\KwOut{All LIS in $LIS(\alpha)$, i.e., $LIS(\alpha)$}

\For{Each item $a_i$ in $\mathbb{L}_{\alpha}^{m}$}
{
	$stack.clear()$ \\
	$stack.push(a_i)$ \\
	\While{stack is not empty}
	{
		\If{$stack.top() \in \mathbb{L}_{\alpha}^{1}$}{
			OUTPUT($stack$) \\
			$a = stack.top()$ \\
			$stack.pop()$ \\
			\If{$ln_{\alpha}(a) \seqpar{\alpha} stack.top()$}{
				$stack.push(ln_{\alpha}(a))$ \\
				CONTINUE
			}
		}
		
		\If{the last operation of stack is PUSH}{
			$stack.push(un_{\alpha}(stack.top()))$ \\
		}
		\Else{
			$a = ln_{\alpha}(stack.top())$ \\
			$stack.pop()$ \\
			\If{$a \seqpar{\alpha} stack.top()$}{
				$stack.push(a)$ \\
			}	
		}	
	}
}
RETURN
\end{algorithm}

\section{Deletion}
Pseudo codes for maintenance after deletion happens are presented Algorithm \ref{alg:deletion}.

\begin{algorithm}[!h]
\small
\caption{Data structure maintenance after deletion}
\label{alg:deletion}
\KwIn{$\alpha$, $\mathbb{L}_{\alpha}$}
\KwOut{$\mathbb{L}_{\alpha^-}$}
/* Horizontal Update */ \\ \label{code:horizontalBegin}
Call Algorithm \ref{alg:division} to divide each $\mathbb{L}_{\alpha}^{t}$ in $\mathbb{L}_{\alpha}$ into $Left(\mathbb{L}_{\alpha}^{t})$ and $Right(\mathbb{L}_{\alpha}^{t})$ \\
\For{$t \gets 1$ to $m-1$}{
	Building $\mathbb{L}_{\alpha^-}^{t}$ by appending $Right(\mathbb{L}_{\alpha}^{t})$ to $Left(\mathbb{L}_{\alpha}^{t+1})$ \\
	\label{code:horizontalEnd}
}

\label{code:unupdateBegin}
/* Up neighbors update */ \\
\For{$t \gets 1$ to $m-1$}{
	Call Algorithm \ref{alg:updateun} to update the up neighbors of items in $Left(\mathbb{L}_{\alpha}^{t+1})$ \label{code:unupdateEnd}
}
\label{code:dnupdateBegin}
/* Down neighbors update */ \\
\For{$t \gets 1$ to $m-1$}{
	Call Algorithm \ref{alg:updatedn} to update the down neighbors of items in $Right(\mathbb{L}_{\alpha}^{t})$ \label{code:dnupdateEnd}
}

RETURN
\end{algorithm}

\vspace{-0.15in}

\section{LIS with Extreme Gap}
\label{sec:appendix:extremegap}
Pseudo codes of LIS with maximum and minimum gap are presented in Algorithm  \ref{alg:lismaxheightnew} and \ref{alg:mingap}, respectively.

\begin{algorithm}[h!]
\small
\caption{Find LIS with minimum gap}
 \label{alg:mingap}
\KwIn{A sequence $\alpha$, $\mathbb{L}_{\alpha}$}
\KwOut{All LIS of $\alpha$ with minimum gap}
$m = |\mathbb{L}_{\alpha}|$

/* Compute $lm_{\alpha}^{t-1}(a_i)$ for each $a_i$ $\in \mathbb{L}_{\alpha}^{t}$ */\\
For each $a_i$ $\in \mathbb{L}_{\alpha}^{1}$, Set $lm_{\alpha}^{0}(a_i) = a_i$\\
\label{code:lmBegin}
\For{$t\gets 2$ to $m$}
{
    $a_{i} = Head(\mathbb{L}_{\alpha}^{t})$ \\
    $a_{k} = Head(\mathbb{L}_{\alpha}^{t-1})$ \\
    \While{$a_{i} \neq NULL$}
    {
        \While{$a_k > a_i$}
        {
            $a_{k} = rn_{\alpha}(a_k)$
        }
        $lm_{\alpha}^{t-1}(a_i) = lm_{\alpha}^{t-2}(a_k)$ \\
        $a_i = rn_{\alpha}(a_i)$ \label{code:lmEnd}
    }
}

$GAP_{min} = \min{\{ a_i - lm_{\alpha}^{m-1}(a_i)\  |\  a_i \in \mathbb{L}_{\alpha}^{m} \}}$ \label{code:minimum}
\\
\label{code:minheightBegin}
\For{each item $a_i$ in $\mathbb{L}_{\alpha}^{m}$}
{
    \If{$a_i - lm_{\alpha}^{m-1}(a_i) \neq GAP_{min}$}
    {
        CONTINUE
    }
	$stack.clear()$ \label{code:mingap:enum:begin}\\
    $stack.push(a_i)$ \\
	\While{stack is not empty}
	{
	    \If{$stack.top() = lm_{\alpha}^{m-1}(a_i)$}{
		    OUTPUT($stack$) \\
		    $stack.pop()$ \\
	    }
	    \If{the last operation of stack is PUSH}{
		    $stack.push(lm_{\alpha}(stack.top()))$
	    }
	    \Else{
		    $a = rn_{\alpha}(stack.top())$ \\
		    $stack.pop()$ \\
		    /* Assuming that $a \in \mathbb{L}_{\alpha}^{k}$ */ \\
		    \label{code:mingapenum}
		    \If{$a \seqpar{\alpha} stack.top()$ AND $lm_{\alpha}^{k-1}(a) = lm_{\alpha}^{m-1}(a_i)$}
		    {
				$stack.push(a)$ \label{code:minheightEnd} \label{code:mingap:enum:end}
		    }
	    }
	}
}\label{code:minheightfirst}
\end{algorithm}

\vspace{-0.1in}
\begin{algorithm}[!h]
\small
\caption{Find LIS with maximum gap}
 \label{alg:lismaxheightnew}
\KwIn{A sequence $\alpha$ and $\mathbb{L}_{\alpha}$}
\KwOut{All LIS of $\alpha$ with maximum gap}
$m = |\mathbb{L}_{\alpha}|$ \\
For each $a_i \in \mathbb{L}_{\alpha}^{1}$, set $un_{\alpha}^{0}(a_i) = a_i$ \\
\label{code:rmBegin}

\For{$t\gets 2$ to $m$}
{
    \For{each item $a_i$ in $\mathbb{L}_{\alpha}^{t}$}
    {
        $un_{\alpha}^{t-1}(a_i) = un_{\alpha}^{t-2}(un_{\alpha}(a_i))$;\label{code:rmEnd}
    }
}\label{code:rmdynamic}

$GAP_{max} = \max{\{ a_i - un_{\alpha}^{m-1}(a_i)\  |\  a_i \in \mathbb{L}_{\alpha}^{m} \}}$ \label{code:maximum}
\\
\label{code:maxheightBegin}
\For{each item $a_i$ in $\mathbb{L}_{\alpha}^{m}$}
{
    \If{$a_i - un_{\alpha}^{m-1}(a_i) \neq GAP_{max}$}
    {
        CONTINUE
    }    
	$stack.clear()$  \label{code:maxgap:enum:begin} \\
    $stack.push(a_i)$ \\
	\While{stack is not empty}
	{
	    \If{$stack.top() = un_{\alpha}^{m-1}(a_i)$}{
		    OUTPUT($stack$) \\
		    $stack.pop()$ \\
	    }
	    \If{the last operation of stack is PUSH}{
		    $stack.push(un_{\alpha}(stack.top()))$
	    }
	    \Else{
		    $a = ln_{\alpha}(stack.top())$ \\
		    $stack.pop()$ \\
		    /* Assuming that $a \in \mathbb{L}_{\alpha}^{k}$ */ \\
		    \If{$a \seqpar{\alpha} stack.top()$ AND $un_{\alpha}^{k-1}(a) = un_{\alpha}^{m-1}(a_i)$}
		    {
				$stack.push(a)$ \label{code:maxheightEnd} \label{code:maxgap:enum:end}
		    }\label{code:gapenum}
	    }
	}
}\label{code:maxheightfirst}
RETURN \\
\end{algorithm}

\section{LIS with Extreme Weight}
\label{sec:appendix:extremeweight}

The pseudo codes of LIS with extreme weight are presented in Algorithm \ref{alg:lisminweightnew} and Algorithm \ref{alg:lismaxweightnew}.
\begin{algorithm}[!h]
\small
\caption{Find LIS with minimum weight}
 \label{alg:lisminweightnew}
\KwIn{A sequence $\alpha$ and $\mathbb{L}_{\alpha}$}
\KwOut{LIS of $\alpha$ with minimum weight}
$m = |\mathbb{L}_{\alpha}|$ \\
Create an array $S$ with size $m$;\\
$S[0] = Tail(\mathbb{L}_{\alpha}^{m})$ \\
$k = 0$ \\
\While{$un_{\alpha}(S[k]) \neq NULL$}
{
    $S[k+1] = un_{\alpha}(S[k])$;\\
    $k =k+1$;
}
RETURN $S$\\
\end{algorithm}

\begin{algorithm}[!h]
\small
\caption{Find LIS with maximum weight}
 \label{alg:lismaxweightnew}
\KwIn{A sequence $\alpha$ and $\mathbb{L}_{\alpha}$}
\KwOut{LIS of $\alpha$ with maximum weight}
$m = |\mathbb{L}_{\alpha}|$ \\
Create an array $S$ with size $m$;\\
$S[0] = Head(\mathbb{L}_{\alpha}^{m})$  \\
$k = 1$  \\
\While{$k < m$}
{
    $a_i = un_{\alpha}(S[k-1])$ \\
    \While{$ln_{\alpha}(a_i) \seqpar{\alpha} S[k-1]$}
    {
        $a_i = ln_{\alpha}(a_i)$
    }
    $S[k] = a_i$ \\
    $k=k+1$;
}
RETURN $S$ \\
\end{algorithm}

\section{Complexity for sorted sequence}
\label{sec:appendix:sorted}

\subsection{Sequence in Descending Order}
When items are sorted in descending order, the rising length of any item is $1$ and there is only one horizontal list in QN-list consisting all items:
\begin{enumerate}
\item 
	 \underline{$O(1)$ for each insertion}. According to our insertion algorithm, insertion require a binary search over the sequence formed by the tail items of all horizontal lists. However, since there is only one horizontal list, one binary search costs only $O(1)$ time.
\item
	\underline{$O(1)$ for each deletion}. According to our deletion algorithm, when we delete $a_1$, we find that there is no down neighbor of $rn_{\alpha}(a_1) = a_2$ and then deletion is finished. Thus, each deletion costs only $O(1)$ time.
\end{enumerate}

\subsection{Sequence in Ascending Order}

When items are sorted in ascending order, each horizontal list contains only one item and the number of horizontal lists in QN-list is exactly the length of the sequence.
\begin{enumerate}
\item 
	\underline{$O(\log w)$ for each insertion}. The sequence formed by the tail items of all horizontal lists is exactly the number of horizontal lists. Thus, the binary search conducted over the sequence is $O(\log w)$ and each insertion costs $O(\log w)$.
\item
	\underline{$O(1)$ for each deletion}. According to our deletion algorithm, when we delete $a_1$, we find that there is no right neighbor of $a_1$ and then deletion is finished. Thus, each deletion costs only $O(1)$ time.
\end{enumerate}

\section{LIS with Other Constriants}
\label{sec:appendix:other:constraints}
We now discuss how to efficiently support LIS with other existing constraints over our data structure. For each type of constraint, we will first introduce the definition of the corresponding problem and then present the solution over our data structure. In Section \ref{sec:appendix:other:constraints:comparison}, we compare our method with previous work not only theoretically but also experimentally.

\subsection{LIS with Extreme Width}

\subsubsection{Definition}
\begin{definition}\textbf{(Width)}\cite{variant2009} \label{def:width}
Let $\alpha$ be a sequence, $s$ be an LIS in $LIS(\alpha)$ where $s =$ \{$a_{i_1}$, $a_{i_2}$,...,$a_{i_k}$\} ($k=|s|$). The \emph{width} of $s$ is defined as $width(s)=i_k-i_1$, i.e., the positional distance between the tail item($a_{i_k}$) and the head item($a_{i_1}$) of $s$.
\end{definition}
\begin{definition}\textbf{(LIS with extreme width)} \label{def:extreme:width}
Given a sequence $\alpha$, for $s \in LIS(\alpha)$:

$s$ is an \textbf{LIS with Maximum Width} if 
\[ \forall s^{\prime} \in LIS(\alpha), width(s) \geq width(s^{\prime}) \]

$s$ is an \textbf{LIS with Minimum Width} if
\[ \forall s^{\prime} \in LIS(\alpha), width(s) \leq width(s^{\prime}) \]

\end{definition}

\subsubsection{Solution over our data structure}
Given a sequence $\alpha$ and the QN-list $\mathbb{L}_{\alpha}$. Consider an item $a_i \in \mathbb{L}_{\alpha}^m$ where $m = |\mathbb{L}|$. 
Assuming that $\beta =\{a_{i_{m-1}}$, $\cdots,a_{i_{m-2}}$, $\cdots, a_{i_{0}}=a_i \}$ is an LIS ending with $a_i$, then we know that $lm^{m-1}_{\alpha}(a_i) \geq a_{i_{m-1}} \geq un^{m-1}_{\alpha}(a_i)$ (Theorem \ref{item:pmost:compare}). 
However, with Lemma \ref{item:decreasing}, we can conclude that $POS(lm^{m-1}_{\alpha}(a_i)) \leq POS(a_{i_{m-1}}) \leq POS(un^{m-1}_{\alpha}(a_i))$ where $POS(a_{i})$ denotes the position of $a_{i}$ in the sequence. 
Thus, we can see that among all LIS ending with $a_i$, the one with maximum(minimum) width must starts with $lm^{m-1}_{\alpha}(a_i)$ ($un^{m-1}_{\alpha}(a_i)$).

In Section \ref{sec:computation} for computing LIS with extreme gap, we have designed two sweeping algorithm to compute $lm^{t-1}_{\alpha}(a_i)$ (Line \ref{code:lmBegin}-\ref{code:lmEnd} in Algorithm \ref{alg:mingap}) and $un^{t-1}_{\alpha}(a_i)$ (Line \ref{code:rmBegin}-\ref{code:rmEnd} in Algorithm \ref{alg:lismaxheightnew}), respectively, for each item $a_i \in \mathbb{L}_{\alpha}^t$, $1\leq t \leq m$. After finding out some $a_i \in \mathbb{L}_{\alpha}^m$ where $POS(a_i)-POS(un_{\alpha}^{m-1}(a_i))$ is the minimum width, we can enumerate LIS starting with $un_{\alpha}^{m-1}(a_i)$ and ending with $a_i$, of which the pseudo codes are exactly presented at Line \ref{code:maxgap:enum:begin}-\ref{code:maxgap:enum:end} in Algorithm \ref{alg:lismaxheightnew}. Analogously, after finding out some $a_j$ where $POS(a_j)-POS(lm_{\alpha}^{m-1}(a_j))$ is the maximum width, we can enumerate LIS starting with $lm_{\alpha}^{m-1}(a_j)$ and ending with $a_j$ (See Line \ref{code:mingap:enum:begin}-\ref{code:mingap:enum:end} in Algorithm \ref{alg:mingap}).

\begin{algorithm}[t!]
\small
\caption{Find LIS with minimum width}
 \label{alg:min:width}
\KwIn{A sequence $\alpha$ and $\mathbb{L}_{\alpha}$}
\KwOut{All LIS of $\alpha$ with minimum width}
$m = |\mathbb{L}_{\alpha}|$ \\
For each $a_i \in \mathbb{L}_{\alpha}^{1}$, set $un_{\alpha}^{0}(a_i) = a_i$ \\

\For{$t\gets 2$ to $m$}
{
    \For{each item $a_i$ in $\mathbb{L}_{\alpha}^{t}$}
    {
        $un_{\alpha}^{t-1}(a_i) = un_{\alpha}^{t-2}(un_{\alpha}(a_i))$; 
    }
} 

$WIDTH_{min} = \min{\{ POS(a_i) - POS(un_{\alpha}^{m-1}(a_i))\  |\  a_i \in \mathbb{L}_{\alpha}^{m} \}}$ 
\\
\For{each item $a_i$ in $\mathbb{L}_{\alpha}^{m}$}
{
    \If{$a_i - un_{\alpha}^{m-1}(a_i) \neq WIDTH_{min}$}
    {
        CONTINUE
    }    
	$stack.clear()$  
	\\
    $stack.push(a_i)$ \\
	\While{stack is not empty}
	{
	    \If{$stack.top() = un_{\alpha}^{m-1}(a_i)$}{
		    OUTPUT($stack$) \\
		    $stack.pop()$ \\
	    }
	    \If{the last operation of stack is PUSH}{
		    $stack.push(un_{\alpha}(stack.top()))$
	    }
	    \Else{
		    $a = ln_{\alpha}(stack.top())$ \\
		    $stack.pop()$ \\
		    /* Assuming that $a \in \mathbb{L}_{\alpha}^{k}$ */ \\
		    \If{$a \seqpar{\alpha} stack.top()$ AND $un_{\alpha}^{k-1}(a) = un_{\alpha}^{m-1}(a_i)$}
		    {
				$stack.push(a)$ 
		    } 
	    }
	}
} 
RETURN \\
\end{algorithm}


\begin{algorithm}[h!]
\small
\caption{Find LIS with maximum width}
 \label{alg:max:width}
\KwIn{A sequence $\alpha$, $\mathbb{L}_{\alpha}$}
\KwOut{All LIS of $\alpha$ with maximum width}
$m = |\mathbb{L}_{\alpha}|$

/* Compute $lm_{\alpha}^{t-1}(a_i)$ for each $a_i$ $\in \mathbb{L}_{\alpha}^{t}$ */\\
For each $a_i$ $\in \mathbb{L}_{\alpha}^{1}$, Set $lm_{\alpha}^{0}(a_i) = a_i$\\
\For{$t\gets 2$ to $m$}
{
    $a_{i} = Head(\mathbb{L}_{\alpha}^{t})$ \\
    $a_{k} = Head(\mathbb{L}_{\alpha}^{t-1})$ \\
    \While{$a_{i} \neq NULL$}
    {
        \While{$a_k > a_i$}
        {
            $a_{k} = rn_{\alpha}(a_k)$
        }
        $lm_{\alpha}^{t-1}(a_i) = lm_{\alpha}^{t-2}(a_k)$ \\
        $a_i = rn_{\alpha}(a_i)$ 
    }
}

$WIDTH_{max} = \max{\{ POS(a_i) - POS(lm_{\alpha}^{m-1}(a_i))\  |\  a_i \in \mathbb{L}_{\alpha}^{m} \}}$ 
\\
\For{each item $a_i$ in $\mathbb{L}_{\alpha}^{m}$}
{
    \If{$a_i - lm_{\alpha}^{m-1}(a_i) \neq WIDTH_{max}$}
    {
        CONTINUE
    }
	$stack.clear()$ 
	\\
    $stack.push(a_i)$ \\
	\While{stack is not empty}
	{
	    \If{$stack.top() = lm_{\alpha}^{m-1}(a_i)$}{
		    OUTPUT($stack$) \\
		    $stack.pop()$ \\
	    }
	    \If{the last operation of stack is PUSH}{
		    $stack.push(lm_{\alpha}(stack.top()))$
	    }
	    \Else{
		    $a = rn_{\alpha}(stack.top())$ \\
		    $stack.pop()$ \\
		    /* Assuming that $a \in \mathbb{L}_{\alpha}^{k}$ */ \\
		    \If{$a \seqpar{\alpha} stack.top()$ AND $lm_{\alpha}^{k-1}(a) = lm_{\alpha}^{m-1}(a_i)$}
		    {
				$stack.push(a)$ 
		    }
	    }
	}
} 
\end{algorithm}

Apparently, our method for outputting LIS with minimum/maximum width over our data structure cost the same time as our algorithm for LIS with minimum/maximum gap, namely, $O(n+OUTPUT)$. Pseudo codes for LIS with minimum/maximum width are presented in Algorithm \ref{alg:min:width} and Algorithm \ref{alg:max:width}, respectively.


\subsection{Slope-constrained LIS(SLIS)}
\subsubsection{Definition}

\begin{definition}\textbf{(Slope-constrained LIS)}\cite{fastyang2008} \label{def:slope:lis}
Given a sequence $\alpha = $ \{$a_1$, $a_2$,...,$a_n$\} and a nonnegative slope boundary $m$. Computing \emph{slope-constrained LIS} (SLIS) is to output an LIS of $\alpha$: \{$a_{i_1}$,$a_{i_2}$,...,$a_{i_m}$\} such that the slope between two consecutive points is not less than $m$, i.e., $\frac{a_{i_{k+1}}-a_{i_k}}{i_{k+1}-i_k} \geq m$ for all $1\leq k<m$.
\end{definition}

\subsubsection{Solution over our data structure}
With Definition \ref{def:slope:lis}, we can find that the slope only constrains each two consecutive items in an LIS. Thus, the slope are in essence constraints over the predecessors of an item in the sequence. Solution for RLIS computation over our data structure contain two main phrase. In the first phrase, we filter some items that will not exist in an RLIS by coloring them as \emph{black}. In the second phrase, we efficiently output an SLIS over the labeled data structure.

\noindent\underline{\Paragraph{Coloration}}

Items who have no predecessor satisfying the slope constraints will never exist in an RLIS and we can filter those items. Besides, for an non-black item $a_i$, if predecessors of $a_i$ that satisfy the slope constraints are all black, then we can also color $a_i$ as black since there will be no proper predecessor for $a_i$ in an SLIS.  Therefore, black items should be figured out in a recursive way. Since items in $\iLa^1$ have no predecessor, they are all non-black. For convenient, for item $a_i$, we call the non-black predecessor who satisfy the slope constraints with $a_i$ as the \emph{slope-proper} predecessor of $a_i$(Or the predecessor is slope-proper to $a_i$). Let's consider how to color items in $\iLa^{t+1}$ when coloration over items in $\iLa^{t}$ has been done. 

\begin{theorem} \label{theorem:slope:dynamic}
Given a sequence $\alpha$ and $\iLa$. Consider $a_i$, $a_j$ $\in \iLa^{t+1}$ and $a_k \in$ $\iLa^{t}$ where $k<i<j$. If $a_k$ is a leftmost slope-proper predecessor of $a_i$, then the leftmost slope-proper predecessor  of $a_j$ is either $a_k$ or an item at the right of $a_k$.
\end{theorem}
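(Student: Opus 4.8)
The plan is to exploit the same no-crossing structure of the horizontal lists that underlies Lemma \ref{theorem:nocross} and Theorem \ref{theorem:constrainedmining}(1), but now while carrying along the two extra constraints (blackness and the slope bound). First I would fix the geometry. Since $a_i,a_j \in \iLa^{t+1}$ with $i<j$, Lemma \ref{lem:consecutive}(\ref{item:decreasing}) tells us that $a_i$ lies to the left of $a_j$ in $\iLa^{t+1}$ and that $a_i > a_j$ (no two items of a list are compatible, and equal position order forces strict value order). Symmetrically, any item $a_l$ lying strictly to the left of $a_k$ in $\iLa^{t}$ has position $l<k$ and value $a_l>a_k$. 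The target reformulation is then: it suffices to prove that no item strictly to the left of $a_k$ is a slope-proper predecessor of $a_j$, since this forces the leftmost slope-proper predecessor of $a_j$ to be $a_k$ itself or an item to its right.

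The core step is a three-way case split justified by the fact that $a_k$ is the \emph{leftmost} item of $\iLa^{t}$ that is simultaneously a predecessor of $a_i$, non-black, and slope-compatible with $a_i$; hence every $a_l$ to its left violates at least one of these three conditions with respect to $a_i$. In the first case $a_l$ is not a predecessor of $a_i$: as $l<k<i$ the position requirement $l<i$ already holds, so the only possible failure is $a_l>a_i$, and combined with $a_i>a_j$ this gives $a_l>a_j$, so $a_l$ is not compatible with $a_j$ and cannot be a predecessor of $a_j$ at all. In the second case $a_l$ is black: since the coloring is an intrinsic property of the item in $\iLa^{t}$, already fixed once level $\iLa^{t}$ has been processed and independent of which parent we inspect, $a_l$ is equally disqualified as a slope-proper predecessor of $a_j$.

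The remaining case is the crux and the only place the slope bound genuinely enters: $a_l$ is a non-black predecessor of $a_i$ that violates the slope bound, i.e. $a_i-a_l < m\,(i-l)$ for the slope boundary $m$ of Definition \ref{def:slope:lis}. Here I would chain three elementary inequalities: from $a_j<a_i$ we get $a_j-a_l<a_i-a_l$; the assumed slope violation gives $a_i-a_l<m\,(i-l)$; and because $m\ge 0$ together with $i<j$ (so $i-l\le j-l$) we get $m\,(i-l)\le m\,(j-l)$. Concatenating these yields $a_j-a_l<m\,(j-l)$, i.e. $a_l$ also violates the slope bound with $a_j$ and is therefore not slope-proper to $a_j$. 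Having eliminated all three cases, every candidate strictly to the left of $a_k$ is ruled out, which establishes the statement.

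I expect the only real subtlety to be bookkeeping rather than any deep difficulty: I must make sure the \emph{leftmost} choice of $a_k$ is exactly what licenses the exhaustive case split, and I should explicitly note the degenerate possibility that $a_k$ is not itself a predecessor of $a_j$ (for instance when $a_k>a_j$), in which case the leftmost slope-proper predecessor of $a_j$ lies strictly to the right of $a_k$, still in agreement with the claim. The slope-inequality chain is short but is the substantive content, and its validity hinges on the nonnegativity of $m$ and on $i<j$, both of which are available from the hypotheses.
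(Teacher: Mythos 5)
Your proof is correct and follows essentially the same route as the paper's: both arguments show that every non-black item strictly to the left of $a_k$ in $\mathbb{L}_{\alpha}^{t}$ either exceeds $a_i$ (hence exceeds $a_j$, since items in $\mathbb{L}_{\alpha}^{t+1}$ are decreasing) or violates the slope bound with $a_i$, and then propagate that violation to $a_j$ via the chain $a_j-a_l < a_i-a_l < m\,(i-l)\leq m\,(j-l)$, which relies exactly on $a_j<a_i$, $i<j$, and $m\geq 0$. Your version is merely more explicit than the paper's (spelling out the three-way case split, the intrinsic nature of blackness, and the degenerate case where $a_k$ is not itself a predecessor of $a_j$), but the substance is identical.
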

\begin{proof}
With Lemma \ref{lem:consecutive}(\ref{item:decreasing}) we can conclude that:
\[\frac{a_i-a_k}{i-k} > \frac{a_j-a_k}{j-k}\]
Thus, for an non-black item $a_{k'}$ at the left of $a_k$ in $\iLa^{t}$, $a_{k'}$ is either larger than $a_i$ or $(a_i-a_{k'})/(i-k') < m$, thus, $a_{k'}$ will be also either larger than $a_j$ or $(a_j-a_{k'})/(j-k') < m$ which means non-black item at the left side of $a_k$ will not satisfy the slope constraints with $a_j$.
\end{proof}

We know that finding a leftmost slope-proper predecessor for $a_i \in \iLa^{t+1}$ is enough to confirm that $a_i$ is a non-black item. With Theorem \ref{theorem:slope:dynamic}, we can know that after determining the leftmost slope-proper predecessor $a_j$ of $a_i$, the leftmost slope-proper predecessor of $rn_{\alpha}(a_i)$ can be searched from $a_j$ to the right of $\iLa^{t}$. Thus, when coloration over items in $\iLa^{t}$ has been done, we can color items in $\iLa^{t+1}$ by scanning $\iLa^{t}$ and $\iLa^{t+1}$ only once (Line \ref{code:scolor:begin}-\ref{code:scolor:end} in Algorithm \ref{alg:slope}). 

\noindent\underline{\Paragraph{Outputting an SLIS}}

It's easy to know that after the coloration, for any item $a_i$ who is still non-black, there must exist an increasing subsequence $s$ ending with $a_i$ where every item in $s$ is non-black. Thus, outputting an SLIS can be done as following: (1) we firstly find out an item $a_i \in \mathbb{L}_{\alpha}^{m}$ $(m=|\mathbb{L}_{\alpha}|)$ who is non-black. Then we can always find out a slope-proper predecessor $a_j$ of $a_i$.  Recursively, we can find a slope-proper predecessor $a_k$ of $a_j$. Thus, we can easily find out an LIS satisfying slope constraints, namely, SLIS (See Line \ref{code:slis:begin}-\ref{code:slis:end} in Algorithm \ref{alg:slope}). Note that if items in $\mathbb{L}_{\alpha}^{m}$ are all black, there is no SLIS.

Pseudo codes for RLIS over our data structure are presented in Algorithm \ref{alg:slope}


\begin{algorithm}[t!]
\small
\caption{Find an SLIS}
 \label{alg:slope}
\KwIn{A sequence $\alpha$, $\mathbb{L}_{\alpha}$}
\KwIn{User-defined slope $m$}
\KwOut{An SLIS of $\alpha$ satisfying slope $m$}
$m = |\mathbb{L}_{\alpha}|$ \\
/* Coloration */\\
Initial all items as non-black \\
\label{code:scolor:begin}
\For{$t\gets 2$ to $m$}
{
    Let $a_{i} = Head(\mathbb{L}_{\alpha}^{t})$ AND $a_{k} = Head(\mathbb{L}_{\alpha}^{t-1})$ \\
    \While{$a_{i} \neq NULL$}
    {
	    \If{$a_k$ is before $a_i$ AND $a_k$ is not slope-proper to $a_i$}{
		    $a_k = rn_{\alpha}(a_k)$ \\
		    CONTINUE
	    }
	    \If{$a_k$ is after $a_i$ OR $a_k = NULL$}{
		    Color $a_i$ as black 
	    }
		$a_i = rn_{\alpha}(a_i)$  \label{code:scolor:end}
    }
}

Search $\iLa^{m}$ from left to right to find an non-black item $a_i$ \label{code:slis:begin}\\
\If{($a_i$ exists)}
{
	Initial $stack$ with $a_i$ \\
	\While{$stack.top()$ has predecessors}
	{
		$a_j = un_{\alpha}(stack.top())$ \\
		\While{$a_j$ is not slope-proper to $a_i$}{
			$a_j = ln_{\alpha}(a_j)$
		}
		$stack.push(a_j)$ 
	}
	$OUTPUT(stack)$ \label{code:slis:end}\\
}
RETURN
\end{algorithm}

\begin{algorithm}[t!]
\small
\caption{Find an RLIS}
 \label{alg:range}
\KwIn{A sequence $\alpha$, $\mathbb{L}_{\alpha}$}
\KwIn{Two user-defined ranges: $[L_V, U_V]$ and $[L_I, U_I]$}
\KwOut{An RLIS of $\alpha$ satisfying the two ranges}
$m = |\mathbb{L}_{\alpha}|$ \\
/* Coloration */\\
Initial all items as non-black \\
\label{code:rcolor:begin}
\For{$t\gets 2$ to $m$}
{
    Let $a_{i} = Head(\mathbb{L}_{\alpha}^{t})$ AND $a_{k} = Head(\mathbb{L}_{\alpha}^{t-1})$ \\
    \While{$a_{i} \neq NULL$}
    {
	    \If{$\neg Black(a_k)$ AND $L_V\leq$ $a_i-a_k$ AND $i-k\leq U_I$}{
		    \If{$a_i-a_k> U_V$ OR $i-k$ $< L_I$ }{
			    Color $a_i$ as black			     
		    }
		    $a_i = rn_{\alpha}(a_i)$ 
	    }
	    \ElseIf{$a_k = NULL$}{
		    Color $a_i$ as black \\
		    $a_i = rn_{\alpha}(a_i)$ \\
	    }
	    \Else{
		    $a_k = rn_{\alpha}(a_k)$  \label{code:rcolor:end}
	    }
	    
    }
}

Search $\iLa^{m}$ from left to right to find an non-black item $a_i$ \label{code:rlis:begin}\\
\If{($a_i$ exists)}
{
	Initial $stack$ with $a_i$ \\
	\While{$stack.top()$ has predecessors}
	{
		$a_j = un_{\alpha}(stack.top())$ \\
		\While{$a_j$ is not slope-proper to $a_i$}{
			$a_j = ln_{\alpha}(a_j)$
		}
		$stack.push(a_j)$ 
	}
	$OUTPUT(stack)$ \label{code:rlis:end}\\
}
RETURN
\end{algorithm}

\subsection{Range-constrained LIS(RLIS)}
\subsubsection{Definition}

\begin{definition}\textbf{(Range-constrained LIS)}\cite{fastyang2008} \label{def:range:lis}
Given a sequence $\alpha = $ \{$a_1$, $a_2$,...,$a_n$\} and $0<L_I\leq U_I<n$, $0\leq L_V\leq U_V$. Computing \emph{range-constrained LIS} (RLIS) is to output an LIS of $\alpha$: \{$a_{i_1}$,$a_{i_2}$,...,$a_{i_m}$\} satisfying $L_I\leq i_{k+1}-i_{k}\leq U_I$ and $L_V\leq a_{i_{k+1}}-a_{i_k}\leq U_V$.
\end{definition}

\subsubsection{Solution over our data structure}
With Definition \ref{def:range:lis}, we can see that, just like the slope constraints, the range also only constrains each two consecutive items in an LIS. Thus, similar to the solution to SLIS, the solution to RLIS also contains two main phrase, namely, the coloration phrase and output phrase. However, we can easily see that what is different from computing SLIS is the coloration phrase while the outputting RLIS phrase will be exactly the same as that of outputting SLIS.

\noindent\underline{\Paragraph{Coloration}}

The range constraint is different from the slope constraint since the gap and the positional distance between two consecutive items in an LIS should neither be too large nor too small. However, the slope between two items can be arbitrarily large. Similarly, for a non-black predecessor $a_j$ of item $a_i$, if $a_i$, $a_j$ satisfy the range constraint, namely, $(a_i-a_j)\in [L_v, U_v]$ and $(i-j)\in [L_I, U_I]$, we call $a_j$ as a \emph{range-proper} predecessor of $a_i$(Or $a_j$ is range-proper to $a_i$).

\begin{theorem}\label{theorem:range:leftmost}
Given a sequence $\alpha$ and $\iLa$. Consider $a_i$ $\in \iLa^{t+1}$ and $a_k \in$ $\iLa^{t}$. Assuming that $a_k$ is the leftmost non-blacks item in $\iLa^{t}$ that satisfy $L_V\leq$ $a_i-a_k$ and $i-k\leq U_I$, then $a_i$ has range-proper predecessor($a_i$ should be non-black) if and only if $a_k$ is range-proper to $a_i$. 
\end{theorem}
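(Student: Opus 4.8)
The plan is to prove the biconditional by exploiting the monotone structure of a single horizontal list guaranteed by Lemma \ref{lem:consecutive}(\ref{item:decreasing}): within $\mathbb{L}_{\alpha}^{t}$, values strictly decrease and positions strictly increase from left to right. The reverse direction ($\Leftarrow$) is immediate, since if $a_k$ is range-proper to $a_i$ then $a_k$ is itself a range-proper predecessor and $a_i$ has one. All the work is in the forward direction.

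First I would record the key monotonicity observation. The two conditions defining $a_k$, namely $a_i - a_k \geq L_V$ and $i - k \leq U_I$, are both ``easier to the right'': moving rightward to an item $a_p \in \mathbb{L}_{\alpha}^{t}$ decreases $a_p$ (so $a_i - a_p \geq L_V$ is easier to meet) and increases $p$ (so $i - p \leq U_I$ is easier to meet). Hence the set $S$ of items in $\mathbb{L}_{\alpha}^{t}$ satisfying both is a contiguous suffix of the list, and $a_k$ is by definition the leftmost non-black element of $S$. Dually, the remaining two range constraints $a_i - a_p \leq U_V$ and $i - p \geq L_I$ are ``easier to the left''.

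For the forward direction I would assume $a_i$ has a range-proper predecessor $a_j$. Then $a_j$ is non-black and meets all four constraints; in particular $a_i - a_j \geq L_V$ and $i - j \leq U_I$, so $a_j \in S$. Because $a_k$ is the leftmost non-black member of $S$, it lies weakly to the left of $a_j$, i.e. $k \leq j$. By strict monotonicity this yields $a_k \geq a_j$ and $i - k \geq i - j$, whence $a_i - a_k \leq a_i - a_j \leq U_V$ and $i - k \geq i - j \geq L_I$. Together with the two defining conditions of $a_k$, all four range constraints hold for $a_k$. I would then confirm $a_k$ is a genuine predecessor: $a_i - a_k \geq L_V \geq 0$ and $i - k \geq L_I > 0$ give $a_k \leq a_i$ and $k < i$, so $a_k \seqpar{\alpha} a_i$; since $a_k \in \mathbb{L}_{\alpha}^{t}$ has rising length $t = RL_{\alpha}(a_i)-1$ (Lemma \ref{lem:property}(\ref{item:rlen})), Definition \ref{def:pred} is satisfied. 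Thus $a_k$ is a non-black range-proper predecessor, as required.

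The main obstacle, or rather the subtle point, is verifying that the two ``loose'' upper constraints ($a_i - a_p \leq U_V$ and $i - p \geq L_I$) are inherited by $a_k$ from $a_j$: this works precisely because $a_k$ sits at or to the left of $a_j$, where those two constraints are only easier, and it relies entirely on the strict decrease of values and strict increase of positions in a horizontal list. I would also remark that the statement presupposes the existence of $a_k$; if no non-black item satisfies the two defining conditions, then $S$ has no non-black member, so $a_i$ has no range-proper predecessor either, and the biconditional holds vacuously.
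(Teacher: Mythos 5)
Your proof is correct and follows essentially the same route as the paper's: the reverse direction is dismissed as immediate, and the forward direction places $a_k$ weakly to the left of the assumed range-proper predecessor $a_j$, then uses the monotonicity of the horizontal list (Lemma \ref{lem:consecutive}(\ref{item:decreasing})) to transfer the two upper constraints $a_i - a_k \leq U_V$ and $i - k \geq L_I$ from $a_j$ to $a_k$. Your additional checks — that $a_k$ is genuinely a predecessor via $L_V \geq 0$ and $L_I > 0$, and the vacuous case when $a_k$ does not exist — are details the paper leaves implicit, but they do not change the argument.
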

\begin{proof}
If $a_k$ is range-proper to $a_i$, then $a_i$ has range-proper predecessor. However, if $a_i$ has range-proper predecessor, assumed as $a_j$, namely, $a_i-a_j\in [L_V, U_V]$ and $i-j\in [L_I, U_I]$. Since $a_k$ is the leftmost non-black item in $\iLa^t$ that satisfy $L_V\leq$ $a_i-a_k$ and $i-k\leq U_I$, $a_k$ is either $a_j$ or an item at the left of $a_j$, namely $a_k \geq a_j$. With Lemma \ref{lem:consecutive}(\ref{item:decreasing}), we know that $j \geq k$. Then, $i-k \geq$ $i-j\geq$ $L_I$ and $a_i-a_k$ $\leq a_i-a_j$ $\leq U_V$. Thus, $L_V\leq$ $a_i-a_k$ $\leq U_V$ and $L_I\leq$ $i-k$ $\leq U_I$, which means $a_k$ is range-proper to $a_i$.
\end{proof}

With Theorem \ref{theorem:range:leftmost}, we can see that for an item $a_i$ in $\iLa^{t+1}$, if we find out the leftmost item $a_k$ in $\iLa^{t}$ that satisfy $L_V\leq$ $a_i-a_k$ and $i-k\leq U_I$, we can easily determine whether color $a_i$ as black or not. For brevity, for item $a_i \in \iLa^{t+1}$, the leftmost non-black item $a_k$ in $\iLa^t$ where $L_V\leq$ $a_i-a_k$ and $i-k\leq U_I$ as \emph{leftmost partially-proper} item of $a_i$.

\begin{theorem}\label{theorem:range:dynamic}
Given a sequence $\alpha$ and $\iLa$. Consider $a_i$,$a_j$ $\in \iLa^{t+1}$. Assume that $a_{i'}$, $a_{j'}$ $\in \iLa^{t}$ are the leftmost partial-proper items of $a_i$ and $a_j$, respectively. Then if $a_i$ is at the left side of $a_j$, $a_{i'}$ is either $a_{j'}$ or at the left of $a_{j'}$.
\end{theorem}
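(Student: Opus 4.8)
The plan is to reduce the statement to an elementary set-containment argument built on the monotonicity of horizontal lists from Lemma~\ref{lem:consecutive}(\ref{item:decreasing}). For a fixed item $a_i \in \iLa^{t+1}$, collect into a set $R_i$ all non-black items $a_k \in \iLa^{t}$ that satisfy both conditions defining partial-properness, namely $a_k \le a_i - L_V$ and $k \ge i - U_I$; by definition $a_{i'}$ is precisely the leftmost element of $R_i$, and $a_{j'}$ is the leftmost element of the analogous set $R_j$. First I would record the two consequences of the hypothesis that $a_i$ lies to the left of $a_j$ in $\iLa^{t+1}$: since within a horizontal list values strictly decrease while positions strictly increase from the left to the right (Lemma~\ref{lem:consecutive}(\ref{item:decreasing})), we have $a_i > a_j$ and $i < j$.

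The key step is to show $R_j \subseteq R_i$. Take any $a_k \in R_j$: it is non-black and satisfies $a_k \le a_j - L_V$ and $k \ge j - U_I$. From $a_j < a_i$ we get $a_j - L_V \le a_i - L_V$, hence $a_k \le a_i - L_V$; from $i < j$ we get $j - U_I \ge i - U_I$, hence $k \ge i - U_I$. Thus every defining condition of $R_i$ holds for $a_k$, so $a_k \in R_i$. The containment is therefore immediate, and the coloration constraint is carried along for free because ``non-black'' is a property of the item alone and independent of $a_i$ or $a_j$.

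It then remains to convert the containment into the desired positional comparison. Since the theorem presupposes that $a_{j'}$ exists, $R_j$ is nonempty, whence $R_i \supseteq R_j$ is nonempty and $a_{i'}$ is well defined. Because $a_{j'} \in R_j \subseteq R_i$, the item $a_{j'}$ is itself one of the candidates whose leftmost member is $a_{i'}$; therefore $a_{i'}$ is either $a_{j'}$ or an item strictly to its left, which is exactly the claim.

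I do not anticipate a genuinely hard obstacle here: the only point that needs care is keeping the two inequality directions straight (both the value upper bound and the positional lower bound loosen in passing from $a_j$ to the leftward item $a_i$), together with the elementary observation that the leftmost element of a superset never lies to the right of the leftmost element of a nonempty subset. The real value of the lemma is algorithmic rather than deep: exactly as Theorem~\ref{theorem:slope:dynamic} does for the slope variant, this monotonicity lets the coloration phase scan $\iLa^{t+1}$ once while advancing a single pointer monotonically along $\iLa^{t}$, giving the linear-time sweep.
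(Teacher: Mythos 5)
Your proof is correct and takes essentially the same approach as the paper's: both hinge on Lemma~\ref{lem:consecutive}(\ref{item:decreasing}) giving $a_i > a_j$ and $i < j$, and on the observation that both one-sided conditions (value upper bound and position lower bound) relax when passing from $a_j$ to $a_i$. The only difference is presentational --- you phrase this as the containment $R_j \subseteq R_i$ and compare leftmost elements, while the paper argues the contrapositive (any non-black item to the left of $a_{i'}$ must violate one of the conditions for $a_i$ and hence also for $a_j$).
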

\begin{proof}
Consider a non-black item $a_{k'}$ at the left of $a_{i'}$ in $\iLa^t$. With Lemma \ref{lem:consecutive}(\ref{item:decreasing}), we know that $k' < i'$, $a_i > a_j$ and $i < j$. Since $a_{i'}$ is the leftmost partial-proper items of $a_i$, namely, the leftmost non-black item satisfying $L_V\leq$ $a_i-a_{i'}$ and $i-{i'}\leq U_I$, we can know that $a_{k'}$ either $L_V > a_i-a_{k'}$ or $i-{k'} > U_I$. If $L_V > a_i-a_{k'}$, then $L_V > a_i-a_{k'}$ $> a_j-a_{k'}$. Otherwise, if $i-{k'} > U_I$, since $i < j$, then $j-{k'} > U_I$. Thus, $a_{k'}$ can not be the leftmost partial-proper item of $a_j$.
\end{proof}

With Theorem \ref{theorem:range:dynamic}, we can see that after determining the leftmost partial-proper item $a_k \in \iLa^t$ of $a_i \in$ $\iLa^{t+1}$, the leftmost partial-proper item of $rn_{\alpha}(a_i)$ can be searched from $a_k$ to the left in $\iLa^t$. Each time when we figure out the leftmost partial-proper item $a_j$ of an item $a_i$, we further check whether $a_j$ is rnage-proper to $a_i$ since $a_i$ has range-proper predecessor if and only if $a_j$ is range-proper to $a_i$. It is quite similar to the process in the coloration for RLIS(See Line \ref{code:rcolor:begin}-\ref{code:rcolor:end} in Algorithm \ref{alg:range}).

\noindent\underline{\Paragraph{Outputting an SLIS}}

This phrase is just the same as that of outputting SLIS(See Line \ref{code:rlis:begin}-\ref{code:rlis:end}).

Pseudo codes for RLIS is presented in Algorithm \ref{alg:range}.


\subsection{Comparison} \label{sec:appendix:other:constraints:comparison}
We compare our solution to these problems with previous work. We compare our method with those previous work on theoretical complexity in Section \ref{sec:appendix:theory}. Besides, we experimentally evaluate our solution to these problem with regarding to these previous work in Section \ref{sec:appendix:experiment}.

\subsubsection{Theoretical Comparison} \label{sec:appendix:theory}
Table \ref{appendix:tab:index:comp} present the theoretical comparison between our solution and previous state-of-the-art. We can see that our method is the only one that is able to support efficient update. Besides, our method is not worse than any previous work on both space complexity and query time complexity.

\begin{table}[!h]
\centering
\small
    \caption{Theoretical Comparison on Data Structure} 
    \label{appendix:tab:index:comp}

    \begin{small}
    \resizebox{1\textwidth}{!}
    {
    \begin{tabular}{|l|c|c|c|c|c|c|c|}
     \hline \multirow{2}{*}{\textbf{Methods} } &\multirow{2}{*}{\textbf{Space} }&\multirow{2}{*}{\textbf{Update} }&\multirow{2}{*}{\textbf{Construction} }&\multicolumn{4}{c|}{\textbf{Query}} \\ \cline{5-8}
     & & & & \bfseries{Max-Width} & \bfseries{Min-Width} & \bfseries{RLIS} & \bfseries{SLIS}  \\ \hline
      Our Method & $O(w)$ & $O(w)$ & $O(w\log w)$ & $O(w)$ & $O(w)$ & $O(w)$ & $O(w)$ \\ \hline
      VARIANT\cite{variant2009} & $O(w)$ & -- & $O(w\log w)$ & $O(w)$ & $O(w)$ & -- & --  \\ \hline
      RLIS\cite{fastyang2008} & $O(w)$ & -- & $O(w\log w)$ & -- & -- & $O(w)$ & --   \\ \hline
      SLIS\cite{fastyang2008} & $O(w)$ & -- & $O(w\log w)$ & -- & -- & -- & $O(w)$  \\ \hline
    \end{tabular}
    }
    \end{small}
    \vspace{-0.1in}
\end{table}

\subsubsection{Experimental Comparison} \label{sec:appendix:experiment}
The set up and data sets are exactly the same as those in Section \ref{sec:experimenteva}.Note that we set three different ranges($R1=$ \{$L_I=1$,$U_I=20$, $L_v=0$, $U_v=50$\}, $R2=$ \{$L_I=20$,$U_I=40$, $L_v=50$, $U_v=100$\}, $R3=$ \{$L_I=40$,$U_I=60$, $L_v=100$, $U_v=150$\}) and three different slopes($S1=0$, $S2=0.5$, $S3=1.0$). We implement all comparative methods in C++ according to the corresponding paper with the best of our effort. All codes are available in Github \cite{lisgit}. We can see from these experimental results(Figure \ref{fig:exp:stocka}-\ref{fig:exp:synthetica}) that our method outperform all these previous works.

\begin{figure}[h!]
\centering
\resizebox{0.9\linewidth}{!}
{
    \includegraphics{\expfolder 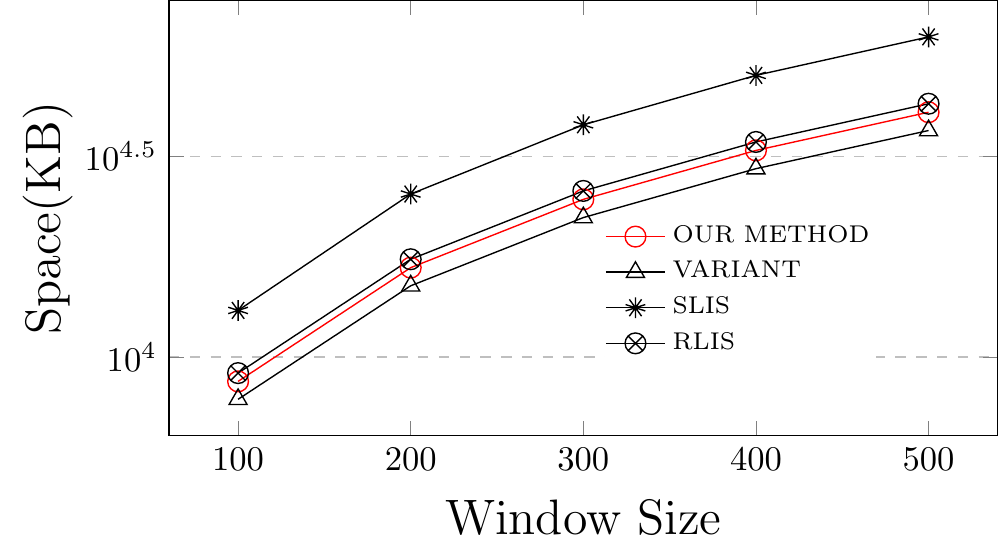}
}
\caption{Space comparison}
\label{fig:genea:construct}
\end{figure}

\input{\expfolder stockA}

\input{\expfolder geneA}

\input{\expfolder powerA}

\input{\expfolder syntheticA}

\end{document}